\newif\iflong\longtrue
\newif\ifpodc\podcfalse

\iflong
\documentclass[11pt]{article}
\usepackage[top=1in,left=1in,right=1in,bottom=1in]{geometry}
\usepackage[utf8]{inputenc}
\usepackage{times,microtype}
\fi
\ifpodc
\documentclass[manuscript]{acmart}
\fi
\usepackage{amsthm}
\usepackage{amsmath,mathtools}
\usepackage{amsfonts}
\usepackage{hyperref}
\usepackage{enumitem}
\usepackage[noend]{algpseudocode}
\usepackage{algorithmicx}
\usepackage[boxruled,linesnumbered]{algorithm2e}
\usepackage[most]{tcolorbox}
\usepackage{color}
\usepackage{thm-restate}
\usepackage{soul}
\usepackage{hyperref}
\usepackage[capitalise,noabbrev,nameinlink]{cleveref}
\theoremstyle{plain}
\usepackage{mdframed}
\usepackage{hyphenat}
\usepackage{ragged2e}
\usepackage{array}

\newtheorem{theorem}{Theorem}[section]

\newtheorem{lemma}[theorem]{Lemma}
\newtheorem{corollary}[theorem]{Corollary}
\newtheorem{definition}[theorem]{Definition}

\newcommand{\epoch}[1]{\textsc{epoch}_{#1}}
\newcommand{\async}{T}

\newenvironment{informalthm}
  {\medskip\noindent{\textbf{Theorem (informal).}}}
  {\medskip}

\usepackage{array}
\usepackage{booktabs}

\iflong\else
\usepackage[font=small,aboveskip=0pt,belowskip=0pt]{caption}
\setlength{\skip\footins}{5pt}
\setlength{\textfloatsep}{2pt plus 1.0pt minus 1.0pt}
\setlength{\intextsep}{2pt plus 1.0pt minus 1.0pt}
\setlength{\floatsep}{2pt plus 1.0pt minus 1.0pt}
\setlength{\dbltextfloatsep}{2pt plus 1.0pt minus 1.0pt}
\setlength{\dblfloatsep}{2pt plus 1.0pt minus 1.0pt}
\usepackage[compact]{titlesec}
\titlespacing*{\section}{0pt}{3pt}{3pt}
\titlespacing*{\subsection}{0pt}{3pt}{2pt}
\titlespacing*{\subsubsection}{0pt}{3pt}{2pt}
\fi

\newcommand{\floor}[1]{\lfloor{#1\rfloor}}

\newcommand{\mvdf}{\mathtt{VDF}_m}
\newcommand{\cvdf}{\mathtt{VDF}_v}
\newcommand{\vout}[1]{\rho_{#1}}

\newcommand{\vdf}[1]{\mathtt{VDF}_{#1}}

\newcommand{\proofout}[1]{\pi_{#1}}
\newcommand{\vrfout}[2]{v_{#1}^{#2}}
\newcommand{\vdfmout}[1]{M_{#1}}
\newcommand{\vdfcout}[1]{V_{#1}}
\newcommand{\vdfpout}[1]{P_{#1}}
\newcommand{\vdftout}[1]{C_{#1}}

\newcommand{\sk}[1]{\mathrm{sk}_{#1}}
\newcommand{\pk}[1]{\mathrm{pk}_{#1}}
\newcommand{\ek}[1]{\mathrm{ek}_{#1}}
\newcommand{\vk}[1]{\mathrm{vk}_{#1}}
\newcommand{\alg}{\mathcal{A}}
\newcommand{\prob}{\mathbb{P}}

\newcommand{\p}[2]{\pi_{#1}^{#2}}
\newcommand{\bloc}[2]{\block_{#1}^{#2}}
\newcommand{\block}{\mathcal{P}}

\newcommand{\diff}[1]{D_{#1}}
\newcommand{\nslow}{\delta_h'}
\newcommand{\nfast}{\delta_h}
\newcommand{\adv}{\delta_{adv}}

\newcommand{\ack}{\mathtt{ACK}}
\newcommand{\prop}{\mathtt{Propose}}
\newcommand{\tick}{\mathtt{Round}}
\newcommand{\pt}{D_0}

\newcommand{\env}{\mathcal{Z}}
\newcommand{\secp}{\kappa}
\newcommand{\secinp}{1^{\kappa}}
\newcommand{\adversary}{\mathcal{A}}

\newcommand{\eps}{\varepsilon}

\newcommand{\cryptosort}{\mathcal{F}_{mine}}

\newcommand{\setup}{\mathtt{Setup}}
\newcommand{\eval}{\mathtt{Eval}}

\newcommand{\veri}{\mathtt{Verify}}

\newcommand{\sX}{\mathcal{X}}
\newcommand{\sY}{\mathcal{Y}}

\newcommand{\qq}[1]{}
\newcommand{\nn}[1]{}
\newcommand{\tadge}[1]{}

\DeclareMathOperator{\poly}{poly}

\definecolor{mygreen}{RGB}{20,140,80}
\definecolor{linkcolor}{RGB}{0,0,230}
\definecolor{mylightgray}{RGB}{230,230,230}
\definecolor{verylightgray}{RGB}{245,245,245}

\hypersetup{
     colorlinks=true,
     citecolor= mygreen,
     linkcolor= mygreen,
     urlcolor= mygreen
}

\newcounter{myalgctr}

\newtcolorbox{OuterBox}[1][]{

    breakable,
    enhanced,
    frame hidden,
    interior hidden,
    left=-5pt,
    right=-5pt,
    top=-5pt,
    float=p,
    boxsep=0pt,
    arc=0pt
#1}

\newtcolorbox{InnerBox}[1][]{

    enforce breakable,
    enhanced,
    colback=gray,
    colframe=white,
#1}

\newenvironment{tbox}{
\vspace{0.2cm}
\begin{tcolorbox}[width=\textwidth,
                  enhanced,
                  boxsep=2pt,
                  left=1pt,
                  right=1pt,
                  top=4pt,
                  boxrule=1pt,
                  arc=0pt,
                  colback=white,
                  colframe=black,
	              breakable
                  ]

}{
\end{tcolorbox}
}

\newcommand{\tboxhrule}[0]{\vspace{0.1cm} {\color{black} \hrule} \vspace{0.2cm}}

\newenvironment{titledtbox}[1]{\begin{tbox}#1 \tboxhrule}{\end{tbox}}

\begin{document}
\title{A Lower Bound for Byzantine Agreement and Consensus for Adaptive Adversaries using VDFs}
\iflong
  \author{Thaddeus Dryja\thanks{Supported by the funders of the MIT Digital Currency Initiative.}\\
  MIT Media Lab\\
  \texttt{tdryja@media.mit.edu}
  \and
  Quanquan C. Liu\thanks{This research was conducted while working at and supported by the MIT Digital Currency Initiative.}\\
  MIT CSAIL\\
  \texttt{quanquan@mit.edu}
  \and
  Neha Narula\footnotemark[1]\\
  MIT Media Lab \\
  \texttt{narula@mit.edu}}
\fi

\iflong
\maketitle
\fi

\begin{abstract}

  Large scale cryptocurrencies require the participation of millions
  of participants and support economic activity of billions of
  dollars, which has led to new lines of work in binary Byzantine
  Agreement (BBA) and consensus. The new work aims to achieve
  communication-efficiency---given such a large $n$, not everyone can
  speak during the protocol. Several protocols have achieved consensus
  with communication-efficiency, even under an adaptive adversary, but
  they require additional strong
  assumptions---proof-of-work, memory-erasure, etc. All
  of these protocols use \emph{multicast}: every honest replica
  multicasts messages to all other replicas.
  Under this model, we provide a new communication-efficient consensus
  protocol using Verifiable Delay Functions (VDFs) that is secure
  against adaptive adversaries and does not require the same strong
  assumptions present in other protocols.

  A natural question is whether we can extend the synchronous
  protocols to the partially synchronous setting---in this work, we
  show that using multicast, we cannot. Furthermore, we cannot achieve
  \emph{always safe} communication-efficient
  protocols (that maintain safety with probability 1)
  even in the synchronous setting against a static adversary
  when honest replicas only choose to multicast its messages.
  Considering these impossibility results, we describe a new
  communication-efficient BBA protocol in a modified partially
  synchronous network model which is secure against adaptive
  adversaries with high probability.
  \footnote{Some versions of this paper have been distributed and circulated since October 2019.}
\end{abstract}

\ifpodc
\maketitle
\fi

\section{Introduction}

Consensus is a fundamental problem in distributed
systems. Historically, consensus protocols have been critical in
the context of ensuring the consistency of replicated data~\cite{chandra2007paxos, calder2011windows,
  borthakur2011apache}, but they were typically deployed with only a
few dozen replicas and only tolerated crash failures.
More recently, consensus protocols have been studied in the context of
cryptocurrencies to maintain a distributed public ledger. These
applications introduce new demands: First, cryptocurrency networks
operate with thousands or millions of participants (large $n$),
meaning $n^2$ communication complexity is unacceptable. Second, these
ledgers support billions of dollars of economic activity, so they need
to cope with a much stronger potential attacker.

Recent work addresses this goal of consensus with subquadratric
communication complexity while tolerating adaptive adversaries, but
these works require strong additional assumptions: Nakamoto's elegant longest
chain protocol~\cite{nakamoto} relies on idealized proof-of-work,
which has led to energy-intensive mining.  Algorand~\cite{algorand}
and Ouroborous Praos~\cite{ouroboros}
require honest users to erase their private keys from memory before
sending a message, known as the \emph{memory-erasure} model, which can be
difficult to ensure in practice. \cite{CPS19} uses a primitive called
\emph{batch agreement} which puts semantic requirements on agreement
values, meaning it is impractical to use in the cryptocurrency
context.
In light of these restrictions, we seek to answer the following question:

\begin{quote}
\emph{What communication-efficient consensus protocols secure against adaptive adversaries
can we obtain without strong cryptographic assumptions, and what are the limitations to obtaining these protocols?}
\end{quote}

Addressing this in even a synchronous network is challenging because
most known communication-efficient protocols use \emph{committee
  election}; proposals and voting are done by a leader and small
committee which are elected uniformly at
random. Typically the size is much smaller than the tolerated number
of faults, so an adaptive adversary can simply corrupt the leader and
entire committee, and vote for two values: we call this \emph{key
  reuse}. Memory-erasure is one technique to eliminate key reuse;
another is \emph{vote-specific eligibility} where election is dependent
probabilistically on the proposed value, so the adversary cannot force
a compromised leader and committee to vote for another value (with
high probability). Unfortunately, in these protocols the adversary can
use computational power to bias the elections: we call this \emph{vote
  grinding}. In the case of public ledgers, honest replicas will only
propose transactions sent to them by clients, which means that honest
replicas do not have a disproportionate chance to become part of the
committee. The adversary, on the other hand, can create and try many
different arbitrary transactions, for example spending coins back to
itself, to increase the chances that Byzantine replicas are elected to
committees.

Our solution is to make key reuse expensive by using Verifiable Delay
Functions (VDFs)~\cite{vdf} to make it temporally expensive to
send multiple votes. Leaders and committees are elected
in such a way that there is no opportunity for vote grinding. Note
that VDFs are proofs of \emph{sequential} computation, meaning
participants do not benefit from having parallel computational resources;
this is quite different from proofs-of-work.

Extending these protocols to operate in a partially synchronous
network (where a network is asynchronous
until an unknown, but finite \emph{global stabilization time}
is reached) introduces new challenges.
During a long enough asynchronous
period the adversary can drop messages and force repeated elections
until \emph{eventually a long sequence of leaders and committees are
elected where the adversary has an advantage}. We call this attack
\emph{fast-forwarding}. At this point the adversary can propose and
vote for multiple proposals, violating safety.
This is the reason why, even for our linear multicast
protocol in~\cref{sec:bba}, we require a bound on the number of rounds the asynchronous
period can last.
This, among other reasons, is the
motivation for developing a
new partially synchronous model
and a protocol in the modified partially
synchronous network in~\cref{sec:sublinear-bba}.

\subsection{Summary of Contributions}
In this section, we describe the main results we present in this paper.

\paragraph{Main Result 1: Limitations of Communication-Efficient Protocols using Multicast (\cref{sec:impossible-bba}).}
Thus far, most currently known
and implemented communication-efficient protocols
(e.g.\ \cite{cc19,CPS19,algorand}) have all honest replicas
communicate with other replicas in the network via \emph{multicasts} (or broadcasts). In other
words, every message that an honest replica sends is broadcasted to
all other replicas in the network.
First, we show that in such
protocols where honest replicas only multicast messages (and do not
perform point-to-point communication), it is impossible to achieve a
communication-efficient protocol even under static adversaries in the synchronous model
where safety is always guaranteed. We prove our result for binary Byzantine Agreement (BBA), which means it also holds for consensus.

\begin{informalthm}
It is impossible to formulate a communication-efficient protocol
for binary Byzantine agreement that always guarantees safety (safety
is guaranteed with probability $1$)
while tolerating even a static adversary in the synchronous network model,
when honest replicas multicast messages.
\end{informalthm}

Ideally we could maintain communication-efficiency
with high probability (or liveness in polylogarithmic rounds with high probability)
but always maintain safety with probability $1$. We hope our impossibility result
might motivate researchers to
investigate communication-efficient
protocols which do not require all honest nodes to multicast all messages.
An important open question is whether using point-to-point messages can lead to
communication-efficient protocols where safety is always maintained or whether this impossibility
result also extends to protocols which use point-to-point communication.

Second, we extend an impossibility result given in~\cite{cc19} to show that it is impossible
to formulate a communication-efficient binary Byzantine agreement protocol
that achieves agreement with high probability in the partially synchronous model
(with global stabilization times) as defined in~\cite{dls88},
even with synchronous processors (only message delays are asynchronous).

\begin{informalthm}
It is impossible to formulate a communication-efficient protocol for
binary Byzantine agreement in the partially synchronous model (even
when processors are synchronous) that achieves agreement with high
probability against an adaptive adversary when honest
replicas multicast messages.
\end{informalthm}

Thus, it seems fruitful to look for alternative models of partial
synchrony modeled after the GST model provided in~\cite{dls88}
to achieve communication-efficiency. We do so in our third main result.

\paragraph{Main Result 2: Consensus using VDFs (\cref{sec:general-consensus}).} We introduce a new randomized communication-efficient
consensus protocol based on Verifiable Delay Functions (VDFs) that is safe even against (weakly) adaptive adversaries
in the synchronous model.
This protocol does not require proof-of-work or the memory-erasure model and can withstand the case
when adversaries can arbitrarily choose the inputs of Byzantine nodes as well as the transactions
and proposals of each such node.

\begin{informalthm}
Suppose honest replicas can compute a VDF with difficulty $D$ in $\nfast D$ time
and Byzantine replicas can compute the same VDF in $\adv D$ time.
There exists a communication-efficient
consensus protocol for \emph{any positive constants} $\nfast$ and $\adv$
that reaches consensus in $O\left(\log n\right)$ rounds even in the presence of
adaptive adversaries in the synchronous model with overwhelming
probability (in the security parameter $\kappa$) and high probability in $n$
assuming the total number of replicas
is $\geq 3f +1$ where $f$ is the number of Byzantine replicas.
\end{informalthm}

Intuitively, VDFs guarantee that obtaining the output of the function
given an input requires some number $D$ of sequential steps (for a $D$ chosen
when the function is initialized) even when parallel processors are available. Verifying the
output of such a function only requires $O(\log D)$ steps.
Although VDFs require sequential computation, this amount of
computation is vastly less than the computation necessary to perform proofs-of-work
since the ability for adversaries to parallelize the work has been eliminated (so more hardware--up to reasonable sizes--does not
imply a bigger advantage). We use VDFs instead of the memory-erasure model assumptions
to protect against adaptive adversarial corruptions of important proposers and committees.
The adversary must compute a VDF in order to send more messages.
However, we must solve some number of challenges including when adversaries can potentially
have fast VDF solvers that take some constant fraction of the amount of time required by VDF solvers
held by honest replicas. A description of these challenges and their solutions are presented in \cref{sec:general-consensus}.

\paragraph{Main Result 3: Communication-Efficiency under Adaptive Adversaries in the Partially Synchronous
with Randomly Dropped Messages Model (\cref{sec:sublinear-bba}).}

Due to our impossibility results, it seems necessary to relax the assumptions
of the partially synchronous model slightly in order to obtain meaningful communication-efficient protocols for binary Byzantine agreement.
Thus, we formulate the \emph{partially synchronous with randomly dropped messages} network model where during the asynchronous
period, each message has probability $p$ of being dropped. Thus, the adversary no longer is able to selectively drop
messages during the asynchronous period. We show that in this model, we can have a communication-efficient
protocol using honest multicast that reaches agreement with high
probability.

\begin{informalthm}
There exists a communication-efficient protocol which reaches binary Byzantine agreement in $O\left(\poly \log n\right)$
rounds after GST with high probability in the partially synchronous with randomly dropped messages network model
under (weakly) adaptive adversaries.
\end{informalthm}

\section{Related Work}
\label{sec:related}
\newcolumntype{C}[1]{>{\centering\let\newline\\\arraybackslash\hspace{0pt}}m{#1}}

\subsection{Consensus Protocols and Adaptive Adversaries}
\begin{table}[h]
  \centering
    \begin{tabular}{|C{3cm}||C{4cm}|c|C{3.5cm}|}
    \hline
        Consensus Protocol & Network Model & Multicast Complexity & Assumptions \\
    \hline
    \hline
        Algorand \cite{algorand} &  Synchronous & $O(\poly(\log n))$ & \nohyphens{Memory-erasure}, PKI\\
        \hline
        Herding \cite{CPS19} & Synchronous &  $O(\poly(\log n))$ & \nohyphens{Filtering transactions by age}, PKI \\
        \hline
        Ouroboros \cite{ouroboros} & Semi-Synchronous & $O(\poly(\log(\secp)))$
        & \nohyphens{Memory-erasure}, PKI\\
        \hline
        Nakamoto \cite{nakamoto} &  \nohyphens{Synchronous}
        & $O(\poly(\log(\secp)))$ & \nohyphens{Proof of Work}\\
        \hline
        \cite{cc19v4} & Partially Synchronous (Fixed, but unknown $\Delta$) & $O(\poly(\log(n)))$ & BBA, PKI\\
        \hline
        \textbf{This work} & Synchronous & $O(\poly(\log n))$ & VDFs, PKI \\
        \hline
        \textbf{This work} & Partially Synchronous Randomly Dropped Messages & $O(\poly(\log n))$ & BBA, PKI \\
    \hline
\end{tabular}
    \caption{Comparison to related work on randomized, communication-efficient consensus protocols that tolerate adaptive adversaries.}

    \label{table:results}
\end{table}
Traditional consensus protocols~\cite{DS83,dls88} require all replicas to send messages to all other replicas,
resulting in $O(n^2)$ communication complexity in a network with $n$ replicas. Because they have such large communication
complexity, most of these protocols can be modified to account for adaptive adversaries.
Some~\cite{abraham2019synchronous,DS83,KK09} can even be shown to be secure for
strongly adaptive adversaries that can perform after-the-fact removal. However,
for our intended application to large-scale distributed systems such as decentralized cryptocurrencies, we
would like protocols with lower communication complexities.

Leader election-based consensus protocols~\cite{pbft,hotstuff} reduce communication complexity by
electing a single leader per round who aggregates votes. These protocols do not easily tolerate an adaptive adversary.
HotStuff~\cite{hotstuff}, using a 3-round pipelined protocol, uses
signature aggregation techniques to reduce authenticator complexity
(number of digital signatures or message
authentication codes sent in messages) to
$O(n)$. HotStuff also has the nice property of \emph{responsiveness};
it proceeds at actual network delay instead of worst case network delay.
We use HotStuff's clever 3-round protocol in both our synchronous
consensus protocol (\cref{sec:general-consensus}) and in our partially
synchronous clock synchronization protocols (\cref{sec:bba},\cref{sec:sublinear-bba}).
However, a straightforward application of HotStuff is not sufficient
to achieve subquadratic message complexity while tolerating adaptive
adversaries: an adversary could continually corrupt the leader for at least
$n/3$ rounds, forcing a quadratic number of messages before
finding an honest leader and reaching agreement.
A primary contribution of our work is showing how to prevent these types of attacks.

Other recent works have been able to lower the communication complexity by using additional
techniques.
The breakthrough work of King and Saia~\cite{king2011breaking} presented a binary Byzantine Agreement
protocol in the adaptive adversaries setting with communication complexity $O\left(n^{1.5}\right)$ with
the assumption of authenticated channels. As in Algorand and Micali-Vaikuntanathan
~\cite{algorand, micali2017optimal}, King and Saia~\cite{king2011breaking}
also assume that replicas can securely erase secrets from memory.
Other works like the sleepy model of consensus~\cite{PS17} and Ouroboros~\cite{ouroboros} also use the memory-erasure model.
As discussed in Canetti et al.\ \cite{CEGL08}, erasures are hard to perform in real
software.

The famous Nakamoto consensus
protocol~\cite{GKL15,nakamoto,PSS17,Ren19} achieves $O(n \poly
\log(\secp))$ communication complexity assuming perfect proof-of-work
in the synchronous model even under adaptive adversaries. This work
proposed what is known as the longest-chain strategy, which results in
eventual consensus.
More recent protocols~\cite{DPS19,ouroboros,KR18,KRDO17,PS17,Shi19} also follow
Nakamoto's longest-chain strategy but unlike Nakamoto consensus, they remove the
proof-of-work assumptions by using a permissioned setting with a public-key infrastructure.
In these protocols, a replica has some chance of being
elected as leader in each round. When a replica is elected as leader, it signs the
block extending the current longest chain. For such protocols
to exhibit both safety and liveness, some additional constraints have
to be imposed on the validity of the timestamps contained in the blockchain.
However, these works do not guarantee small turnover time for adaptive adversaries
if the memory-erasure model is \emph{not used}
regardless of whether the leader election is randomized~\cite{DPS19,KRDO17,PS17}
or deterministic~\cite{KR18,Shi19}. In fact, the number of rounds
to consensus could be near-linear
since the adaptive adversary can continuously corrupt
the small number of players who talk.

A key way to achieve communication-efficiency is electing a small
($O(\poly \log{n})$-sized) committee to run a step of the
protocol~\cite{algorand,cc19,CPS19,ouroboros,DPS19,HMW18}. This committee
is much smaller than the typical $n/2 - \epsilon$ or $n/3-\epsilon$
ideal number of corruptions to tolerate, and as such, the adversary
can compromise safety by corrupting the entire committee and voting
for two different values at the same time. Algorand gets around this
using memory-erasure; keys are ephemeral and thus not available to
vote for another value~\cite{algorand,micali2017optimal}.
\cite{cc19} tolerates an adaptive adversary for binary Byzantine
Agreement by leveraging the innovative idea of \emph{vote-specific
  eligibility}: by tying voting eligibility to the proposal, the
adversary cannot simply compromise the leader and elected committee
after they send a message and force them to vote for two values at the
same time. This is because most likely, the proposers and/or
committees for the two values will be different (or have very small
overlap); thus compromising one committee for one proposal does not
ensure committee membership for a different proposal. Though this
works for binary Byzantine Agreement, it does not extend to consensus
for general values because it introduces what we call \emph{vote
  grinding}: the adversary can try many different input values to
influence committee selection and create a biased committee, as noted
in~\cite{CPS19}.
In an updated version of their work, they provide a BBA protocol
that works in a partially synchronous network, however,
they use a different model where $\Delta$ is fixed but unknown,
while our lower bound is in the model where $\Delta$
only holds after a Global Stabilization Time (GST)~\cite{cc19v4}.

Chan, Pass, and Shi~\cite{CPS19} nicely build on ideas from both of
these works and achieve communication-efficient consensus with an
adaptive adversary using vote-specific eligibility and the novel idea of
\emph{batch agreement}: transactions, batched
together in a block proposal, are scored according to when the replica first saw
the transaction; older transactions score higher than new. The
adversary cannot try many different values to influence the committee
because honest participants will only vote for the highest-scoring block.
Unfortunately, it is unclear how this might
work in practice; many blockchains sort transactions by fees instead
of first-seen in order to rate limit and deter
spam~\cite{nakamoto,wood2014ethereum}. Straightforwardly sorting by transaction
fee instead of age in \cite{CPS19} would mean that an attacker could
continuously create many self-spending high-fee transactions and send
them to different honest replicas, making the honest replicas disagree on
the highest scoring block. Unlike what occurs with old transactions (at some point, everyone
agrees on the set of oldest transactions), adversaries
can keep on generating different higher-fee transactions, leading to indefinite disagreements.
Table~\ref{table:results} summarizes the differences between our work
and these other communication-efficient consensus protocols that
tolerate adaptive adversaries.

Other works~\cite{CCGZ19,GKRZ11,HZ10} have looked at adversaries whose corrupting powers
are delayed by a round but for Byzantine Broadcast, which is a different problem than what is considered
in this paper. They have focused on a simulation-based notion of adaptive security for Byzantine Broadcast,
where the concern is that the adversary should not be able to observe what the sender
wants to broadcast, and then adaptively corrupt the sender to flip the bit.
They use what is called the \emph{atomic message} model where after adaptively corrupting
a replica $i$ the adversary cannot erase the message $i$ already sent this round and also
must wait for at least one maximum network delay before the corrupt $i$ can start sending
corrupt messages.

\subsection{Lower Bounds for Binary Byzantine Agreement Protocols}
\begin{table}[h]
  \centering
    \begin{tabular}{|l||c|C{2.5cm}|C{2.5cm}|c|C{3cm}|}
    \hline
        Work & Type & Network Model & Adversary & Lower Bound & Even Assuming \\
    \hline
    \hline
    \cite{cc19} & any & any & Strongly Adaptive & $\Omega(f^2)$ & PKI \\
    \hline
    \cite{DR85} & \nohyphens{Deterministic} & any & Static or stronger & $\Omega(f^2)$ &  Authenticated Channels \\
    \hline
    \textbf{This work} & any & any (safety guaranteed with probability $1$)
    & Static or stronger & $\Omega(f^2)$ & PKI \\
    \hline
    \textbf{This work} & any & Partially Synchronous (GST) & Adaptive & $\Omega(f^2)$ & PKI \\
    \hline
\end{tabular}
\caption{Comparison to related work on BBA lower bounds.}
    \label{table:lb-results}
\end{table}
Previously, Abraham et al.\ \cite{cc19} have shown that (possibly
randomized) protocols that achieve subquadratic message complexity
cannot tolerate a strongly-adaptive adversary. The proof of their
lower bound is inspired by Dolev and Reischuk~\cite{DR85} who showed
that any \emph{deterministic} consensus protocol must incur
$\Omega(f^2)$ communication complexity when assuming authenticated channels.
Abraham et al.\ \cite{cc19} also show that without a PKI,
no protocol with $C(\secp, n)$ multicast complexity can achieve
consensus under $C(\secp, n)$ adaptive corruptions even in the
synchronous model, when assuming the existence of a random oracle or a common
reference string, and even in the memory-erasure
model.  Table~\ref{table:lb-results} compares these lower bounds to ours.
Some other works have achieved expected quadratic communication complexity under various
settings that are similar to adaptive adversarial settings~\cite{AMNRY19,AMS19}
in modified synchronous and asynchronous models.

\paragraph{Other Lower Bound Results}
Previously,~\cite{CMS89,KY84} showed that any randomized $r$-round protocol must fail with
probabilite at least $\left(c \cdot r\right)^{-r}$ for some constant $c$;
in particular, randomized agreement with sub-constant failure probability
cannot be achieved in strictly constant rounds. Attiya and Censor-Hillel~\cite{AC08}
extended the results of~\cite{CMS89,KY84} on guaranteed termination of randomized BA
protocols to the asynchronous setting, and provided a tight lower bound.
Much more recently, following a series of works looking at lower bounds on the expected number
of rounds necessary to achieve Byzantine agreement of randomized protocols,  Cohen et al.\
\cite{CHMOS19} show that BA protocols resilient against $n/3$ adaptive corruptions terminate
at the end of the first round with $o(1)$ probability among other results.

\subsection{Consensus with Verifiable Delay Functions}

Verifiable Delay Functions (VDFs) were first introduced in \cite{vdf}, with a
related precursor in \cite{lenstra2015random}.  Newer blockchain protocols use
VDFs in consensus protocols (with various other assumptions) as an
unbiasable source of randomness or as a source of timing to progress
rounds~\cite{azouvi2018betting,ethresearchbeacon,cohen2019chia}.
To the best of our knowledge, we are the first to use Verifiable Delay
Functions not as a source of randomness (leader and committee election
are independent of VDF output) but to bound the number of
messages an adversary can send, specifically with the purpose
of deterring adaptive corruptions.

\section{Model}

There are $n$ participants in the network and the public keys of all participants are
common knowledge. We only consider systems consisting of
$n\geq 3f + 1$ replicas where $f$ is the maximum number of
Byzantine replicas present in the system for the duration of the
protocol.

\paragraph{Network}
In this paper, we only consider protocols (in both our impossibility results and our
protocol formulations) where the honest replicas multicast their messages.
Consistent with the termininology given in~\cite{cc19} and~\cite{CPS19},
we use the term \emph{multicast} to indicate when a replica sends a message to all replicas in the network. Henceforth,
we talk about the communication complexity\footnote{Consistent with the terminology used in~\cite{cc19}, we
refer to \emph{communication complexity} as the total number of \emph{messages} sent in the network
by honest replicas. Unlike other commonly used notions of communication complexity, we are \emph{not}
referring to the total number of bits sent in the network.} in
terms of the \emph{multicast complexity} (i.e.\ the number of multicasts)\footnote{Note here
that we explicitly count only the number of multicasts as opposed to the total number of bits sent in all messages.
This is due to the fact that all messages sent by networks using a PKI require signatures of size $\poly(\secp)$
under standard cryptographic assumptions. Furthermore, it is difficult to standardize such a measure as
the number of bits of a message also depends on the size of the proposal/transaction/function/etc.}
as opposed to the point-to-point communication complexity as conventionally stated in the literature.
Honest replicas multicast \emph{all} messages, but Byzantine nodes may send point-to-point messages to anyone in the network.
This means our goal is to achieve sublinear multicast complexity, or subquadratic communication complexity.
Replicas communicate with each other in a network via authenticated
channels.  In \cref{sec:general-consensus},
we are operating in the synchronous network model;
the protocol proceeds in rounds and channels may exhibit communication delay which we
model as $\Delta$. Messages reach their intended recipient after up to $\Delta$ delay.
In \cref{sec:impossible-bba} and \cref{sec:bba}, we consider a partially synchronous network where
communication delay is unbounded until some Global Stabilization Time (GST) after which delay is bounded by $\Delta$.
\footnote{There are also several other partially synchronous models of consensus, which we do not consider in this
paper.}

\paragraph{Protocol Execution}
We assume as in~\cite{cc19,CPS19} that honest replicas interact with some environment $\env(\secinp)$ (where $\secp$ is the security parameter)
that sends them inputs at the beginning of every round, and honest replicas may send outputs to the environment
$\env$ at the end of every round. We assume that honest replicas attempt to reach consensus on one of the inputs
they received from $\env$ at the beginning of the protocol. Honest replicas follow the protocol when determining
their outputs/messages.

We assume that Byzantine replicas are controlled by some adversary $\adversary(\secinp)$ which reads each of their
inputs, received messages, and has accesss to their internal states. Then, $\adversary(\secinp)$ decides the
Byzantine replicas' outputs/messages. Crucially, the outputs/messages sent by Byzantine replicas could have no relation
to the inputs received by these replicas. Such replicas can output/send any number of arbitrary messages
independent of what they
receive from $\env$.

\paragraph{Adversary}
Throughout this paper,
we only consider adaptive adversaries, although one of our impossibility results
holds even for static adversaries.
While static adversaries can only corrupt up to $f$ replicas before the start of the protocol,
\emph{adaptive adversaries} are defined as adversaries $\adversary$ which can corrupt up to $f$ replicas
adaptively, at any point during the execution. When an adaptive adversary
corrupts a replica that was previously honest, it gains access to the replica's internal state
(including its private key), and, henceforth, $\adversary$ controls the corrupted replica.
A corrupted replica remains Byzantine for the remainder of the execution of the protocol.
$\adversary$ does
not have access to the internal states of the honest replicas. We assume that $\adversary$ also
has polynomially bounded parallel processing power and
cannot guess the secret keys of honest replicas with high probability.
\footnote{With high probability (whp) is defined in our paper to be
    probability at least $1 - \left(\frac{1}{n^c} + negl(\secp)\right)$
for all constants $c$.}
$\adversary$ can coordinate the Byzantine replicas,
and can read all messages sent through the network, but cannot erase or alter messages sent by
honest replicas.\footnote{In some previous literature (e.g.\ \cite{cc19}), this type of adaptive
adversary is referred to as a \emph{weakly adaptive adversary}.}

As in~\cite{cc19}, we define replicas which are honest at the current time to be \emph{so-far
honest}, and replicas which remain honest till the end of the protocol to be \emph{forever honest}.
We also assume that in the synchronous model, $\adversary$ can reorder the messages
received by any replica and can delay any message an arbitrary amount of time $\leq \Delta$.
In the partially synchronous model, we assume that $\adversary$ can selective choose
arbitrarily large delays for messages during the asynchronous phase and can drop or reorder any
number of messages during that phase. After GST, we assume $\adversary$ follows
the behaviours of a synchronous adversary.

\paragraph{Agreement Conditions}
In the adaptive adversary model, \emph{all forever-honest
replicas} must agree on exactly one input given to a forever-honest replica
by $\env$ at the beginning of the protocol, with high probability
with respect
to the number of nodes in the protocol $n$ and the security parameter $\secp$.
\footnote{We generally assume that $n$ is at least polynomial in $\secp$: $n = \Omega(poly(\secp))$.}
More specifically, a correct protocol in our paper
maintains the following two safety and liveness guarantees:

\begin{enumerate}
    \item Safety: No two honest replicas commit to two different values with high
        probability with respect to $n$ and $\secp$.
    \item Liveness: The protocol terminates in $O(\poly\log(n))$ rounds w.h.p.\ with respect
        to $n$ and $\secp$.
\end{enumerate}

Additional background on the network and adversarial  models, as well as a more detailed
explanation of the challenges facing protocol designers can be found in~\cref{app:additional}.

\section{Preliminaries}\label{sec:prelims}
The protocols and impossibility results discussed in this paper rely on two
main cryptographic primitives: verifiable random functions (VRFs)
and verifiable delay functions (VDFs). We assume standard cryptographic assumptions.
We first define the cryptographic primitives we need in this paper and then define the
various other notation we use throughout the paper.

\subsection{Cryptographic Primitives}

For all of our protocols, we assume that a trusted setup phase is first used to generate
a public-key infrastructure (PKI) where each replica $i \in [n]$ obtains a cryptographic sortition
public key/private key pair: $(\pk{i}, \sk{i})$ (such a key pair could be a verifiable random function (VRF)~\cite{vrf}
public key/private key pair).

For clarity we provide a simplified, informal definition of cryptographic sortition (which can
be implemented via VRFs) here; to see the full formal definition of VRFs~\cite{vrf}, please
refer to~\cref{app:vrfs}.

\paragraph{Cryptographic Sortition} \emph{Cryptographic sortition} ensures the following three
properties:
\begin{enumerate}
    \item Replica $i$ using its secret key $\sk{i}$ (and some
        public, common input) can determine whether they are part of the
        voting committee and produce some output.
    \item All other replicas can verify (but not produce with all but negligible probability in the security
        parameter $\secp$) replica $i$'s output
        using $\pk{i}$.
    \item Lastly, the output is unique and is indistinguishable from random with high probability.
\end{enumerate}

As in~\cite{cc19}, we use the notation $\cryptosort$ for replicas to use as an oracle for
determining whether they are eligible to vote in a committee.
$\cryptosort$ satisfies the properties of cryptographic sortition as stated above.
More specifically, $\cryptosort$ is parameterized by replica $i$'s secret key $\sk{i}$,
takes some input $x$, $\cryptosort(\sk{i}, x)$, and returns some output that is
generated uniformly at random via some coin flip with appropriate probability. Furthermore,
$\cryptosort(\sk{i}, x)$ can provide some verification to other replicas that
use only $\pk{i}$ and some additional information that is given as output from the function.
We let the output value and proof $\cryptosort(\sk{i}, x) \rightarrow (\vrfout{}{i}, \p{}{i})$ be
$\vrfout{}{i}$ and $\p{}{i}$, respectively. One possible instantiation of $\cryptosort$ is via
verifiable random functions. Please refer to~\cref{app:vrfs} for the full formal definition
of VRFs.

In our paper, we also make use of an additional cryptographic primitive called verifiable delay functions (VDFs)~\cite{vdf}.
A VDF is a function that guarantees with all but negligible probability in $\secp$ that computing the function
takes some $\diff{}$ sequential steps by some measure of difficulty $\diff{}$ of the function.
$\diff{}$ number of sequential steps is required even given polynomial number of parallel processors.
We present the full formal definition of VDFs in~\cref{app:vdf}. In this paper,
we let $\vdf{a}$ be a VDF with difficulty $\diff{a}$. In our exposition, we assume that the
evaluation and verification keys are implied and passed into the function so we do not
expressively pass in these as parameters into the function. $\vdf{a}$ takes as input some $x$
and outputs some output $S$, $\vdf{a}(x) \rightarrow S$, where $S$ includes both the value of the
output as well as the proof.

\iflong
\subsection{Other Notations and Definitions}

We make abundant use of the Chernoff bound in our paper.

\begin{definition}[Chernoff Bound]\label{def:chernoff}
    Let $Y_1, \dots, Y_m$ be $m$ independent random variables
    that take on values in $[0, 1]$ where $\mathbb{E}[Y_i] = p_i$
    and $\sum_{i = 1}^m p_i = P$. For any $\gamma \in (0, 1]$,
    the multiplicative \emph{Chernoff bound} gives
    \begin{align*}
        \prob\left[\sum_{i=1}^m Y_i > (1+ \gamma)P\right] < \exp\left(-\gamma^2 P/3\right)
    \end{align*}

    and
    \begin{align*}
        \prob\left[\sum_{i=1}^m Y_i < (1-\gamma)P\right] < \exp\left(-\gamma^2 P/2\right).
    \end{align*}
\end{definition}

We use the phrase ``with high probability'' many times throughout this paper. When we say
``with high probability'', we mean with high probability with respect to $n$ and with overwhelming
probability with respect to $\secp$; in other words, with probability at least $1 - \left(\frac{1}{n^c} + negl(\secp)\right)$
for all constants $c$. Throughout the paper, we assume $n = \Theta(\poly(\secp))$.
\fi

\section{Impossibility Results for BBA Using Sublinear Multicasts}\label{sec:impossible-bba}

In this section, we present two impossibility results
regarding BBA protocols with adaptive adversaries: First,
we show that it is impossible to \emph{always} achieve BBA
in even the synchronous network using a sublinear number of
multicasts (this implies it is also impossible in the partially synchronous model).
Then, we show that it is impossible to achieve BBA with high probability in a
partially synchronous network (in the GST model) in $o(n)$ multicasts.
Both of these results are under our definition of BBA in a network where honest replicas are only allowed to multicast messages.
We consider the specific \emph{binary Byzantine agreement} problem
that is defined in~\cite{cc19}.\footnote{This was also referred to in
  later works as \emph{multi-value agreement}~\cite{CPS19}.}  We
redefine the problem here for convenience:

\begin{definition}[Binary Byzantine Agreement Problem (BBA)]
Given a network with $n$ replicas, each replica $i$ receives an input bit $b_i \in \left\{0, 1\right\}$. The problem asks whether
all replicas can reach an agreement that satisfies the following properties with high probability:\footnote{High probability is generally defined to be
probability $1 - \frac{1}{n^c}$ for all constants $c$.}
\begin{enumerate}
\item \emph{Termination:} Every forever-honest replica $i$ outputs a bit $b_i$.
\item \emph{Consistency:} If two forever-honest replicas output $b_i$ and $b_j$, respectively, then $b_i = b_j$.
\item \emph{Validity:} If all forever-honest replicas receive the same input bit $b$, then all forever-honest replicas ouput $b$.
\end{enumerate}
\end{definition}

The proofs we present only apply to protocols where
all honest replicas multicast messages, meaning they, by their protocols, do not selectively choose
to send messages to a specific replica but instead multicast all messages to all replicas.
The Byzantine replicas are not constrained in this way and can send any number of point-to-point messages.
Our impossibility results apply to protocols with this assumption.
We define this property as the \emph{honest total multicast} property:

\begin{definition}[Honest Total Multicast Protocols]
Protocols where honest replicas multicast \emph{all} messages to all other replicas. Thus, the multicast complexity for
such protocols equals the number of times honest replicas multicast messages.
\end{definition}

\begin{lemma}\label{lem:sublinear-multicasts}
Any correct honest total multicast protocol in the synchronous model
with $M$ multicast complexity has at most $M$ honest replicas which multicast
before consensus is reached.
\end{lemma}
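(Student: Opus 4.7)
The statement is essentially a counting argument, so the plan is short and direct. The key observation is the definition of multicast complexity given in the model section: it counts the total number of multicasts performed by honest replicas throughout the execution. In an honest total multicast protocol (as defined just above the lemma), every message an honest replica sends is a multicast, so the only way an honest replica can communicate at all is by contributing at least one unit to this count.

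The plan is as follows. First, I would fix an execution of the protocol and let $H$ denote the set of honest replicas that perform at least one multicast before consensus is reached in that execution. For each $i \in H$, let $m_i \geq 1$ be the number of multicasts that replica $i$ performs before consensus. Second, I would invoke the definition of multicast complexity to note that the total number of multicasts performed up to consensus is $\sum_{i \in H} m_i$, and that this quantity is at most $M$ by hypothesis. Third, since $m_i \geq 1$ for every $i \in H$, we have $|H| \leq \sum_{i \in H} m_i \leq M$, which is the claimed bound.

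The only subtlety worth flagging is being explicit about what ``before consensus is reached'' means in a distributed execution: I would interpret it as the prefix of the execution ending at the first round in which some honest replica outputs a decision value, so that all multicasts counted are unambiguously part of the $M$ total. After that clarification, there is no real obstacle — the argument is just that $M$ multicasts cannot be spread across more than $M$ distinct senders when each sender contributes at least one. I do not expect to need any cryptographic assumption, any adversarial reasoning, or any property of the synchronous network beyond the definitional accounting of multicasts.
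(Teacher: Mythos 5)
Your argument is correct and matches the paper's intent: the paper dismisses this lemma with ``immediately follows from the definition of honest total multicast protocols,'' and the counting argument you spell out (each of the $|H|$ honest multicasters contributes at least one of the $M$ multicasts, so $|H| \le M$) is exactly what that phrase is gesturing at. Your clarification of ``before consensus is reached'' as the prefix up to the first honest decision is a reasonable reading and does not change the substance.
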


The proof of the aforementioned lemma immediately follows from the definition of honest total multicast protocols.

\begin{theorem}\label{thm:bba-total-impossible}
A honest total multicast protocol that uses sublinear $o(n)$ multicasts
with high probability and always reaches BBA in the
synchronous model cannot exist, even against a static adversary.
\end{theorem}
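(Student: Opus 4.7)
The plan is to derive a contradiction by exhibiting a static adversary strategy and a positive-probability event under which two honest replicas commit to different values, thereby violating the always-safe guarantee (safety with probability $1$).

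First, I would use the $o(n)$-multicasts hypothesis to extract two reference executions. Let $E_0$ denote the execution with all replicas given input $0$ and no Byzantine corruption, and let $E_1$ be analogous with all inputs equal to $1$. By validity, honest outputs are $0$ in $E_0$ and $1$ in $E_1$. With probability at least $1-2/n^c>0$ over the honest coin flips, both executions have at most $g(n) = o(n)$ multicasters. I can therefore fix specific coin outcomes $r_0, r_1$ witnessing this event, producing multicaster sets $M_0, M_1$ with $|M_0|+|M_1|\le 2g(n) \le f$ for $n$ large enough (using $f = \Omega(n)$).

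Second, I would define the static adversary. Before execution, the adversary commits to a partition $[n] = H_0 \cup H_1 \cup B^\ast$ with $|B^\ast| = f$, $M_0 \cup M_1 \subseteq B^\ast$, and $|H_0|,|H_1|\ge 1$, together with the target coin outcomes $r_0, r_1$. During execution, it corrupts $B^\ast$, assigns input $0$ to $H_0$ and input $1$ to $H_1$, and runs a ``split-brain'' simulation: Byzantine replicas in $M_0$ send point-to-point to each replica in $H_0$ exactly the messages they would have multicast in $E_0$ under coins $r_0$; Byzantine replicas in $M_1$ do the symmetric thing toward $H_1$ using $r_1$; all other Byzantines remain silent. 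This is permitted because Byzantine replicas may use point-to-point channels. Then I would condition on the positive-probability event that the honest coins restricted to each replica in $H_0$ coincide with $r_0$ on that replica, and similarly for $H_1$ and $r_1$. Under this event, I would prove by induction on rounds that every $H_0$ replica's view in the hybrid is identical to its view in $E_0$, and every $H_1$ replica's view is identical to its view in $E_1$. The inductive step uses $M_0, M_1 \subseteq B^\ast$: replicas in $H_0$ are non-multicasters in $E_0$ and replicas in $H_1$ are non-multicasters in $E_1$, so inductively neither side multicasts in the hybrid, and no cross-side information leaks. Applying validity to these indistinguishable views forces $H_0$ to output $0$ and $H_1$ to output $1$, violating consistency on an event of positive probability and therefore contradicting always-safety.

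The main obstacle is the indistinguishability induction: the hybrid must simulate $E_0$ round-by-round from $H_0$'s vantage and $E_1$ from $H_1$'s, even though both honest subgroups run the protocol concurrently. The induction closes precisely because the only way cross-side information could leak is via a multicast from some $H_0$ or $H_1$ replica, and by the inductive hypothesis those replicas have views matching reference executions whose multicasters all lie inside $B^\ast$. A secondary subtlety is that the static adversary must commit to $(r_0, r_1, B^\ast, H_0, H_1)$ without seeing honest coins, but this is fine: the adversary's strategy depends only on the public protocol description and an arbitrary (even trivial) source of adversarial randomness, and the ``coins match $r_0, r_1$'' event has strictly positive probability, which is all that is needed to contradict the probability-$1$ safety requirement.
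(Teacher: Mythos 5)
Your proof is correct and follows essentially the same split-brain indistinguishability argument as the paper: fix two reference executions with small multicaster sets, statically corrupt all multicasters, partition the remaining replicas into two groups simulated against the two references via point-to-point Byzantine messages, and conclude disagreement. The only differences are refinements of rigor (explicitly conditioning on a positive-probability coin-matching event rather than merely ``assuming'' the coins coincide, and spelling out the round-by-round indistinguishability induction), not a different route.
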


\begin{proof}
Supppose, for the sake of contradiction, that we have a correct honest total multicast BBA protocol that achieves sublinear $o(n)$ multicast complexity
with high probability and always reaches agreement on a bit.
Then, suppose that during one iteration of the protocol on a set of $n$ replicas, the protocol reaches agreement
wlog on the bit $1$. Such an iteration must exist since the protocol must reach agreement on $1$ if e.g.\ all inputs to all replicas is $1$.
Let this iteration of the protocol be $A$. Since the protocol guarantees agreement in sublinear multicast complexity with high probability,
we can also assume $A$ uses sublinear number of multicasts (as such an iteration must exist). Thus, there
exists some fraction of replicas which never multicast any messages in $A$. Let this set of replicas be $X_A$. Let the set of replicas that multicast
at least one message be $Y_A$. We know that $|Y_A| = o(n)$ by Lemma~\ref{lem:sublinear-multicasts}. Let $A$ reach agreement in $R_A$ synchronous rounds.

Suppose we have another iteration of the protocol on the same set of $n$ replicas, but where agreement is reached on $0$.
Again, such an iteration must exist since all honest replicas must output $0$ if e.g.\ all inputs to honest replicas are $0$.
Let this iteration of the protocol be $B$. Let the set of replicas which never multicast any messages be $X_B$ and the set
of replicas that multicast at least one message be $Y_B$. As before, we know that $|Y_B| = o(n)$ by Lemma~\ref{lem:sublinear-multicasts}.
Let $B$ reach agreement in $R_B$ synchronous rounds.

Suppose the adversary picks $\floor{n/3} - 1$ Byzantine replicas initially before the start of the protocol uniformly at random.
Let $\mathcal{S}_{A, B}$ be a simulation of the protocol on the set of $n$ replicas where all replicas in $Y_A \cup Y_B$ are initially
corrupted by the adversary. This is possible for large enough $n$ since $Y_A \cup Y_B = o(n)$. Furthermore, let
half of the replicas in $\left(X_A \cup X_B\right) \setminus \left(Y_A \cup Y_B\right)$ have input $1$ and have the same internal state as the same replicas in iteration $A$. Let this half be $H_1$.
Let the other half of the replicas in $\left(X_A \cup X_B\right) \setminus \left(Y_A \cup Y_B\right)$ have input $0$ and
have the same internal state as the same replicas in iteration $B$.
Let this half be $H_0$. Such a simulation is a potential iteration of the
protocol since before any messages are sent the internal states of all replicas are determined solely by their inputs
and their private random coin flips.

The adversary in simulation $\mathcal{S}_{A, B}$ then sends two sets of messages by controlling the replicas in $Y_A \cup Y_B$.
They send the same messages as in iteration $A$ to all replicas in $H_1$ and the same messages as in iteration $B$ to all replicas in $H_0$.
In this simulation, we assume all private coin flips for replicas in $H_1$ correspond with the same replicas in $A$ and all private
coin flips for replicas in $H_0$ correspond with the same replicas in $B$. Then, the replicas in $H_1$ have no way to distinguish
$\mathcal{S}_{A, B}$ from $A$ and will output $1$. Similarly, the replicas in $H_0$ have no way to distinguish $\mathcal{S}_{A, B}$
from $B$ and will output $0$.

Thus, we reach a contradiction as honest replicas $H_1$ agreed on $1$ and honest replicas $H_0$ agreed on $0$. Thus,
there does not exist a honest total multicast protocol that \emph{always} reaches BBA in the synchronous model,
even against a static adversary,
as there exists a potential simulation of the protocol that reaches agreement on two different bits.
\end{proof}

Our next impossibility result shows that there does not exist a
partially synchronous BBA protocol (in the GST model) with an
adaptive adversary that achieves agreement in $o(n)$ multicasts.
We need to be somewhat careful in our definition of multicast complexity in the partially synchronous model so that we obtain a definition that is
meaningful. What makes the partially synchronous model with
adaptive adversaries appealing is that it accurately
simulates the real world: dropped messages can be simulated by an
adversary which doesn't send messages (or selectively sends messages) to
different replicas. We define the \emph{multicast complexity} to be the total number of multicasts necessary after the global stabilization time (GST)
before Byzantine agreement is reached. In contrast to the synchronous model,
the asynchronous period starts at the beginning of the protocol and continues for unknown, but bounded time.
However, for the partially synchronous model, we assume the synchronous period after one GST must be long enough for the protocol to reach consensus.\footnote{In a system model where there can be
multiple synchronous periods separated by asynchronous periods and thus multiple GSTs, the synchronous period after a GST only needs to last long enough for one round of the protocol to complete.}

Our proof uses the lower bound proof given in Theorem 4 of~\cite{cc19}.

\begin{theorem}\label{lem:lower-partially-sync-bba}
There does not exist a partially synchronous
BBA protocol resilient against adaptive adversaries
where all honest replicas reach agreement with high probability
in $o(n)$ multicasts given $f$ Byzantine replicas
and for all $n \geq 3f + 1$ number of replicas in the network.
\end{theorem}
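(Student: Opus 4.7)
The plan is to adapt the proof of Theorem~4 in~\cite{cc19} to the partially synchronous setting, exploiting the unbounded pre-GST asynchronous period as a substitute for the corruption-plus-erasure power used there. The goal is to exhibit an adversary that produces two forever-honest replicas deciding different values, which contradicts agreement with high probability.

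First I would fix two reference executions of the alleged protocol in a fully honest, synchronous setting: $E_0$ with all inputs $0$ and $E_1$ with all inputs $1$. By validity these terminate on $0$ and $1$, and by the assumed $o(n)$-multicast bound (which must hold with high probability over coins), their multicaster sets $Y_0, Y_1$ satisfy $|Y_0|,|Y_1|=o(n)$. Let $C = Y_0 \cup Y_1$, so $|C|=o(n)\leq f$ for large enough $n$ (since $f = \Theta(n)$). Partition the remaining $n-|C|$ replicas into sets $H_0,H_1$, each of size at least $(n-o(n))/2 \geq n/3$.

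Next I would describe the partially synchronous adversary. It corrupts every replica in $C$, which is admissible since $|C|\leq f$. The environment gives input $0$ to $H_0$ and input $1$ to $H_1$. Throughout the asynchronous prefix---which the adversary extends until both partitions have decided---the adversary (i) delivers every intra-partition message on time, so each side locally perceives a synchronous sub-execution, (ii) indefinitely delays every message crossing between $H_0$ and $H_1$, and (iii) uses the corrupted $C$ to equivocate by sending each replica in $H_0$ the point-to-point messages that $Y_0$ multicast in $E_0$, and each replica in $H_1$ the point-to-point messages that $Y_1$ multicast in $E_1$. The adversary internally fixes $C$'s simulated coins according to the chosen $E_0$ and $E_1$ transcripts so that these two equivocated threads are self-consistent.

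The analysis is a round-by-round induction showing that every replica in $H_0$ has the same input, coins, and incoming messages as in $E_0$---here $H_0 \cap Y_0 = \emptyset$ is crucial, since it means the $H_0$-replicas are non-multicasters in $E_0$ and hence also in the attack, so no cross-partition traffic is actually generated---and therefore outputs $0$. Symmetrically every replica in $H_1$ outputs $1$. Once both sides decide, the adversary allows GST to occur and the delayed messages are finally delivered, but BBA decisions are irrevocable, so two forever-honest replicas hold different outputs, contradicting agreement.

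The main obstacle is that $Y_0,Y_1$ depend on the honest replicas' private coins, which the weakly adaptive adversary does not know in advance and cannot simply ``corrupt before the fact.'' I would resolve this exactly as in~\cite{cc19}'s Theorem~4: have the adversary adaptively corrupt each replica the first time it attempts to multicast, using the unbounded pre-GST period to delay that multicast long enough for the freshly corrupted replica to subsequently equivocate on the adversary's behalf. This replaces cc19's corrupt-and-erase primitive by a corrupt-and-delay primitive that is available precisely because partial synchrony imposes no a priori bound on pre-GST message delay, and it guarantees that only Byzantine replicas ever multicast a message that gets delivered during each partition's decision window.
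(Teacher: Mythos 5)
Your core idea---exploiting the unbounded pre-GST asynchrony so that a weakly adaptive adversary can ``corrupt and delay'' wherever cc19's strongly adaptive adversary would ``corrupt and erase''---is exactly the insight the paper's proof rests on, and your observation that the multicaster sets $Y_0, Y_1$ are not known a priori is the right obstacle to flag. But the concrete construction in your first three paragraphs has a gap that the final paragraph does not actually close, and the final paragraph also does not fully replace it.

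You \emph{fix} two reference executions $E_0, E_1$, including their coins, and then claim by a round-by-round induction that ``every replica in $H_0$ has the same input, \emph{coins}, and incoming messages as in $E_0$.'' In the attack, however, the replicas in $H_0$ flip their own fresh coins; these match $E_0$'s fixed coins only with exponentially small probability. That style of argument is exactly right for \cref{thm:bba-total-impossible}, where the claim concerns protocols that maintain safety with probability~$1$ and exhibiting a single bad coin sequence suffices. Here the theorem is a with-high-probability statement, so the adversary must produce an agreement violation with \emph{non-negligible} probability over fresh, adversary-unknown honest coins, and pinning $E_0, E_1$ in advance cannot deliver that. Your fourth paragraph (adaptively corrupt each replica the first time it attempts to multicast) addresses a different issue---that the adversary cannot pre-identify $Y_0, Y_1$---but you would still have to abandon the fixed $E_0, E_1$ entirely and instead let the two virtual worlds evolve \emph{online}, with the real replicas' own coins driving each partition and the adversary supplying, via freshly corrupted replicas whose cross-partition multicasts are delayed, whatever messages each virtual world requires. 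Worked out fully, that is cc19's Theorem~4 construction; the paper avoids re-deriving it and simply invokes it as a black box, using the corrupt-and-delay observation only to argue that the asynchronous period grants the weakly adaptive adversary the same effective power.

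There is also a structural divergence from the paper's route worth noting. The paper first obtains an \emph{expected} $\Omega(n)$-multicast lower bound from the cc19 reduction, and then boosts to a with-high-probability statement: if a whp-$o(n)$ protocol existed, one could fall back to the $O(n)$ protocol of \cref{sec:bba} whenever the multicast count grows too large, yielding an expected-$o(n)$ protocol, a contradiction. Your plan skips boosting by trying to make the direct attack succeed with non-negligible probability. That is legitimate in principle, but it places all the weight on precisely the adaptive simulation that your sketch leaves unspecified.
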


Because the proof of~\cref{lem:lower-partially-sync-bba} is very
similar to the proof of Theorem 4 of~\cite{cc19}, we relegate this
proof to~\cref{app:impossible-bba}.

We show in~\cref{sec:sublinear-bba} a BBA protocol that achieves agreement
with high probability in a new, weaker adverarial model than the
partially synchronous (GST) model.

\section{Consensus with Adaptive Adversaries using Sublinear Multicasts}\label{sec:general-consensus}

We use the concepts expanded upon
in the previous sections to formulate a communication-efficient
consensus protocol
without the use of the memory-erasure model
and which can be adapted to a variety of
transaction ordering schemes (e.g.\ for use
in cryptocurrency applications).
Namely, we make use of several important concepts in formulating our protocol:
verifiable delay functions (VDFs)~\cite{vdf}, random leader/committee elections,
and the three-step commit rule of HotStuff~\cite{hotstuff}.
The consensus protocol we describe in this section operates in the
synchronous model and can tolerate up to $\left(\frac{1}{3} - \eps\right)n$
adaptive Byzantine corruptions.

First, we provide a brief description and
a simplified version of our protocol in \cref{sec:brief-overview}. Then, we describe the full detailed version
of our protocol in \cref{sec:full-detailed-protocol}.
In our protocol, safety and liveness
hold with high probability with respect to $n$ and $\secp$
using $n\log^{O(1)}n$ number of messages
or $\log^{O(1)} n$ multicasts. The
exact multicast complexity, round complexity and
the proof of high probability
by which this holds provided in~\cref{thm:general-final}
is proven later in our
analysis in \cref{sec:consensus-analysis}.
As we showed in our lower bound result presented in \cref{sec:impossible-bba}, we
cannot guarantee that safety \emph{always} holds given a protocol that uses
sublinear multicasts even in the synchronous model and even given a static adversary.
Thus, our protocol ensures the best possible guarantees under the constraints we are operating under:
both safety and liveness with high probability with respect to $n$ and $\secp$.

\begin{theorem}\label{thm:general-final}
    Assuming a valid VDF construction that satisfies~\cref{def:vdfs},
    there exists a consensus protocol that terminates in $O(\log n)$ rounds
    and reaches consensus using $O(\poly \log n)$ multicasts with high probability with respect to $n$ and $\secp$,
    even when assuming the adversary can perform VDF computations faster by any constant factor $c > 0$.
\end{theorem}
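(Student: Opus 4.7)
The plan is to combine three ingredients: (i) cryptographic sortition via $\cryptosort$ to elect, in each of $O(\log n)$ rounds, a leader and a committee of size $k=\Theta(\poly\log n)$ from a seed that no adversary can grind over; (ii) a HotStuff-style three-phase commit (prepare, pre-commit, commit) executed by that committee; and (iii) a per-vote VDF whose input binds each vote to the specific (round, proposal, voter) tuple, so that producing a valid vote for a second, conflicting proposal requires an additional \emph{sequential} VDF evaluation that the adversary cannot complete within the round, even with speedup by a factor $c$.

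First, I would choose a round duration $T$ and VDF difficulty $D$ such that $\nfast D+\Delta\le T$, guaranteeing that an honest committee member can compute and multicast its VDF-signed vote within one round, while simultaneously $\adv D>T+\Delta$, guaranteeing that even if the adversary is faster by factor $c$ it cannot complete a second VDF on a conflicting proposal after observing an honest vote within the same round. This is feasible for any fixed constant $c>0$ by taking $T$ sufficiently large relative to $\Delta$ (round duration is only a wall-clock parameter and does not affect multicast complexity). This yields the key \textbf{no-equivocation invariant}: within a single round no adaptively corrupted committee member can publish VDF-signed votes for two distinct proposals.

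Second, I would apply the Chernoff bound of~\cref{def:chernoff} to the sortition outputs: for any $\eps'<\eps$, with $k=\Theta(\poly\log n/\eps'^{2})$, the fraction of Byzantine members in each elected committee lies within $[f/n-\eps',f/n+\eps']$ except with probability $n^{-\omega(1)}$, so every $2k/3$-quorum contains a strict majority of honest voters. Combined with HotStuff's three-phase lock-and-commit rule and the no-equivocation invariant, this gives safety: any value committed by a so-far-honest replica is locked by a quorum, and any quorum certificate in a later round must overlap this quorum in at least one so-far-honest voter, who by no-equivocation refuses to sign a conflicting proposal. For liveness, the elected leader is honest with probability at least $2/3-\eps$ per round, so within $O(\log n)$ rounds we encounter three consecutive honest leaders with probability $1-n^{-c}$ for any desired constant $c$, which triggers commit under the HotStuff rule.

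The main obstacle will be simultaneously handling the VDF speedup constant $c$ and ensuring that the sortition seed for round $r$ cannot be predicted early enough for the adversary to pre-compute VDFs for many candidate future committees. I would resolve this by tying the sortition seed of round $r$ to the quorum certificate finalized in round $r-1$, so the adversary learns the round-$r$ committee and its VDF challenges only after round $r-1$ terminates, and therefore inherits only the $(T,\Delta)$ budget analyzed above. A union bound on the per-round failure probability over the $O(\log n)$ rounds, combined with the negligible VDF and VRF soundness/uniqueness terms, yields the claimed high-probability guarantee in $n$ and overwhelming probability in $\secp$, matching the bound stated in~\cref{thm:general-final}.
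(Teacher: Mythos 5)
Your high-level architecture---VRF sortition, polylogarithmic committees, a HotStuff-style three-phase commit, and per-vote VDFs to block equivocation after adaptive corruption---matches the paper's, and the Chernoff-bound argument for honest-dominated committees is essentially Lemma~\ref{lem:majority-honest}. However, your central timing constraint is self-contradictory and would not establish the no-equivocation invariant, which is the crux of the theorem.

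You require $\nfast D + \Delta \le T$ \emph{and} $\adv D > T + \Delta$ simultaneously. Substituting gives $\adv D > \nfast D + 2\Delta$, hence $\adv > \nfast$, i.e.\ the adversary's per-unit-difficulty VDF time exceeds the honest one. But the interesting case---and the one the theorem explicitly promises to handle---is $\adv < \nfast$ (the adversary is \emph{faster}). For a faster adversary your two inequalities are jointly unsatisfiable for any choice of $T$ and $D$: enlarging $T$ makes things worse, not better, since you then need $D > (T+\Delta)/\adv > (T-\Delta)/\nfast \ge D$, a contradiction. The correct observation, and the one the paper exploits in \cref{lem:parameter-setting}, is that the adversary's VDF clock does not start in parallel with the honest one: the adversary must first wait to \emph{see} a so-far-honest replica's vote (and the VRF output and proposal embedded in it) before it can corrupt the replica and begin grinding. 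The honest time and the adversary time therefore \emph{add}, and the binding constraint is of the form $(\nfast + \adv) D > X$ (together with the analogous constraints for each of the four phases and a matching upper bound on $X$ so honest replicas still finish). That system has a nonempty feasible region for every pair of positive constants $\nfast > \adv$, which is precisely how the paper tolerates an arbitrary constant speedup. Your proposal, as written, fixes a constraint that rules out the very regime the theorem claims to handle.

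A secondary divergence: you propose tying the round-$r$ sortition seed to the round-$(r{-}1)$ quorum certificate to prevent the adversary from precomputing future VDFs. The paper does not do this and does not need to---the sortition input is simply the epoch number $\ell$, and the VDF input is bound to the \emph{voter's own} VRF output and to the concrete proposal, neither of which the adversary can predict for an as-yet-honest replica without its secret key. Precomputation is therefore already blocked by VRF pseudorandomness, and adding a QC dependency creates new liveness questions (what seeds round $r$ when round $r{-}1$ produced no QC?) without buying you anything. Finally, the paper's liveness argument only needs a single good epoch (one honest leader and honest-majority committees, via \cref{lem:so-far-honest-leader} and \cref{cor:majority-honest}) rather than three consecutive honest leaders; the latter is a pipelined-HotStuff artifact and, if you keep it, requires a more careful probability calculation than a per-round union bound to reach $1 - n^{-c}$ for all constants $c$.
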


\subsection{Protocol Overview}\label{sec:brief-overview}
In our protocol, we divide the communication rounds into \emph{epochs} where each
epoch goes through a leader election as well as several \emph{rounds} of communication to
confirm a leader's proposal. A leader is elected after each honest replica $i$ queries
$\cryptosort(\sk{i}, \ell)$ with its secret key and epoch number $\ell$ as input.
Recall from \cref{sec:prelims}
that each replica $i$ has oracle access to an oracle  $\cryptosort$ which will produce
some output and potentially a proof.
The leader $L$,\footnote{With high probability in $O(poly(\log(n)))$ rounds
, there will be one round where there is only one leader.}
then computes a VDF output of the value $L$ wants to propose. After computing
this VDF output, $L$ sends the VDF output, the proposal, the output of $\cryptosort(\sk{i}, \ell)$ and
proofs to all other replicas via a multicast.

After a proposal (with an attached VDF output and proof) is made by $L$,
some number of replicas are elected into committees to vote on the proposal.
We use a total of three uniformly at random chosen committees, similar to the
three-step commit rule of HotStuff~\cite{hotstuff}, to determine when a proposed value
is committed. However, unlike HotStuff, our committees are polylogarithmic in size
with respect to the number of participants in our consensus protocol.
As in previous works which use player-replaceability (e.g.\ \cite{algorand}),
each committee is chosen independently, likely with an entirely new set of participants.

To determine membership in a committee, each replica $i$ passes into $\cryptosort(\sk{i}, \cdots)$
as input the epoch number $\ell$
and a label for the committee it is attempting to participate in.
Each committee only votes for proposals proposed
in the current epoch; they will never vote for a proposal that was proposed in the
previous epoch or a future epoch.
After a committee member has been chosen to participate in a committee,
they must compute a VDF on their intended
vote; otherwise, honest replicas will not accept the vote without a corresponding VDF output.
When a replica multicasts its vote, it multicasts its vote
along with its $\cryptosort(\sk{i}, \cdots)$ output, the VDF output,
and all associated proofs.

To instantiate the VDFs we use in our protocol,
we can use a number of recent VDF constructions by~\cite{Wes19,Pietr19,tvdf}
(some of which do not need trusted setup).
They show constructions for VDFs that,
given a difficulty level $D$, can be computed in $D + O(1)$ time and verified in $O(\log(D))$ time
given a small number of processors. But such constructions also
guarantee that even given polynomially many parallel processors\footnote{For an arbitrary polynomial.},
computing the output must take at least
$D - \eps D$ parallel time for small $\eps$. The
formal definitions of such functions are given in the Preliminaries (\cref{sec:prelims}).

Although, theoretically, most VDF constructions
with the same difficulty must be computed
within some additive factor of one another,
our protocol can in fact handle any VDF instantiations (in practice)
where the speed of computation of the VDFs
differ by any constant multiplicative factor. This means that
our protocol is secure (w.h.p.) even when considering
adversaries which may have faster VDF computing potential
up to any positive constant multiplicative factor.

We now formally describe our protocol below.

\subsection{Detailed Protocol}\label{sec:full-detailed-protocol}

Our detailed consensus protocol shown in Fig.~\ref{fig:proposal-consensus}
is run by every honest replica $i$. $i$ maintains the private state $\ell$ which is the
current epoch that $i$ is on. Recall that we defined
an \emph{epoch} to be a period of time consisting
of many communication rounds in which voting for a particular proposal is done.
In our protocol detailed below, each epoch consists of $R$ communication rounds;
while the adversary can determine the order of messages that arrive to replicas
in our protocol, they cannot delay any message by more than $\Delta$ delay.

Note that in contrast to other works which uses a VDF to compute an
unpredictable source of randomness, we simply use the VDF to enforce that
the creation of a proposal or vote take some fixed amount of time.
In our protocol, leaders and committees are privately predictable---
a replica can predict for which values of $\ell$ it will be leader
or on a committee. As in~\cite{cc19,CPS19}, since we are operating
in a permissioned system (with $n$ replicas),
this does not affect the correctness of our protocol.

\cref{fig:visual-protocol} shows a simplified visual representation of our protocol.

\begin{figure}[h!]
    \centering
    \includegraphics[width=0.8\textwidth]{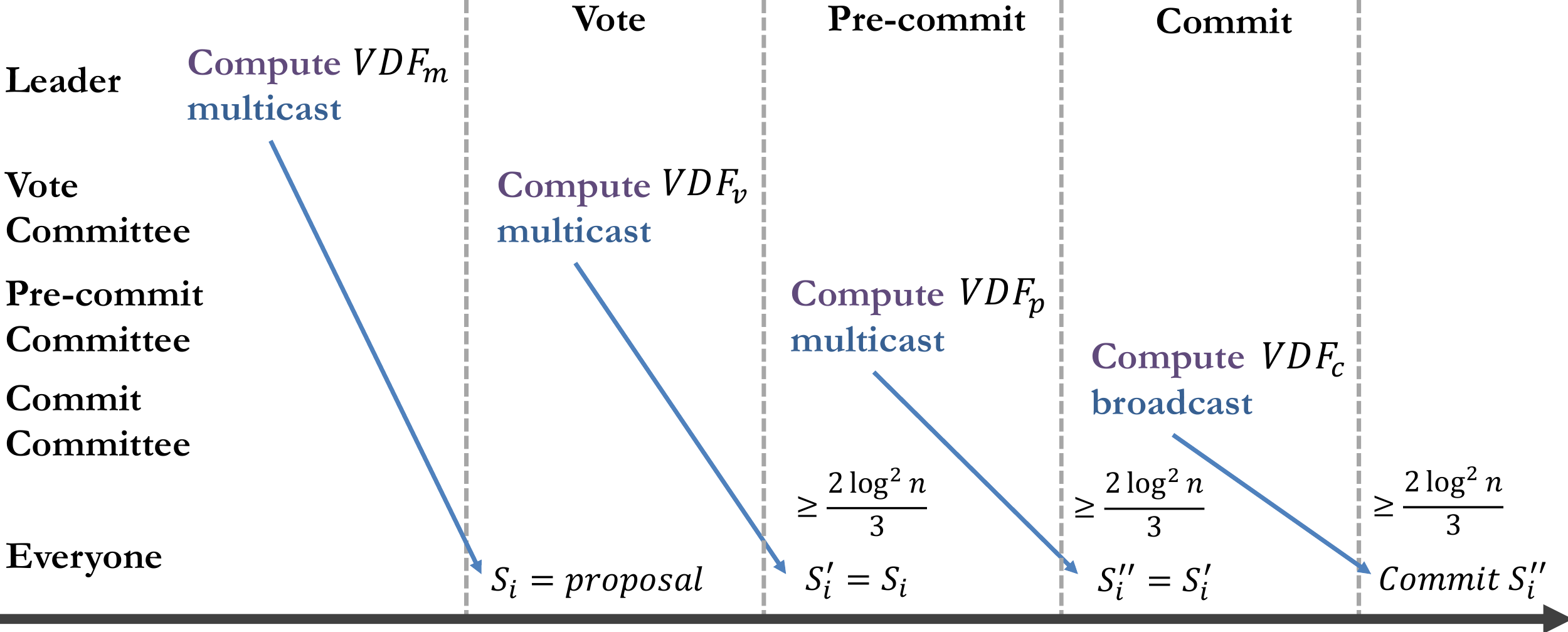}
    \caption{A simplified visual representation of our protocol detailed in~\cref{fig:proposal-consensus}.
    There are four total communication rounds--the leader proposal round, the voting round,
    the pre-commit round, and the commit round. In each round, the leader/committee members
    are chosen randomly via a call to $\cryptosort(\sk{i}, \cdots)$.}\label{fig:visual-protocol}
\end{figure}

\begin{figure}
\begin{mdframed}[hidealllines=false,backgroundcolor=gray!30]
\paragraph{Protocol for replica $i$:}
\begin{enumerate}
\item Let $\ell$ be the current epoch.
\item While a proposal has not been committed:
\begin{enumerate}
    \item Compute $\cryptosort(\sk{i}, \ell) \rightarrow (\vrfout{leader}{i}, \p{leader}{i})$
            where $\vrfout{leader}{i}$ is the output of the call to $\cryptosort$ and
            $\p{leader}{i}$ is the associated proof.
			\item If $\vrfout{leader}{i} \leq \frac{1}{2n}$, then:
			\begin{enumerate}
				\item Construct valid proposal $\bloc{\ell}{i}$.
				\item Compute $\mvdf(\ell, \vrfout{leader}{i}, \bloc{\ell}{i}) \rightarrow \vdfmout{i}$ which outputs $\vdfmout{i}$ that contains a VDF output value and a proof for the value.
				\item Once the output of $\mvdf$ has been computed, multicast $(\vdfmout{i}, \vrfout{leader}{i}, \p{leader}{i}, \bloc{\ell}{i})$.
				\item Set $S_i = \bloc{\ell}{i}$.
			\end{enumerate}
			\item Upon receiving a valid proposal $(\vdfmout{k}, \vrfout{leader}{k}, \p{leader}{k}, \bloc{\ell}{k})$ and $i$ is not a leader:
				\begin{enumerate}
                    \item Compute $\cryptosort(\sk{i}, \ell, \text{`}\mathtt{vote}\text{'}) \rightarrow (\vrfout{vote}{i}, \p{vote}{i})$.
					\item Compute $\cvdf(\ell, \vrfout{vote}{i}, \bloc{\ell}{k}) \rightarrow \vdfcout{i}$.
					\item If $\vrfout{vote}{i} \leq \frac{2\log^2 n}{3(1-\eps)n}$ (for some constant $\eps < 1/3$ defined in the analysis), then multicast
					$(\text{`}\mathtt{vote}\text{'}, \vrfout{vote}{i}, \p{vote}{i}, \vdfcout{i}, \bloc{\ell}{k})$.
					\item Set $S_i = \bloc{\ell}{k}$.
				\end{enumerate}
			\item Upon receiving $\geq \frac{2\log^2 n}{3}$ valid `$\mathtt{vote}$' messages for proposal $S_i$:
			\begin{enumerate}
                \item Compute $\cryptosort(\sk{i}, \ell, \text{`}\mathtt{precommit}\text{'}) \rightarrow (\vrfout{precommit}{i}, \p{precommit}{i})$.
				\item Compute $\vdf{p}(\ell, \vrfout{precommit}{i}, S_i) \rightarrow \vdfpout{i}$.
				\item If $\vrfout{precommit}{i} \leq \frac{2\log^2 n}{3(1-\eps)n}$, then multicast\\
				 $(\text{`}\mathtt{precommit}\text{'}, \vrfout{precommit}{i}, \p{precommit}{i}, \vdfpout{i}, S_i)$.
				\item Set $S_i' = S_i$.
			\end{enumerate}
			\item Upon receiving $\geq \frac{2\log^2 n}{3}$ valid `$\mathtt{precommit}$' messages for proposal $S'_i$:
			\begin{enumerate}
                \item Compute $\cryptosort(\sk{i}, \ell, \text{`}\mathtt{commit}\text{'}) \rightarrow (\vrfout{commit}{i}, \p{commit}{i})$.
				\item Compute $\vdf{c}(\ell, \vrfout{commit}{i}, S'_i) \rightarrow \vdftout{i}$.
				\item If $\vrfout{commit}{i} \leq \frac{2\log^2 n}{3(1-\eps)n}$, then multicast
				 $(\text{`}\mathtt{commit}\text{'}, \vrfout{commit}{i}, \p{commit}{i}, \vdftout{i}, S'_i)$.
				\item Set $S_i'' = S'_i$.
			\end{enumerate}
			\item Upon receiving $\geq \frac{2\log^2 n}{3}$ valid `$\mathtt{commit}$' messages for proposal $S''_i$, commit to $S_i''$ and set $S_i, S_i', S_i'' = \emptyset$.
			\item Timeout if none of the above steps can be taken after $\Delta$ time.
		\end{enumerate}
		\item After $R$ communication rounds (for $R$ defined in the analysis to be $X / \Delta$) and/or timeouts, terminate the while loop for epoch $\ell$, set $S_i, S_i', S_i'' = \emptyset$ and proceed with epoch $\ell+ 1$ of the protocol.
		\end{enumerate}
\end{mdframed}
\caption{General consensus protocol for honest replica $i$.}\label{fig:proposal-consensus}
\end{figure}

\subsection{Analysis}\label{sec:consensus-analysis}

As before, we define the following terms,
a \emph{round} of a replica $i$ consists of sending and/or receiving a set of messages
(in other words, one round of communication)
and an \emph{epoch} $\ell$ is defined
to be one iteration of the while loop defined in the protocol given in Fig.~\ref{fig:proposal-consensus}.
Assuming $\Delta$ message delay, we first prove that if there is exactly one leader--which is honest,
there are $\geq \frac{2\log^2{n}}{3}$ honest replicas in each committee,
and there are $< \frac{\log^2 n}{3}$
Byzantine replicas in each committee, then we can reach consensus on the
leader's proposal given appropriate initial settings of the parameters.

Let $\nslow$, $\nfast$ and $\adv$ be constants $\nslow, \nfast, \adv > 0$. We assume that the slowest honest replica
takes $\nslow D$ time to compute a VDF of difficulty $D$, the fastest honest replica takes $\nfast D$ time to compute the VDF, and
any Byzantine replica takes $\adv D$ time to compute the VDF.
We show that our protocol accounts for the most interesting settings of the parameters: $\nslow, \nfast > \adv$;
in the case when the adversary computes the VDF slower than honest replicas,
security can be proven trivially.
Let $X$ be the total time (in terms of $\Delta$) that each epoch consists of and $R = X/\Delta$
is the corresponding number of communication rounds.\footnote{In the case when $X$ is not
divisible by $\Delta$, we can increase the duration of $X$ such that it becomes divisible by $R$.}
We give the exact bounds for these variables, $X$ and $R$,
in our proofs (in terms of $\Delta$). Throughout our proofs, we let $D_x$ be the difficulty level of $\vdf{x}$.

\begin{lemma}\label{lem:parameter-setting}
Let $\Delta$ be the message delay. For epoch $\ell$,
suppose that there is exactly one leader, which is honest,
there are $\geq \frac{2\log^2{n}}{3}$ honest replicas in each committee,
and there are $<\frac{\log^2{n}}{3}$ Byzantine replicas in each committee.
When $\nslow, \nfast > \adv>0$ and $\nfast \approx \nslow$,
there exist values $\diff{m}, \diff{c}, \diff{p}, \diff{v}, X > \Delta$
in terms of $\nfast$, $\nslow$, $\adv$, and $\Delta$
that allow for the leader's proposal to be committed by all honest replicas
with high probability with respect to $n$ and $\secp$.
\end{lemma}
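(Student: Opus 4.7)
The plan is to exhibit a simple one-shot choice of VDF difficulties making each honest per-phase computation take $\Theta(\Delta)$ time, together with a matching epoch length $X$, and then show that under the single-leader, honest-majority premises the protocol's `upon receiving' triggers carry every honest replica through vote, precommit, and commit before the epoch ends.

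First I would trace the happy-path timing. The honest leader finishes $\mvdf$ in at most $\nslow \diff{m}$ time, after which its proposal reaches every honest replica within an additional $\Delta$. The $\geq \frac{2\log^2 n}{3}$ honest vote-committee members each finish $\cvdf$ in at most $\nslow \diff{v}$ time and multicast a vote on the same block; after another $\Delta$ every honest replica has received the requisite $\frac{2\log^2 n}{3}$ valid votes and advances to precommit. Repeating this for precommit and commit gives the end-to-end requirement
\[
X \;\geq\; \nslow\bigl(\diff{m}+\diff{v}+\diff{p}+\diff{c}\bigr) + 4\Delta,
\]
and I would take $X$ to be exactly this right-hand side with a small additive slack to absorb the $\nfast\approx\nslow$ looseness; $X>\Delta$ follows automatically as soon as any $\diff{x}>0$.

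For the difficulties themselves, a concrete choice is $\diff{m}=\diff{v}=\diff{p}=\diff{c}=c\cdot \Delta/\nslow$ for a positive constant $c$ to be tuned. Each $\diff{x}$ is strictly positive, so Byzantine replicas cannot produce valid VDF outputs for free; and the honest per-phase cost $\nslow\diff{x}=c\Delta$ keeps $X=\Theta(\Delta)$. Because $\nslow>\adv>0$, the adversarial per-phase cost $\adv\diff{x}$ is only a strictly smaller constant fraction of $\Delta$ — harmless in the present single-leader, honest-majority setting (the adversary has no leader slot and no committee majority to exploit), but useful downstream as a nonzero lower bound on the cost of each rogue VDF output.

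Finally I would verify that every upon-receive guard is reached, no timeout fires, and the high-probability bound holds. Since the one honest leader is seen by every honest replica, all honest vote-, precommit-, and commit-committee members act on the same block, so each phase produces $\geq \frac{2\log^2 n}{3}$ honest matching messages — exactly the threshold the protocol demands. The only way this chain can break is a cryptographic failure (a forged $\cryptosort$ output or a broken $\vdf{x}$), and by the soundness clauses of the VRF primitive together with \cref{def:vdfs} each such event has probability negligible in $\secp$; a union bound over the constantly-many invocations in one epoch preserves overwhelming probability in $\secp$. The main (mild) obstacle is keeping the quantitative bookkeeping clean given that the hypotheses supply only constant-factor separations $\nfast/\nslow=\Theta(1)$ and $\nslow>\adv$: the inequality on $X$ must be satisfied uniformly at the worst honest replica, so the $\diff{x}$ must be pinned down in one shot so that the same $X$ works everywhere. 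Once this is done, the remainder is simply reading off the protocol's triggers along the happy path.
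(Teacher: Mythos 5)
Your proposal handles only the ``happy path'' liveness constraint (that honest replicas finish all four VDF phases within $X$), but it misses---and in fact explicitly dismisses---the central \emph{safety} constraint that this lemma exists to prove, and your concrete parameter choice would fail it.

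You write that the adversarial per-phase cost $\adv\diff{x}$ is ``harmless in the present single-leader, honest-majority setting (the adversary has no leader slot and no committee majority to exploit).'' This is exactly the wrong intuition: the adversary here is \emph{adaptive}. It can watch the honest leader multicast $(\vdfmout{i},\ldots,\bloc{\ell}{i})$ and then immediately corrupt that leader, gaining its secret key, and set the now-Byzantine leader to computing $\mvdf$ on a \emph{different} proposal $\bloc{\ell}{i}'$. If there is time remaining in the epoch for that second VDF to finish, the adversary multicasts a second, conflicting, validly-VDF'd proposal, and similarly it can double-vote in each committee by corrupting so-far-honest committee members after they have spoken. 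The premises of the lemma only promise that the leader and a committee majority are honest \emph{at the moment of election}; preventing post-hoc corruption from yielding extra proposals/votes is the whole job of the difficulty parameters. The paper therefore imposes, alongside the liveness bound you derived, safety constraints of the form $(\nfast+\adv)\diff{m}>X$, $\nfast(\diff{m}+\diff{v})+\adv\diff{v}>X$, and so on for $\diff{p},\diff{c}$: the honest computation time for one output plus the adversarial time for a second must overrun the epoch. Solving the combined system forces $\diff{m}>\diff{v}>\diff{p}>\diff{c}$ and yields a nontrivial feasibility condition on $\nslow$ in terms of $\nfast,\adv$ (Eq.~\ref{eq:bounds}).

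Your concrete choice $\diff{m}=\diff{v}=\diff{p}=\diff{c}=c\Delta/\nslow$ cannot satisfy these constraints. Plugging into $(\nfast+\adv)\diff{m}>X$ with $\nfast\approx\nslow$ and your own $X\geq \nslow\cdot 4\diff{m}+4\Delta=4c\Delta+4\Delta$ gives roughly $(1+\adv/\nslow)c\Delta>4c\Delta+4\Delta$, i.e.\ $(\adv/\nslow-3)c>4$, which is impossible since $\adv<\nslow$ makes the left side negative. So under your parameters the adversary \emph{does} have time to corrupt-and-recompute, and the lemma's conclusion (a unique committed proposal) does not follow. To repair the proof you need to (i) state the adversarial recompute-time constraints for all four phases, (ii) solve them jointly with the liveness bound to get the decreasing-difficulty schedule, and (iii) check the resulting feasibility window on $\nslow$ relative to $\nfast,\adv$, as the paper does. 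The liveness half of your argument (honest replicas traverse the four upon-receive guards within $X+4\Delta$, with cryptographic failures absorbed by a union bound) is fine and matches the paper's Eq.~\ref{eq:large}; it is the safety half that is missing.
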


\begin{proof}
In the case where there is exactly one leader, who is honest, and all committees have
$\geq \frac{2\log^2{n}}{3}$ honest replicas and $<\frac{\log^2{n}}{3}$ Byzantine replicas,
each leader and honest committee member will send out exactly one proposal/vote. However,
the adversary can potentially choose to adaptively corrupt the leader and/or committee members
and send out multiple proposals if the difficulty levels of our VDFs are not set appropriately. The only
way that an adversary can send multiple proposals or votes is if they compute the VDFs associated with
the proposals or votes. Since we assume that the adversary cannot guess the private keys
of the honest replicas with all but negligible probability in $\secp$,
they cannot compute the VDFs of the extra proposals and votes until
after they corrupt the replicas with all but negligible probability in $\secp$
by~\cref{def:vdfs}.
By the assumptions given in the lemma statement, initially both
the leader and majority of committee members are so-far honest.
Thus, we need only concern ourselves
with the cases when the honest replicas are
corrupted after they announce their leadership/committee status.

In order to prevent the leader from sending multiple proposals, the leader must not have enough
time to compute a new VDF output on a new proposal
after computing the current VDF output on the proposal they have already
multicasted. Recall that by our definitions of $\nfast$ and $\adv$, the fastest that
an honest replica can compute $\mvdf$ is $\nfast \diff{m}$ and any
Byzantine replica must take
at least $\adv \diff{m}$ time to compute $\mvdf$.

We must ensure that each time a replica computes a VDF and sends the result, the adversary does not
have enough time to compute another value for the VDF before we proceed with the next epoch. Thus, the
difficulty levels of the VDFs must be set accordingly. Let $X$ be the time that an epoch lasts (in terms of $\Delta$)
before we proceed to the next epoch. Then, for example, for the leader
proposal round, the amount of time it takes for the
fastest honest replica to compute the corresponding proposal VDF plus the time it takes for the adversary to take
control of the honest proposer and compute another VDF must be longer than the length of the epoch. The
constraint on the difficulty level $\diff{m}$ must then follow: $(\nfast + \adv)\diff{m} > X$. Following this
pattern, the remaining difficulty terms must follow similar constraints.
Intuitively, this also means that $\diff{m} > \diff{v} > \diff{p} > \diff{c}$. Finally, $X$ must be long enough
so that honest replicas can compute, receive, and verify all VDF outputs so they can commit a proposal if the conditions of the
lemma are followed.

From the intuition above, the difficulty levels that are set must specifically follow the following constraints:

\begin{align}
(\nfast + \adv)\diff{m} > X \\
\nfast(\diff{m} + \diff{v}) + \adv \diff{v} > X \\
\nfast(\diff{m} + \diff{v} + \diff{p}) + \adv \diff{p} > X \\
\nfast(\diff{m} + \diff{v} + \diff{p} + \diff{c}) + \adv \diff{c} > X \\
X \geq \nslow (\diff{m} + \log(\diff{m}) + \diff{v} + \log(\diff{v}) + \diff{p} + \log(\diff{p}) + \diff{c} + \log(\diff{c})) + 4\Delta \label{eq:large}
\end{align}

We solve this set of equations to obtain the following set of expressions for
$\diff{m}$, $\diff{v}$, $\diff{p}$, and $\diff{c}$ in terms of $X$:

\begin{align}
\diff{m} > \frac{X}{\nfast+ \adv}\\
\diff{v} > \frac{X}{\nfast + \adv} - \nfast\left(\frac{X}{\left(\nfast + \adv\right)^2}\right)\\
\diff{p}>\frac{X}{\nfast + \adv} - 2\nfast\left(\frac{X}{\left(\nfast + \adv\right)^2}\right) + \nfast^2 \left(\frac{X}{\left(\nfast + \adv\right)^3}\right)\\
\diff{t} > \frac{X}{\nfast + \adv} - 3\nfast\left(\frac{X}{\left(\nfast + \adv\right)^2}\right) + 3\nfast^2 \left(\frac{X}{\left(\nfast + \adv\right)^3}\right) - \nfast^3 \left(\frac{X}{\left(\nfast + \adv\right)^4}\right)
\end{align}

Substituting the above into Eq.~\ref{eq:large} gives us a lower bound for $X$ from which we can also derive the other values. First
we replace $\log(\diff{})$ with $\eps' \diff{}$ for some small constant $\eps'$ for all $\diff{}$.

\begin{align}
X > \nslow\left((1+\eps')\diff{m} + (1+\eps')\diff{v} + (1+\eps')\diff{p} + (1+\eps')\diff{c}\right) + 4\Delta\\
X > \nslow(1+\eps')\left(\left(\frac{4X}{\nfast+ \adv}\right) - 6\nfast\left(\frac{X}{\left(\nfast + \adv\right)^2}\right) + 4\nfast^2 \left(\frac{X}{\left(\nfast + \adv\right)^3}\right) - \nfast^3 \left(\frac{X}{\left(\nfast + \adv\right)^4}\right)\right) + 4\Delta\\
X > \frac{4\Delta}{1 - \nslow(1+\eps')\left(\frac{4}{\nfast+ \adv}-\frac{6\nfast}{\left(\nfast + \adv\right)^2} + \frac{4\nfast^2}{\left(\nfast + \adv\right)^3}-\frac{\nfast^3}{\left(\nfast + \adv\right)^4}\right)}.
\end{align}

Substituting the expression for $X$ will lead to values of $\diff{m}$, $\diff{v}$, $\diff{p}$ and $\diff{c}$ in terms of the values of $\nfast$, $\nslow$, and $\adv$.

This expression is valid iff

\begin{align}
1 - \nslow(1+\eps')\left(\frac{4}{\nfast+ \adv}-\frac{6\nfast}{\left(\nfast + \adv\right)^2} + \frac{4\nfast^2}{\left(\nfast + \adv\right)^3}-\frac{\nfast^3}{\left(\nfast + \adv\right)^4}\right) > 0\\
\nslow < \frac{1}{(1+\eps')\left(\frac{4}{\nfast+ \adv}-\frac{6\nfast}{\left(\nfast + \adv\right)^2} + \frac{4\nfast^2}{\left(\nfast + \adv\right)^3}-\frac{\nfast^3}{\left(\nfast + \adv\right)^4}\right)}\label{eq:constraint-1}
\end{align}

and

\begin{align}
\frac{4}{\nfast+ \adv}-\frac{6\nfast}{\left(\nfast + \adv\right)^2} + \frac{4\nfast^2}{\left(\nfast + \adv\right)^3}-\frac{\nfast^3}{\left(\nfast + \adv\right)^4} > 0\label{eq:constraint-2}.
\end{align}

Eq.~\ref{eq:constraint-2} is always true for all $\nfast > 0, \adv > 0$. Hence, we need only concern ourselves with the constraint defined by Eq.~\ref{eq:constraint-1}. Assuming that $\eps'$ is negligible\footnote{Given that we pick $\eps'$ such that $\eps'D \geq \log(D)$, if $\eps'$ is not negligible, then we can increase the delay $4\Delta$ in Eq.~\ref{eq:large} to something greater to account for the time necessary to verify the VDF computations.}, we can simplify to obtain:

\begin{align}
\nslow < \frac{(\nfast + \adv)^4}{\nfast^3 + 4\nfast^2 \adv + 6\nfast \adv^2 + 4 \adv^3\label{eq:bounds}}.
\end{align}

For all values of $\nfast$, $\adv$, we obtain a bound for $\nslow$ where there exist values we can set $\nslow$ such that $\nslow > \nfast$. We have thus proven that
there exist values of $\diff{m}, \diff{c}, \diff{p}, \diff{v}, X > \Delta$ given $\nslow, \nfast > \adv > 0$
and $\nfast \approx \nslow$ that we can set
to prevent violation of safety by the corruption of so-far honest replicas.

In such cases, when the conditions given in the statement of the lemma are followed,
given exactly one honest proposer and committees dominated by honest replicas, the adversary is not able to
produce additional proposals or votes with all but negligible probability in $\secp$.
Furthermore, the adversary does not have enough time to corrupt an honest replica
and compute the associated message or vote VDF before the epoch has progressed to the next epoch.

Since a single honest leader will always propose exactly one proposal,
all honest replicas will vote for the same proposal, reaching the necessary number of votes.
Hence, the leader's proposal will be committed by all honest replicas.
\end{proof}

Now, we remove the constraint of $\nslow \approx \nfast$ by assuming that each honest replica with a faster
VDF implementation than $\nslow$ can choose to delay sending their proposal or vote until after the time that it would have
taken the replicas that take $\nslow \diff{}$ time to compute and verify the VDFs. This immediately allows us to conclude that our
protocol can handle any constant values of $\adv, \nfast > 0$ (since the constraint $\nfast \approx \nslow$ in Lemma~\ref{lem:parameter-setting}
is trivially satisfied). For Corollary~\ref{cor:slow-down}, we assume that all honest replicas compute the VDFs with speed
$\nfast$.

\begin{corollary}\label{cor:slow-down}
Let $\Delta$ be the message delay. For epoch $\ell$, suppose that there is exactly one leader, which is honest,
there are $\geq \frac{2\log^2{n}}{3}$ honest replicas in each committee,
and there are $<\frac{\log^2{n}}{3}$ Byzantine replicas in each committee.
When $\nfast > \adv > 0$,
there exist values $\diff{m}, \diff{c}, \diff{p}, \diff{v}, X > \Delta$
in terms of $\nfast$, $\adv$ and $\Delta$
that allow for the leader's proposal to be committed by all honest replicas.
\end{corollary}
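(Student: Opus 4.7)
The plan is to reduce the corollary directly to Lemma~\ref{lem:parameter-setting} by equalizing the honest VDF computation speeds. Any honest replica whose VDF implementation runs faster than the designated baseline $\nfast$ can simply buffer its output and postpone the corresponding multicast until the moment a replica computing at speed $\nfast$ would have finished. This synchronization is sound: the difficulty parameters chosen in Lemma~\ref{lem:parameter-setting} only rely on upper bounds on honest send latencies and lower bounds on adversarial computation, so artificially delaying honest messages cannot open a new attack vector, it can only influence the epoch length $X$. Under this reduction, effectively $\nslow = \nfast$, so the constraint $\nfast \approx \nslow$ required by Lemma~\ref{lem:parameter-setting} is trivially satisfied.

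What remains is to check that the feasibility inequality derived in the proof of Lemma~\ref{lem:parameter-setting}, namely Eq.~\ref{eq:bounds}, admits a solution when $\nslow = \nfast$. Substituting and clearing the denominator, the condition becomes $\nfast \cdot (\nfast^3 + 4\nfast^2 \adv + 6\nfast \adv^2 + 4\adv^3) < (\nfast + \adv)^4$. Expanding the right-hand side via the binomial theorem and cancelling the matched terms on each side shows that the difference between the two sides is precisely $\adv^4$, so the inequality is equivalent to $\adv^4 > 0$. This holds for every admissible parameter choice, i.e.\ whenever $\adv > 0$, with no further dependence on $\nfast$. Hence the constraint is satisfied for every $\nfast > \adv > 0$, and Lemma~\ref{lem:parameter-setting} furnishes values $\diff{m}, \diff{v}, \diff{p}, \diff{c}$ and $X > \Delta$ in terms of $\nfast$, $\adv$, and $\Delta$ that guarantee commitment of the honest leader's proposal under the stated committee-composition assumptions.

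The main obstacle I anticipated was the risk of a parameter regime for which the bound in Eq.~\ref{eq:bounds} would leave a gap, i.e.\ where the slowest feasible $\nslow$ forced by the constraint would still exceed $\nfast$ even after the slowdown trick; in that case the reduction would fail. The clean cancellation to $\adv^4 > 0$ rules this out uniformly, so the corollary follows without any additional restriction on the honest-vs.-adversary speed ratio beyond strict inequality.
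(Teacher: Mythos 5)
Your proof takes essentially the same route as the paper: slow the faster honest replicas down to a common baseline so that effectively $\nslow = \nfast$, which makes the $\nfast \approx \nslow$ hypothesis of Lemma~\ref{lem:parameter-setting} hold trivially, and then invoke that lemma. Your extra step of substituting $\nslow = \nfast$ into Eq.~\ref{eq:bounds} and observing that the cross-multiplied inequality reduces exactly to $\adv^4 > 0$ is a correct and welcome explicit verification of a point the paper leaves implicit (the lemma's proof only asserts the existence of some admissible $\nslow > \nfast$, not that the boundary choice $\nslow = \nfast$ itself is admissible).
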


In the rest of this section, we prove the safety and liveness of
our consensus protocol which directly leads to the proof of~\cref{thm:general-final}.

We first show that each epoch consists of a constant number of rounds.

\begin{lemma}\label{lem:constant-rounds-epoch}
Each epoch consists of $O(1)$ communication rounds.
\end{lemma}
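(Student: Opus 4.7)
The plan is to read the round count directly off of the protocol description in \cref{fig:proposal-consensus} and argue that it does not grow with $n$. First I would enumerate the communication rounds that a single pass through the while loop (a single epoch $\ell$) can consist of: (i) the leader proposal multicast, (ii) the `$\mathtt{vote}$' multicast, (iii) the `$\mathtt{precommit}$' multicast, and (iv) the `$\mathtt{commit}$' multicast. No other step in \cref{fig:proposal-consensus} sends a message, so there are at most four distinct communication rounds per epoch.

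Next I would invoke \cref{lem:parameter-setting} (and \cref{cor:slow-down}) to observe that the epoch length $X$ and all VDF difficulties $\diff{m}, \diff{v}, \diff{p}, \diff{c}$ are chosen as functions of $\nfast$, $\nslow$, $\adv$, and $\Delta$ only, with no dependence on $n$. Since $\nfast$, $\nslow$, $\adv$ are positive constants and $\Delta$ is the (constant) per-message delay bound in the synchronous model, both $X$ and $R = X/\Delta$ are $O(1)$. Each of the four multicast stages above is triggered either by a local VDF computation finishing or by the receipt of a quorum of messages; by the setting of $X$ in Eq.~\ref{eq:large}, all four stages fit inside $X$ time, which is exactly $R = O(1)$ synchronous rounds.

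Finally, I would close by noting the boundary case: if a replica reaches the timeout step (step \emph{h}) or the $R$-round cap at the end of the epoch, it simply exits the while loop and moves on to epoch $\ell+1$; this does not increase the number of communication rounds \emph{within} the epoch beyond $R$. Combining the two bounds (four message-sending stages, each of duration $O(\Delta)$, inside an epoch of length $X = O(\Delta)$) gives that each epoch consists of $O(1)$ communication rounds.

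I expect no serious obstacle here, since the statement is essentially a bookkeeping consequence of \cref{lem:parameter-setting}. The only subtle point, which I would make explicit, is that the constants hidden in $O(1)$ depend on $\nfast, \nslow, \adv$ via the denominator in the bound on $X$ derived in the proof of \cref{lem:parameter-setting}; in particular, one must verify that this denominator stays strictly positive (which is exactly the constraint in Eq.~\ref{eq:bounds}), so that $X$ and hence $R$ remain finite and independent of $n$.
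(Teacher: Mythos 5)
Your proposal is correct and takes essentially the same route as the paper: the paper's proof also just invokes \cref{lem:parameter-setting} and \cref{cor:slow-down} to note that $X$ is fixed in terms of the constants $\nfast$, $\nslow$, $\adv$, $\Delta$, so $R = X/\Delta = O(1)$. Your enumeration of the four message-sending stages is extra color but not needed, since the bound already follows from $X$ being constant; the one caution is to not conflate ``four multicast stages'' with ``four rounds,'' as the epoch spans $R$ rounds of duration $\Delta$ while VDFs are being computed, and $R$ can exceed four even though it is $O(1)$.
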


\begin{proof}
By Lemma~\ref{lem:parameter-setting} and Corollary~\ref{cor:slow-down}, the duration of the epoch, $X$, is in terms of $\nfast$, $\nslow$,
$\adv$ and $\Delta$. Since a communication round lasts at most $\Delta$ time and $\nfast$, $\nslow$, and $\adv$ are constants,
an epoch consists of $O(1)$ communication rounds.
\end{proof}

We now show that, with high probability, the conditions stated in Lemma~\ref{lem:parameter-setting} and Corollary~\ref{cor:slow-down}
can be satisfied. To do this, we first show that with high probability, after $O(\log n)$ epochs,
there will exist an epoch which has exactly
one so-far honest leader.

\begin{lemma}\label{lem:so-far-honest-leader}
After $O\left(\log{n}\right)$ epochs, there will be at least one epoch in which there exists exactly one leader and that leader is honest.
\end{lemma}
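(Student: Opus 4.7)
The plan is to show that in any fixed epoch, the probability of having exactly one leader and that leader being (so-far) honest is at least some constant $c_0 > 0$ independent of $n$, and then to amplify over $\Theta(\log n)$ epochs using the fact that the VRF outputs across distinct epochs are computationally independent.

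First I would invoke the third property of cryptographic sortition (pseudorandomness of the outputs). For any fixed epoch $\ell$, the values $\{\vrfout{leader}{i}\}_{i\in[n]}$ obtained from $\cryptosort(\sk{i},\ell)$ are computationally indistinguishable from $n$ mutually independent uniform samples in $[0,1]$, since secret keys are generated independently during the PKI setup and the adversary never learns $\sk{i}$ for any so-far honest $i$. Hence, up to a $\mathrm{negl}(\kappa)$ error, each replica independently becomes a leader with probability $p = 1/(2n)$.

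Second I would lower-bound the single-epoch success probability. Let $H_\ell$ be the set of so-far honest replicas at the start of epoch $\ell$; since $f \leq (n-1)/3$, we have $|H_\ell| \geq n - f \geq 2n/3$. The events ``replica $i$ is the unique leader of epoch $\ell$'' are pairwise disjoint across $i$, so
\[
\prob[\text{exactly one leader in }\ell,\text{ and it is honest}]
\;\geq\;
|H_\ell|\cdot \tfrac{1}{2n}\bigl(1-\tfrac{1}{2n}\bigr)^{n-1}
\;\geq\;
\tfrac{1}{3}e^{-1/2}-o(1),
\]
where the $(1-1/(2n))^{n-1}$ factor accounts for no other replica (honest or Byzantine) also clearing the threshold. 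For all sufficiently large $n$ this is at least a positive constant $c_0$.

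Finally I would chain this over $K$ successive epochs. Because $\cryptosort(\sk{i},\cdot)$ is queried on distinct inputs in different epochs, the VRF outputs across epochs are computationally independent; conditioned on the history of previous epochs and the adversary's adaptive corruption schedule (which can shrink $H_\ell$ but never below $2n/3$), the one-epoch success probability remains at least $c_0 - \mathrm{negl}(\kappa)$. A geometric tail bound yields
\[
\prob[\text{no good epoch among the first }K] \;\leq\; (1-c_0)^K + K\cdot\mathrm{negl}(\kappa),
\]
so choosing $K = \lceil (C+1)\ln n / c_0\rceil = O(\log n)$ drives the failure probability below $1/n^C + \mathrm{negl}(\kappa)$ for any constant $C$, which is the desired high-probability bound.

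The main delicate point is formalizing cross-epoch independence under an adaptive adversary: an adversary who observes some leadership announcements in early epochs might in principle try to bias its corruption budget to kill off ``likely-leader'' replicas later on. I would handle this via a standard hybrid argument on the VRF, replacing each honest replica's output $\cryptosort(\sk{i},\ell')$ for each future epoch $\ell'$ by a fresh uniform value, one replica-epoch pair at a time, losing $\mathrm{negl}(\kappa)$ per swap by VRF pseudorandomness. Once all honest VRF evaluations have been replaced by truly uniform samples, the per-epoch success events are genuinely independent of everything the adversary sees before epoch $\ell$, and the geometric bound applies verbatim.
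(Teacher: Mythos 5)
Your proposal is correct, and it actually takes a cleaner route than the paper. The paper's proof also uses a geometric amplification over $O(\log n)$ epochs, but its per-epoch step is organized differently: it defines three bad events (Byzantine leader, more than one leader, no leader), separately bounds the probability that each \emph{single} bad event persists across all $c\log n$ epochs (i.e.\ three ``streak'' probabilities, each $\Theta(1/n^c)$), and then applies a union bound over those three streaks. That union bound does not, on its own, control the probability that every epoch is bad, since the bad events can alternate from epoch to epoch without any one of them occurring in every epoch. Your decomposition avoids this subtlety: by observing that the events ``replica $i$ is the unique leader'' are pairwise disjoint, you directly lower-bound the per-epoch probability of a good epoch by $|H_\ell|\cdot\frac{1}{2n}(1-\frac{1}{2n})^{n-1} \geq \frac{1}{3}e^{-1/2}-o(1)$ (the same quantity the paper eventually computes elsewhere, in its termination lemma, and bounds below by $1/6$), and then amplifies $(1-c_0)^K$. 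This is the right way to structure the argument. You are also more explicit than the paper about the two points it treats implicitly: (i) replacing VRF outputs by truly uniform values via a hybrid argument, accumulating only a $\mathrm{negl}(\kappa)$ loss over the polynomially many replica--epoch pairs, and (ii) that the per-epoch bound survives conditioning on the adversary's adaptive corruption schedule because $|H_\ell| \geq n-f$ always and the honest replicas' future outputs remain pseudorandom in the adversary's view. Both refinements are compatible with the paper's intent and strengthen the rigor of the claim.
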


\begin{proof}
At the beginning of epoch $\ell$, at most $f$ replicas are Byzantine when the leader is chosen. Therefore, the probability that
an already-Byzantine node is chosen is $\frac{f}{2n} < \frac{n/3}{2n} = \frac{1}{6}$. Thus, the probability that a Byzantine node
is chosen to be a leader for every epoch after $c \log{n}$ epochs is $\left(\frac{1}{6}\right)^{c\log{n}} = \Theta\left(\frac{1}{n^c}\right)$.
Thus, with high probability, after $c\log{n}$ epochs, there will exist at least one epoch where no Byzantine replicas are
elected as leaders. By the Chernoff bound, the probability that more than one leader is elected in every epoch after $c \log{n}$ epochs is
$\leq \exp(-1/6)^{c\log{n}} = \Theta\left(\frac{1}{n^c}\right)$. The probability that no leaders are elected after $c \log{n}$ rounds is
$< \left(1 - \exp(-1/6)\right)^{c\log{n}} = \Theta\left(\frac{1}{n^c}\right)$. By the union bound, the probability that any of the above three
bad cases occur after $c\log{n}$ rounds is bounded by $\Theta\left(\frac{1}{n^c}\right)$ for all $c > 0$. Thus, with high probability,
there exists at least one round in which there exists exactly one leader and that leader is honest.
\end{proof}

\begin{lemma}\label{lem:majority-honest}
Suppose that the number of Byzantine replicas, $f$ is given by $f < \eps_f n$ for some constant $\eps_f < \eps$ provided $\eps < 1/3$ (in Fig.~\ref{fig:proposal-consensus}). Then, there exist an arbitrarily small constant $0 < \eps' < 1$ such that
after $O\left(1\right)$ epochs, there will be at least one round where all committees have $\geq \frac{2\log^2{n}}{3}$ honest replicas in each committee,
and there are $< \frac{\log^2{n}}{3}$ Byzantine replicas
in each committee with probability $1 -  O\left(\exp\left(-\eps'^2 (1-\eps_f)\log^2 n/9(1-\eps)\right)\right)$
for some constants $0 < \eps' < 1$, $0 < \eps < 1/3$, and $0 < \eps_f < \eps$.
\end{lemma}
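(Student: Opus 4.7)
The plan is to analyze a single epoch, apply~\cref{def:chernoff} separately to each of the three voting committees (vote, precommit, commit), and then union bound over the committees and over the $O(1)$ epochs considered. The core observation is that, by the pseudo-random output property of $\cryptosort$ (property~3 of~\cref{sec:prelims}), the events ``replica $i$ joins this committee'' are independent Bernoulli trials with success probability $p = \frac{2\log^2 n}{3(1-\eps)n}$, up to an additive $\mathtt{negl}(\secp)$ slack that one absorbs at the end.

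First I would fix a single committee in a single epoch and let $H$ and $B$ denote the numbers of honest and Byzantine members sampled. Since $f < \eps_f n$, linearity gives
\begin{align*}
\E[H] \;\geq\; (1-\eps_f)\cdot\frac{2\log^2 n}{3(1-\eps)}, \qquad \E[B] \;\leq\; \eps_f\cdot\frac{2\log^2 n}{3(1-\eps)}.
\end{align*}
The targets are $H \geq \frac{2\log^2 n}{3} = (1-\eps)\cdot\frac{2\log^2 n}{3(1-\eps)}$ and $B < \frac{\log^2 n}{3}$. Because $\eps_f < \eps$, there is a constant $\gamma_1 \in (0,1)$ with $(1-\gamma_1)\,\E[H] \geq \frac{2\log^2 n}{3}$; because additionally $\eps_f < \eps < 1/3$, there is a constant $\gamma_2 > 0$ with $(1+\gamma_2)\,\E[B] \leq \frac{\log^2 n}{3}$.

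Next I would apply the two tails of~\cref{def:chernoff} to obtain, for one committee,
\begin{align*}
\prob\!\left[H < \tfrac{2\log^2 n}{3}\right] &\;\leq\; \exp\!\left(-\tfrac{\gamma_1^2(1-\eps_f)\log^2 n}{3(1-\eps)}\right),\\
\prob\!\left[B > \tfrac{\log^2 n}{3}\right] &\;\leq\; \exp\!\left(-\tfrac{2\gamma_2^2\eps_f\log^2 n}{9(1-\eps)}\right).
\end{align*}
Union bounding over the three committees per epoch and over the $O(1)$ epochs preserves the exponential form (multiplying by a constant). Setting $\eps'$ to be a suitable constant proportional to $\min(\gamma_1,\gamma_2)$ and absorbing the constants into the denominator then gives the claimed bound $O(\exp(-\eps'^2(1-\eps_f)\log^2 n/(9(1-\eps))))$, plus a $\mathtt{negl}(\secp)$ term that accounts for the adversary's negligible advantage in distinguishing VRF outputs from uniform and thus licenses treating the selection events as genuinely independent Bernoullis.

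The main obstacle is the bookkeeping: choosing $\gamma_1, \gamma_2$ and propagating the committee-size constants so that both the honest lower tail and the Byzantine upper tail collapse into a \emph{single} expression with the exact constant $9$ shown in the denominator. A secondary subtlety is ensuring that adaptive corruption does not skew committee composition---this is immediate here because committee membership is determined by the private VRF output, which the adversary cannot observe (and so cannot influence) until the selected replica has already multicast its proof together with the required VDF output, so the honest/Byzantine split at sampling time is exactly $(n-f,f)$ with $f < \eps_f n$ by hypothesis.
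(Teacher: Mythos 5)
Your proof takes essentially the same route as the paper's: compute $\E[H]$ and $\E[B]$ from the uniform threshold $\frac{2\log^2 n}{3(1-\eps)n}$, apply the lower and upper Chernoff tails with slack constants determined by $\eps_f < \eps < 1/3$ (your $\gamma_1, \gamma_2$ correspond to the paper's $\eps_1, \eps_2$), and union bound over committees and epochs; the two intermediate tail bounds you write are literally identical to the paper's. The additional remarks you make---absorbing a $\mathtt{negl}(\secp)$ slack for the VRF-to-uniform approximation, and observing that adaptive corruption cannot bias membership because the VRF outcome is private until the proof and VDF output are multicast---are correct and actually make explicit two points the paper's proof leaves implicit, but they do not change the structure of the argument.
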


\begin{proof}
The expected number of honest replicas that will be chosen for any committee is given by
$(n - f)\left(\frac{2\log^2 n}{3(1-\eps)n}\right) = \frac{2(1-\eps_f)\log^2 n}{3(1-\eps)}> \frac{2\log^2 n}{3}$ since $\eps_f < \eps$. By the Chernoff bound, the
probability that \emph{less than} $(1-\eps_1)\left(\frac{2(1-\eps_f)\log^2 n}{3(1-\eps)}\right)$ honest replicas are chosen into the committee is $< \exp\left(-\eps_1^2 (1-\eps_f)\log^2 n/3(1-\eps)\right)$.
In order for the number of honest replicas to be $\geq \frac{2\log^2 {n}}{3}$, we must have $(1-\eps_1)\left(\frac{2(1-\eps_f)\log^2 n}{3(1-\eps)}\right) \geq \frac{2\log^2{n}}{3}$. Thus,
we obtain $\eps_1 \leq 1-\frac{1-\eps}{1-\eps_f}$. Since $\eps_f < \eps$, there always exist values of $\eps < 1/3$ and $0 < \eps_1 < 1$ such that the condition is satisfied.
The probability that after $O\left(1\right)$ epochs there exists an epoch with
$\geq \frac{2\log^2{n}}{3}$ honest replicas in each committee is then given by $1 - O\left(\exp\left(-\eps_1^2 (1-\eps_f)\log^2 n/3(1-\eps)\right)\right)$.

The expected number of Byzantine replicas that will be chosen for any committee is given by
$f\left(\frac{2\log^2 n}{3(1-\eps)n}\right) \leq \eps_f
n\left(\frac{2\log^2 n}{3(1-\eps)n}\right) = \frac{2\eps_f\log^2 n}{3(1-\eps)}$.
By the Chernoff bound, the probability that $> (1 + \eps_2)
\left(\frac{2\eps_f\log^2 n}{3(1-\eps)}\right)$ replicas in the committee
are Byzantine replicas is given by $< \exp(-2\eps_2^2 \eps_f \log^2 n/9(1-\eps))$.
In order for the number of Byzantine replicas to be $< \frac{2\log^2 n}{3}$,
we must have $ (1 + \eps_2) \left(\frac{2\eps_f\log^2 n}{3(1-\eps)}\right)<
\frac{\log^2 n}{3}$. Solving, we obtain $\eps_2 < \frac{1-\eps}{2\eps_f} - 1$.
Since $\eps_f < \eps$ and $\eps < 1/3$, there always exists a value
$0 < \eps_2 < 1$ that satisfies this inequality.
The probability that after $O\left(1\right)$ epochs there exists an epoch where
$< \frac{\log^2{n}}{3}$ Byzantine replicas are in each committee is
then given by $1 - O\left(\exp(-2\eps_2^2 \eps_f \log^2 n/9(1-\eps))\right)$.

The probability that both conditions are satisfied is
$$1 -  O\left(\exp\left(-\eps_1^2 (1-\eps_f)\log^2 n/3(1-\eps)\right) + \exp(-2\eps_2^2 \eps_f \log^2 n/9(1-\eps))\right).$$ Thus, there exist constants $0 < \eps' < 1$, $\eps_f < \eps$, and $\eps < 1/3$
where the probability that both conditions are satisfied is $1 -  O\left(\exp\left(-\eps'^2 (1-\eps_f)\log^2 n/9(1-\eps)\right)\right)$.
\end{proof}

\begin{corollary}\label{cor:majority-honest}
    Suppose that the number of Byzantine replicas, $f$, is given by $f < \eps_f n$
    for some constant $\eps_f < \eps$ provided $\eps < 1/3$ (in Fig.~\ref{fig:proposal-consensus}).
    With high probability, after $O(1)$ epochs, there will be at least one epoch
    where all committees have $\geq \frac{2\log^2{n}}{3}$ honest replicas in each committee,
    and there are $< \frac{\log^2{n}}{3}$ Byzantine replicas
    in each committee.
\end{corollary}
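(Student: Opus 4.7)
The plan is to observe that Corollary~5.6 is a direct high-probability restatement of Lemma~5.5, so the only real work is converting the super-polynomial decay bound into the form of ``with high probability'' used throughout the paper.

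First, I would invoke Lemma~5.5 directly for a single epoch. That lemma already shows that with probability at least $1 - O\!\left(\exp(-\eps'^2 (1-\eps_f)\log^2 n / 9(1-\eps))\right)$, that epoch has every committee containing at least $\tfrac{2\log^2 n}{3}$ honest replicas and fewer than $\tfrac{\log^2 n}{3}$ Byzantine replicas. (If one wants to be explicit about all three committees in Figure~\ref{fig:proposal-consensus}, namely the vote, precommit, and commit committees, a union bound over this constant number of committees only changes the hidden constant inside the $O(\cdot)$, leaving the form of the bound unchanged.)

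Next, I would convert the failure probability into a high-probability bound in $n$. Setting $c = \eps'^2(1-\eps_f)/9(1-\eps) > 0$, the failure probability is
\[
O\!\left(\exp(-c \log^2 n)\right) = O\!\left(n^{-c \log n}\right).
\]
For any fixed constant $c' > 0$, once $n$ is large enough that $c \log n \geq c'$, this is bounded above by $n^{-c'}$. Since the paper's definition of ``with high probability'' requires the failure probability to be at most $n^{-c'} + \mathrm{negl}(\secp)$ for every constant $c'$, and since no cryptographic event has been invoked in this counting argument, the $\mathrm{negl}(\secp)$ term is vacuously $0$ here.

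Finally, since the single-epoch event already occurs with high probability, the ``after $O(1)$ epochs'' claim holds trivially with $O(1) = 1$; if one wants additional margin, a union bound over any constant number of epochs preserves the high-probability guarantee. The only mild subtlety is making sure the constants $\eps, \eps_f, \eps'$ from Lemma~5.5 are chosen so that $c$ above is strictly positive, but this is guaranteed by the hypotheses $\eps_f < \eps < 1/3$ inherited from Lemma~5.5, so nothing new needs to be verified. I do not anticipate a real obstacle; the corollary is essentially bookkeeping on top of Lemma~5.5.
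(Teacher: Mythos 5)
Your proposal is correct and takes essentially the same route as the paper: invoke Lemma~\ref{lem:majority-honest} to get the failure probability $O\!\left(\exp(-\eps'^2(1-\eps_f)\log^2 n/9(1-\eps))\right)$, then observe that since $\eps,\eps',\eps_f$ are fixed constants, $\exp(-\Theta(\log^2 n)) = n^{-\Theta(\log n)}$ is $o(n^{-c})$ for every constant $c$. The only minor extras you add (the explicit union bound over the three committees and the remark that the $\mathrm{negl}(\secp)$ term is vacuous) are harmless since Lemma~\ref{lem:majority-honest} already quantifies over all committees.
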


\begin{proof}
By Lemma~\ref{lem:majority-honest}, the probability that the conditions of this corollary are satisfied given constants $\eps$, $\eps_f$, and $\eps'$ is $1 - O\left(\exp\left(-\eps'^2 (1-\eps_f)\log^2 n/9(1-\eps)\right)\right)$. Since $\eps$, $\eps'$, and $\eps_f$ are constants, the probability that the conditions of this corollary are satisfied is $1 - o\left(\frac{1}{n^c}\right)$ for any constant $c \geq 1$.
\end{proof}

\begin{lemma}\label{lem:bad-dominate}
After $O(\log^c n)$ epochs, for any constant $c \geq 1$,
the probability that the result of the selection of replicas
for committees gives $2(1+\eps_m) \eps_f \left(\frac{2\log^2n}{3(1-\eps)}\right)
+ (1+\eps_m)(1-\eps_f)\left(\frac{2\log^2 n}{3(1-\eps)}\right)
< \frac{4\log^2 n}{3}$ votes for all committees of an epoch is
$1 - O\left(\log^c n \exp\left(-\frac{2\eps_m^2 (1-\eps_f)
\log^2 n}{9(1-\eps)}\right)\right)$ for constants $0 < \eps_m < 1, 0< \eps_f< \eps, 0<\eps<1/3$.
\end{lemma}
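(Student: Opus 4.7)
The plan is to prove the claim first at the level of a single committee within a single epoch, then amplify by two successive union bounds (over the $O(1)$ committees per epoch and the $O(\log^c n)$ epochs). Let $E := \frac{2\log^2 n}{3(1-\eps)}$, and for a fixed committee in a fixed epoch let $B$ and $H$ denote the numbers of Byzantine and honest replicas whose VRF output lies below the committee-membership threshold $\frac{2\log^2 n}{3(1-\eps)n}$. By the pseudorandomness of $\cryptosort$, each of the $n$ replicas is selected into the committee independently with that probability, so $\mathbb{E}[B] \leq \eps_f E$ and $\mathbb{E}[H] \leq (1-\eps_f) E$.

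Next, I would apply the multiplicative Chernoff upper tail of \cref{def:chernoff} with deviation $\eps_m$ separately to $B$ and $H$, yielding $\Pr[B > (1+\eps_m)\eps_f E] \leq \exp\!\left(-\frac{2\eps_m^2 \eps_f \log^2 n}{9(1-\eps)}\right)$ and $\Pr[H > (1+\eps_m)(1-\eps_f) E] \leq \exp\!\left(-\frac{2\eps_m^2 (1-\eps_f) \log^2 n}{9(1-\eps)}\right)$. On the complement of both events, the maximum vote count chargeable to this committee is at most $2B + H$, where the factor of $2$ in front of $B$ accounts for a Byzantine member equivocating on two distinct proposals; substituting the Chernoff bounds gives $2B + H \leq 2(1+\eps_m)\eps_f E + (1+\eps_m)(1-\eps_f)E = (1+\eps_m)(1+\eps_f)E$. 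A short algebraic check, enabled by $\eps_f < \eps < 1/3$ and $\eps_m$ sufficiently small, forces $(1+\eps_m)(1+\eps_f) < 2(1-\eps)$, which is precisely the inequality needed to conclude $(1+\eps_m)(1+\eps_f)E < \frac{4\log^2 n}{3}$, matching the deterministic inequality highlighted in the lemma statement.

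Finally, I would union-bound the per-committee failure probability over the constantly many committees per epoch (voting, pre-commit, commit) and then over the $O(\log^c n)$ epochs under consideration. Both Chernoff terms are of the form $\exp(-\Omega(\log^2 n))$, so their sum can be absorbed into a single $\exp$ of the shape quoted in the lemma, and the double union bound contributes exactly the $\log^c n$ multiplicative prefactor appearing in the claimed failure probability $O\!\left(\log^c n \exp\!\left(-\frac{2\eps_m^2 (1-\eps_f)\log^2 n}{9(1-\eps)}\right)\right)$.

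The probabilistic machinery (two Chernoff applications and two union bounds) is routine; the main obstacle is bookkeeping of the constants — specifically, verifying that $\eps,\eps_f,\eps_m$ can be jointly chosen so that $(1+\eps_m)(1+\eps_f) < 2(1-\eps)$ genuinely holds in the admissible regime $\eps_f<\eps<1/3$, and reconciling the two Chernoff exponents (one proportional to $\eps_f$, one to $1-\eps_f$) with the single exponent reported in the lemma's bound.
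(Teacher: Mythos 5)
Your argument matches the paper's almost step for step: the paper likewise applies Chernoff separately to the Byzantine and honest committee counts (the paper bounds the Byzantine \emph{vote} count $2B$ directly, getting exponent $\tfrac{4\eps_m^2\eps_f\log^2 n}{9(1-\eps)}$ where you bound $B$ and double, getting $\tfrac{2\eps_m^2\eps_f\log^2 n}{9(1-\eps)}$ --- a constant-factor discrepancy that is immaterial), takes a union bound over the $O(\log^c n)$ epochs, and closes with the same algebraic reduction you give, $(1+\eps_m)(1+\eps_f) < 2(1-\eps)$, which the paper writes as $\eps_m < \tfrac{2(1-\eps)}{1+\eps_f}-1$ and justifies from $\eps_f<\eps<1/3$ exactly as you propose. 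The ``exponent reconciliation'' worry you raise at the end is a genuine cosmetic slip in the paper --- for $\eps_f<1/3$ the Byzantine-side exponent is actually the smaller of the two, so the quoted failure bound should carry $\eps_f$ rather than $1-\eps_f$ --- but both exponents are $\Theta(\log^2 n)$, so the asymptotic $1-O\bigl(\log^c n\,\exp(-\Omega(\log^2 n))\bigr)$ conclusion, and everything downstream, is unaffected.
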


\begin{proof}
Suppose there exists at least two proposals made by leaders. By the Chernoff bound,
the probability that Byzantine replicas cast $> 2(1+\eps_m) \eps_f
\left(\frac{2\log^2n}{3(1-\eps)}\right)$ votes
is bounded by $\exp(-4\eps_m^2 \eps_f \log^2 n/9(1-\eps))$, and the probability
that honest replicas cast $> (1+\eps_m)(1-\eps_f)\left(\frac{2\log^2 n}{3(1-\eps)}\right)$
votes is bounded by $\exp(-2\eps_m^2 (1-\eps_f) \log^2 n/9(1-\eps))$. By the union bound,
the probability that both types of votes are upper bounded is then
$1 - O\left(\log^c n \exp\left(-\frac{2\eps_m^2 (1-\eps_f) \log^2 n}{9(1-\eps)}\right)\right)$.
What remains to be shown is that there exist constants such that
$2(1+\eps_m) \eps_f \left(\frac{2\log^2n}{3(1-\eps)}\right) +
(1+\eps_m)(1-\eps_f)\left(\frac{2\log^2 n}{3(1-\eps)}\right) < \frac{4\log^2 n}{3}$.
Since $\eps_f < \eps$ and $\eps < 1/3$, we solve for $\eps_m$ from the expressions
to obtain $\eps_m < \frac{2(1-\eps)}{1+\eps_f} - 1$. Since $\eps_f < \eps < 1/3$,
$\frac{2(1-\eps)}{1+\eps_f} > 1$.
Thus, there exist constants $0 < \eps_m < 1, \eps_f < \eps, \eps < 1/3$ such that the inequality is satisfied.
\end{proof}

\begin{corollary}\label{cor:bad-dominate}
With high probability, after $O(\log^c n)$ epochs for any $c > 0$,
no epoch has all three committees have $\geq \frac{4\log^2 n}{3}$ votes.
\end{corollary}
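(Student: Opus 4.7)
The plan is to derive the corollary directly from Lemma~\ref{lem:bad-dominate} by verifying that the error bound stated there satisfies the paper's definition of ``with high probability,'' namely failure probability at most $1/n^k + \text{negl}(\secp)$ for every constant $k$.

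First, fix constants $0 < \eps_m < 1$ and $0 < \eps_f < \eps < 1/3$ as guaranteed to exist by Lemma~\ref{lem:bad-dominate}, and set $\alpha := \frac{2\eps_m^2(1-\eps_f)}{9(1-\eps)} > 0$. The lemma then bounds the probability that some epoch among the $O(\log^c n)$ epochs under consideration has all three committees reaching $\geq \frac{4\log^2 n}{3}$ votes by $O\bigl(\log^c n \cdot \exp(-\alpha \log^2 n)\bigr)$.

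Next, rewrite $\exp(-\alpha \log^2 n) = n^{-\alpha \log n}$. For every fixed constant $k > 0$ and all sufficiently large $n$, one has $\alpha \log n > k + c + 1$, so $\log^c n \cdot n^{-\alpha \log n} = o(n^{-k})$. Since this holds for every $k$, the failure probability meets the paper's ``high probability'' definition. The $\text{negl}(\secp)$ slack in that definition is where one can, if desired, absorb the standard VRF/VDF soundness error used implicitly when counting only protocol-valid votes.

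The only mild subtlety, rather than a real obstacle, is to ensure that the union bound implicit in Lemma~\ref{lem:bad-dominate} covers all three committees (vote, pre-commit, commit) across all $O(\log^c n)$ epochs simultaneously. Since there are only three committee types per epoch, this contributes just a constant factor of $3$, which is absorbed into the $O(\cdot)$ in front of $\log^c n \cdot \exp(-\alpha \log^2 n)$ and does not affect the super-polynomial decay. Hence the bound remains $o(n^{-k})$ for every constant $k$, and the corollary follows.
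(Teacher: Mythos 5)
Your proof is correct and takes essentially the same approach as the paper: both simply invoke Lemma~\ref{lem:bad-dominate} and observe that the $\exp(-\Theta(\log^2 n))$ factor decays faster than any inverse polynomial in $n$, so the $\log^c n$ prefactor is irrelevant. You spell out the algebra (rewriting $\exp(-\alpha\log^2 n)=n^{-\alpha\log n}$ and noting the union bound over the three committees per epoch) more explicitly than the paper's one-sentence proof, but the underlying argument is identical.
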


\begin{proof}
Lemma~\ref{lem:bad-dominate} shows that $\geq \frac{4\log^2 n}{3}$
votes occur with $O\left(\log^c n \exp\left(-\frac{2\eps_m^2 (1-\eps_f) \log^2 n}{9(1-\eps)}\right)\right)$ probability
for constants $\eps_m, \eps_f, \eps$ which means that
with probability $1-o\left(\frac{1}{n^c}\right)$ for all constants $c$, this does not occur.
\end{proof}

Using our lemmas above, we now prove the safety and liveness of our consensus protocol.

\begin{lemma}\label{lem:proposal-safety}
Our protocol maintains safety with high probability.
\end{lemma}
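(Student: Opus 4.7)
The plan is to argue, by contradiction, that if two forever-honest replicas commit to distinct proposals $v_1 \neq v_2$, then either (i) within a single epoch some committee must have accumulated more valid votes than is compatible with~\cref{cor:bad-dominate}, or (ii) some replica must have produced two distinct valid VDF outputs within a single epoch, contradicting~\cref{lem:parameter-setting}. The central mechanism is that every `$\mathtt{vote}$', `$\mathtt{precommit}$', and `$\mathtt{commit}$' message is bundled with a VDF output whose input includes $(\ell,\vrfout{\cdot}{i},S)$ where $S$ is the specific proposal being voted on. By the parameter-setting established earlier, once a committee member has sent a VDF-certified message for proposal $S$, there is not enough time remaining in epoch $\ell$, even following adaptive corruption, to finish a second VDF on a different proposal $S'$ before the epoch ends. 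Hence each replica contributes at most one valid committee message per phase per epoch, except with probability negligible in $\secp$.

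First I would handle the single-epoch case. Suppose both commits happen in the same epoch $\ell$. Each commit requires $\geq \frac{2\log^2 n}{3}$ distinct valid `$\mathtt{commit}$' messages tagged with the corresponding proposal. Because each committee member can produce at most one such VDF certificate in epoch $\ell$, the supports of the two commit quorums are disjoint, so the commit committee of epoch $\ell$ must have produced at least $\frac{4\log^2 n}{3}$ valid votes in total. By~\cref{cor:bad-dominate} this event has probability $1/n^{\omega(1)}$, and the same bound handles the voting and precommit committees by identical reasoning, giving a contradiction in this case.

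Next I would handle the cross-epoch case, where $v_1$ is committed in epoch $\ell_1$ and $v_2 \neq v_1$ in some epoch $\ell_2 > \ell_1$. Here I would follow the standard three-phase ``chain of quorums'' argument inherited from HotStuff, but instantiated with our committee sizes. Committing $v_1$ in epoch $\ell_1$ implies that at least $\frac{2\log^2 n}{3}$ replicas sent valid `$\mathtt{precommit}$' messages for $v_1$, of which, by~\cref{cor:majority-honest}, a strict majority are honest. These honest replicas are ``locked'' on $v_1$ in the sense that, in any later epoch $\ell' > \ell_1$, for a new proposal $v' \neq v_1$ to acquire a valid `$\mathtt{precommit}$' quorum, the adversary would have to fabricate extra VDF-certified votes from replicas that have already spent their one-VDF-per-phase budget or corrupt enough additional replicas within a single epoch to compute fresh commit-phase VDFs; the first is ruled out by~\cref{lem:parameter-setting}, and the second reduces to the single-epoch case of~\cref{cor:bad-dominate} applied to epoch $\ell_2$. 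A union bound over the $O(\poly\log n)$ epochs that the protocol runs preserves the high-probability guarantee.

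The main obstacle I expect is the cross-epoch step, in particular making the informal ``locking'' argument precise given that the protocol as written resets $S_i, S_i', S_i''$ at the end of each epoch. The delicate point is showing that the VDF-based one-vote-per-phase constraint, together with the fact that an adversary can adaptively corrupt at most $f$ replicas overall but still needs to compute a fresh commit-phase VDF within the bounded epoch length to introduce a conflicting certified vote, is enough to rule out two disjoint commit quorums arising in different epochs. Once this reduction to a single-epoch committee-size violation is made rigorous, the final probability bound follows by invoking~\cref{cor:bad-dominate} and~\cref{lem:parameter-setting} and taking a union bound over epochs and over the negligible-in-$\secp$ failure of the VDF soundness.
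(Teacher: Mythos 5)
Your intra-epoch argument is essentially the paper's own: the paper likewise invokes \cref{cor:bad-dominate} to conclude that w.h.p.\ no epoch's committees accumulate $\geq \frac{4\log^2 n}{3}$ votes, so two disjoint $\geq \frac{2\log^2 n}{3}$ quorums for conflicting proposals cannot both form in the same epoch. The paper phrases the disjointness more conservatively, allowing Byzantine committee members to double-vote and absorbing that into the concentration bound of \cref{lem:bad-dominate}, whereas you derive disjointness from the per-phase VDF budget of \cref{lem:parameter-setting}; either route is fine and they reach the same conclusion.

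The cross-epoch case is where you depart from the paper, and it is where the proposal breaks down. You are right that the paper's proof does not explicitly treat the situation where $v_1$ is committed in epoch $\ell_1$ and $v_2 \neq v_1$ in some $\ell_2 > \ell_1$, but the HotStuff-style ``locking'' argument you sketch has no footing in this protocol. Step~3 of the protocol in \cref{fig:proposal-consensus} resets $S_i, S'_i, S''_i$ to $\emptyset$ at every epoch boundary, so an honest replica that precommitted $v_1$ in epoch $\ell_1$ carries nothing into epoch $\ell_2$; the epoch-$\ell_2$ committees are freshly sampled via $\cryptosort(\sk{i}, \ell_2, \cdot)$ and need not intersect those of $\ell_1$; and the one-VDF-per-phase constraint of \cref{lem:parameter-setting} is a \emph{per-epoch} bound (the VDF inputs contain the epoch number $\ell$), so every replica enters epoch $\ell_2$ with a fresh budget regardless of what it did in $\ell_1$. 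Consequently the adversary does not need to fabricate votes from replicas that ``already spent their budget'' nor adaptively corrupt anyone to obtain a conflicting quorum in $\ell_2$: a majority-honest committee freshly drawn for $\ell_2$ can in principle vote, precommit, and commit a different proposal from scratch. Your closing paragraph concedes exactly this difficulty without resolving it, so as written the proposal does not establish the lemma. Closing the cross-epoch gap would require a different mechanism than locking --- for instance, an argument that the epoch length $X$ and the $4\Delta$ slack in \cref{lem:parameter-setting} ensure that whenever any honest replica observes a commit quorum inside epoch $\ell$, every honest replica observes it before the epoch boundary, making commitment all-or-nothing within an epoch.
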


\begin{proof}
By Corollary~\ref{cor:bad-dominate}, with high probability, after $O(\log^c n)$ rounds, no round has all three committees have $\geq \frac{4\log^2 n}{3}$ votes. We prove that this means that no two different proposals will be committed by different honest replicas. During the round in which there are $< \frac{\log^2 n}{3}$ Byzantine replicas and $\leq \frac{2\log^2n}{3}$ honest replicas, it is impossible to reach the threshold of $\geq \frac{2\log^2{n}}{3}$ votes on \emph{two} different proposals even if all Byzantine replicas double vote. Furthermore, no honest replica will ever vote for two different proposals in the same round. Without the necessary votes, no two different proposals will be committed by two different honest replicas. Thus, safety is maintained with high probability.
\end{proof}

\begin{lemma}\label{lem:liveness}
Our protocol reaches consensus in $O(\log n)$ epochs with high probability.
\end{lemma}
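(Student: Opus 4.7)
The plan is to combine the three preceding probabilistic lemmas via a union bound to show that within $O(\log n)$ epochs there exists at least one epoch in which the hypotheses of \cref{lem:parameter-setting} and \cref{cor:slow-down} are simultaneously satisfied, at which point the leader's proposal is committed by every honest replica and consensus is reached.

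First, I would invoke \cref{lem:so-far-honest-leader} to conclude that after $c\log n$ epochs, with probability at least $1 - O(1/n^c)$, there is some epoch $\ell^*$ among the first $c\log n$ epochs in which there is exactly one leader and that leader is honest. Next, I would apply \cref{cor:majority-honest} together with a union bound over the first $c\log n$ epochs: since in each individual epoch the committee-composition conditions (at least $\tfrac{2\log^2 n}{3}$ honest committee members and strictly fewer than $\tfrac{\log^2 n}{3}$ Byzantine committee members in each of the three committees) fail with probability at most $O\bigl(\exp(-\Theta(\log^2 n))\bigr)$, a union bound over $c\log n$ epochs shows these conditions hold in \emph{every} one of the first $c\log n$ epochs with probability at least $1 - O(\log n \cdot \exp(-\Theta(\log^2 n))) = 1 - o(1/n^c)$ for every constant $c$. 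A further union bound combines these two events so that, with high probability, epoch $\ell^*$ has both a unique honest leader and good committee composition.

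Once both conditions hold in epoch $\ell^*$, \cref{lem:parameter-setting} (or \cref{cor:slow-down}, to remove the $\nslow \approx \nfast$ restriction) guarantees that the parameters $\diff{m}, \diff{v}, \diff{p}, \diff{c}, X$ are set so that (i) the adversary cannot, within the time budget of an epoch, corrupt the leader or committee members and produce a second valid VDF output on a conflicting proposal/vote except with probability negligible in $\secp$, and (ii) the honest leader's proposal propagates through the `$\mathtt{vote}$', `$\mathtt{precommit}$', and `$\mathtt{commit}$' rounds and accumulates the required $\tfrac{2\log^2 n}{3}$ valid messages at each stage. Hence every honest replica commits the leader's proposal in epoch $\ell^*$.

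Finally, by \cref{lem:constant-rounds-epoch} each epoch consists of $O(1)$ communication rounds, so $O(\log n)$ epochs amount to $O(\log n)$ communication rounds. Combining the failure probabilities (good leader event, good committee events over $O(\log n)$ epochs, and the $\mathrm{negl}(\secp)$ event that the adversary breaks a VDF) via a final union bound yields termination with probability at least $1 - (1/n^c + \mathrm{negl}(\secp))$ for every constant $c$, which matches our high-probability notion. The main technical obstacle is ensuring that the union bound over the $O(\log n)$ epochs needed for a good-leader epoch does not degrade the high-probability committee guarantees; this works because the committee-composition failure probability is exponentially small in $\log^2 n$, which easily absorbs the $O(\log n)$ factor.
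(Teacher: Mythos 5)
Your proposal is correct and follows essentially the same approach as the paper's own one-line proof: combine \cref{lem:so-far-honest-leader}, \cref{cor:majority-honest}, and \cref{cor:slow-down} (with \cref{lem:constant-rounds-epoch} for the round count) to find a good epoch within $O(\log n)$ epochs. You are in fact more careful than the paper in one respect --- you explicitly union-bound the committee-composition event over all $O(\log n)$ epochs so that the good-leader epoch also has good committees, which the paper leaves implicit; this is a legitimate and welcome tightening, not a departure.
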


\begin{proof}
By Lemma~\ref{lem:so-far-honest-leader} and Corollary~\ref{cor:majority-honest}, there will be at least one round after $O(\log n)$ rounds where the conditions of Corollary~\ref{cor:slow-down} will be satisfied.
Therefore, after $O(\log n)$ rounds, with high probability, our protocol reaches consensus.
\end{proof}

\begin{proof}[Proof of Theorem~\ref{thm:general-final}]
The safety and liveness of our protocol are proven by Lemmas~\ref{lem:proposal-safety} and~\ref{lem:liveness}.
Furthermore by Lemma~\ref{lem:constant-rounds-epoch}, each epoch consists of $O(1)$ rounds; thus,
our protocol terminates in $O(\log n)$ rounds with high probability.
Since committees have size $O(\log^2 n)$, our protocol uses
at most $O\left(\log^3 n\right)$ multicasts.
\end{proof}

\section{Sublinear Clock Synchronization with Adaptive Adversaries and Randomly Dropped Messages}\label{sec:sublinear-bba}

Given the previous impossibility result in~\cref{sec:impossible-bba}, we cannot hope
to achieve BBA with sublinear multicasts with high probability
in the partially sychronous (GST) network model as defined in~\cite{dls88}
with adaptive adversaries.  Instead, we use a slightly
different network model. We define this model as the \emph{partially
synchronous with randomly dropped messages} model. This model may have
practical applications as a model that represents an unreliable/faulty network.

\begin{definition}[Partially Synchronous with Randomly Dropped Messages Model]\label{def:psrdm}
For the asynchronous phase of the \emph{partially synchronous with randomly dropped messages model}, the adversary can only choose to
perform the following actions on the network:
\begin{enumerate}
\item drop messages with some constant probability $0 \leq p < 1$ (i.e.\ each individual message has a probability $p$ of being dropped),
\item delay messages by delay at most $\Delta$, and
\item delay processors by delay at most $\Phi$.
\end{enumerate}

All other characteristics of the model follow that of the partially synchronous (GST) model.
\end{definition}

In this network model, we prove the following theorem:

\begin{theorem}\label{thm:sublinear-bba}
    There exists a communication-efficient
    BBA protocol in the partially synchronous with randomly dropped messages model
    that reaches agreement with high probability with respect to $n$ and $\secp$.
\end{theorem}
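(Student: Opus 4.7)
The plan is to construct the BBA protocol in two layers: a clock synchronization layer that keeps forever-honest replicas aligned on epoch boundaries despite the asynchronous phase, and a voting layer that adapts the VDF-gated construction from \cref{fig:proposal-consensus} to the new network model defined in \cref{def:psrdm}. For the clock layer I would use a HotStuff-style three-round view-change driven by VDF-gated, $\cryptosort$-elected committees of size $\Theta(\log^2 n)$; the purpose is only to agree on when an epoch begins and ends, so that the VDF difficulty parameters from \cref{lem:parameter-setting} (now expressed in terms of both $\Delta$ and the processor delay $\Phi$) can be enforced after GST. Inside each epoch, I would then run the BBA-restricted version of the \cref{sec:general-consensus} protocol on the binary input, with the same leader/committee election machinery and the same $\frac{2\log^2 n}{3}$ vote threshold.

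For safety, the argument closely mirrors \cref{lem:proposal-safety} but must be extended across both the asynchronous and synchronous phases. By \cref{def:vdfs}, an adaptive adversary that corrupts a leader or committee member after it has multicast a message cannot produce a second VDF-gated vote before the next epoch boundary; combined with the fact that random drops are chosen independently of replica identity and of future corruptions, a Chernoff bound identical in form to \cref{lem:majority-honest} shows that in any epoch whose quorum threshold is met, the committee contains at least the required honest majority with overwhelming probability in $n$. Hence no two distinct bits can simultaneously accumulate $\frac{2\log^2 n}{3}$ commit-votes in the same epoch, and the HotStuff three-round commit rule (as used in \cref{sec:general-consensus}) rules out cross-epoch safety violations in exactly the same way as before.

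For liveness and multicast-efficiency, I would bound the number of epochs needed to commit after GST. By \cref{lem:so-far-honest-leader} and \cref{cor:majority-honest}, within $O(\log n)$ post-GST epochs there is an epoch with a unique honest leader and honest-dominated committees; in such an epoch, each of the $\Theta(\log^2 n)$ votes survives the random drop independently with probability $1-p$, so another Chernoff bound yields that the quorum is met with probability $1 - \exp(-\Omega(\log^2 n))$. Since $p < 1$ is a constant, $O(\poly\log n)$ epochs suffice with high probability. Each epoch uses only $O(\poly\log n)$ multicasts, so total multicast complexity is $O(\poly\log n)$ and the protocol terminates in $O(\poly\log n)$ rounds after GST.

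The main obstacle will be ruling out the fast-forwarding attack underlying \cref{lem:lower-partially-sync-bba} in this weaker model. The crux is that in \cref{def:psrdm} the adversary no longer chooses which messages are dropped; conditional on any particular epoch having an adversary-favorable committee, the probability that only its messages survive the random drop is no larger than the corresponding probability for a good-committee epoch, so the adversary cannot bias the sequence of successfully completed epochs toward compromised ones. Coupled with the VDF rate limit that prevents equivocation within a single epoch even under adaptive corruption, this symmetry is what neutralizes the fast-forwarding attack and yields agreement with high probability in $n$ and $\secp$.
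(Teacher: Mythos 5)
Your approach differs from the paper's in a fundamental way: you graft the VDF machinery from \cref{sec:general-consensus} onto the partially synchronous setting, whereas the paper's protocol (\cref{alg:sublinear-bba-alg}, \cref{alg:sublinear-clock-alg}) uses \emph{no} VDFs at all. For BBA the paper relies on \emph{vote-specific eligibility} in the style of \cite{cc19} --- each call to $\cryptosort$ is keyed on the proposed bit $b$, so a corrupted committee member eligible for bit $0$ is (w.h.p.) simply not eligible for bit $1$, and there is no grinding problem because the input alphabet has size two. To handle random drops, the paper has honest replicas repeat each multicast $1/p$ or $1/(1-p)$ times and works with committees of size $O(\log^d n)$. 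Your VDF route is plausible in principle here (unlike standard GST-style partial synchrony, \cref{def:psrdm} caps message delay by $\Delta$ and processor delay by $\Phi$ even during the asynchronous phase, so VDF difficulty can still be calibrated), but it is heavier machinery than the problem requires, and your write-up leaves two substantive holes.

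First, you never bound the length of the asynchronous period. Your fast-forwarding argument correctly observes that the adversary cannot \emph{select} which messages are dropped, and hence cannot steer the schedule toward a compromised epoch. But that symmetry argument does not prevent bad committees from \emph{occurring}: across an unboundedly long asynchronous phase, some epoch will, with probability approaching $1$, have a Byzantine-dominated committee whose honest competitors happen to be dropped. The paper handles this explicitly by assuming $T = O\bigl(3^{\log^c n}\bigr)$ asynchronous rounds and scaling the committee parameter to $\log^d n$ with $d > c$, so that the union bound over $T$ epochs in \cref{lem:committee-dominates} (and then \cref{lem:smaller-epoch}) is still $o(1/n^a)$. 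Without tying the committee size to a bound on $T$, your safety claim for the pre-GST phase does not go through.

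Second, your liveness/quorum calculation glosses over the calibration between the drop probability $p$, the committee size, and the quorum threshold. If the committee is $\Theta(\log^2 n)$ and the threshold is $\frac{2\log^2 n}{3}$, then after thinning the honest votes by the independent survival probability $1-p$, the \emph{expected} number of delivered honest votes can fall below the threshold for any nontrivial constant $p$, and Chernoff then concentrates on the wrong side of the quorum. You need to either inflate the committee/eligibility parameter by a $1/(1-p)$ factor, lower the threshold accordingly (and re-derive safety for the reduced quorum), or, as the paper does, repeat each multicast $\Theta(1/p)$ or $\Theta(1/(1-p))$ times so that the delivered vote count stays comparable to the committee size; any of these would also have to be threaded back through the VDF timing constraints of \cref{lem:parameter-setting}, since repetitions cost wall-clock time inside the epoch window $X$.
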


Our protocol is based on an adaptation, provided in~\cref{sec:bba},
of the linear-multicast protocol given in~\cite{cc19}
for the partially synchronous model. Specifically, we formulate a novel clock synchronization
procedure for both the partially synchronous model (using linear multicasts) and the partially
synchronous with randomly dropped messages model (using sublinear multicasts). This clock
synchronization procedure uses the three-step commit rule of HotStuff~\cite{hotstuff} and
may be of independent interest for use in other protocols. The details on our linear multicast
protocol in the partially synchronous (GST) model can
be found in~\cref{sec:bba} and details on our sublinear multicast protocol in
the partially synchronous with randomly dropped messages model can be found in~\cref{sec:sublinear-bba}.

We first assume in our protocols that knowledge of $p$ is given at the time of formulation of the protocol (i.e.\ $p$ can be assumed to be a known constant in our protocols).

Keeping such challenges described in \cref{sec:bba} in mind, we first provide our revised round synchronization protocol below.

\begin{figure}
\begin{mdframed}[hidealllines=false,backgroundcolor=gray!30]
\paragraph{Protocol for replica $i$:}
\begin{enumerate}
  \item Replica $i$, set $C_i = 0$.
  \item While protocol not terminated, for an honest replica $i$:
  \begin{enumerate}
      \item \underline{Decide whether part of the \emph{round proposal committee}, i.e.\ check if $\cryptosort(\sk{i}, C_i) < D_0$.}
  	\item Multicast $C_i + 1$, certificate for $C_i$ (when $C_i=0$, no certificate is needed), \underline{and proof of committee membership each repeatedly $1/p$ times}.
  	\item \underline{Wait $\Delta + \Phi$ time or after receiving $>3\lambda$ round proposals}. Record all valid rounds seen (i.e.\ rounds $R$ with a valid certificate for $R-1$). If $i$ sees a certificate for a round $j > C_i$, $i$ updates its $C_i \leftarrow j$. \underline{If the number of rounds seen is $> 3\lambda$}, determine the smallest round $R$ greater than $C_i$, is smaller than $T = O\left(3^{\log^c n}\right)$, and with a valid certificate for $R-1$; let this round be $S$.
    \item \underline{If $\cryptosort(\sk{i}, \text{`tentative'}, S) < D_0$,
        then $i$ is a member of the} \underline{\emph{tentative round voting comittee}}:
        then multicast a vote for $S$ \underline{repeatedly $\frac{1}{1-p}$ times}.
  	\item Count votes. If any round receives \underline{$2\lambda+ 1$} votes for $S$,
        set \emph{tentative round} to this round.
  	Set $T_i$ to be the new tentative round.
    \item \underline{If $\cryptosort(\sk{i}, \text{`pre-confirmed'}, T_i) < D_0$,
        then $i$ is a member of the} \underline{\emph{pre-confirmed round voting comittee}}:
        then multicast a vote for $T_i$ \underline{repeatedly $\frac{1}{1-p}$ times}.
  	\item Count votes. If $T_i$ receives \underline{$2\lambda+ 1$} votes, set \emph{pre-confirmed round} to this round.
  	Set $C_i$ to be the new pre-confirmed round.
\item \underline{If $\cryptosort(\sk{i},\text{`confirmed'}, C_i) < D_0$,
    then $i$ member of \emph{confirmed round voting committee}}:
    then multicast vote for pre-confirmed round $C_i$ \underline{repeatedly $1/p$ times}.
  	\item Upon receiving \underline{$2\lambda + 1$} votes for pre-confirmed round $C_i$,
        set \emph{confirmed round} to this round. Perform the rest of the protocol only if confirmed round is set.
  	\item Timeout and restart with $C_i + 1$ again if any of the above steps take longer than $2\Delta + 2\Phi$ time
        or if any of $S$, $T_i$, or $C_i$ does not receive enough votes.
  \end{enumerate}
\end{enumerate}
\end{mdframed}
\caption{Sublinear Multicast Clock Synchronization in
Partially Synchronous with Randomly Dropped Messages Model.
Underlined portions of the protocol are instructions unique to
the partially synchronous with randomly dropped messages model.}\label{alg:sublinear-clock-alg}
\end{figure}

\subsection{Sublinear Clock Synchronization Protocol Analysis}

We now analyze our sublinear round protocol (which is also an honest total multicast protocol) that synchronizes to a round
that is not advantageous for the adversary, keeping in mind the challenges described in \cref{sec:bba}.

We first show that an adversarially dominated committee is unlikely.
Suppose that the asynchronous part of the protocol lasts at most $T = O\left(3^{\log^c n}\right)$ rounds, then using this, we
can prove the following lemma:

\begin{lemma}\label{lem:committee-dominates}
Suppose $D_0 = \frac{2\log^d n}{3(1-\eps)n}$ (where $d > c$) for some $\eps < 1/3$ and the number of Byzantine replicas, $f$,
in the network is $f <\eps' n$ where $\eps' < \eps$, the probability that a
round where the adversary controls at least $\frac{(1+\eps'')\log^c n}{3}$
for $\eps'' < 1$ replicas in the
committee exists within $T = O\left(3^{\log^c n}\right)$ rounds is $O\left(\frac{1}{n^{d-1}}\right)$.
\end{lemma}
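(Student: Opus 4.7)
The plan is to prove the bound on a single round via a multiplicative Chernoff argument, and then take a union bound across all $T = O(3^{\log^c n})$ rounds of the asynchronous phase. Fix one round. By the cryptographic sortition properties recalled in \cref{sec:prelims}, the indicator that replica $j$ is selected into the round's committee is an independent Bernoulli variable with success probability $D_0$ (on distinct inputs for distinct rounds), so the expected number of Byzantine replicas in a single committee is at most
\[
\mu \;\leq\; f \cdot D_0 \;<\; \eps' n \cdot \frac{2\log^d n}{3(1-\eps)n} \;=\; \frac{2\eps'\log^d n}{3(1-\eps)}.
\]

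Next I would write the adversarial target $\frac{(1+\eps'')\log^c n}{3}$ as $(1+\gamma)\mu$. Because $\eps'<\eps<1/3$, the base ratio $\frac{(1+\eps'')(1-\eps)}{2\eps'}$ is bounded away from $1$, and the relation between the exponents $c$ and $d$ forces $\gamma$ to be at least of constant size (and in fact to grow with $n$). I would then invoke the multiplicative Chernoff bound of \cref{def:chernoff}, using the $\exp(-\gamma^2\mu/3)$ form when $\gamma=O(1)$ and the stronger $\exp(-\gamma\mu/3)$ form when $\gamma$ is super-constant, to obtain a per-round failure probability of the form $\exp(-\Theta(\log^d n))$. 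A union bound over the at most $T = O(3^{\log^c n})$ rounds then gives overall failure at most
\[
3^{\log^c n}\cdot \exp\bigl(-\Theta(\log^d n)\bigr) \;=\; \exp\bigl(O(\log^c n) - \Theta(\log^d n)\bigr),
\]
which, since $d > c$, collapses to $\exp(-\Theta(\log^d n))$. For any $d>1$ this expression is $n^{-\omega(1)}$ and hence comfortably $O(n^{-(d-1)})$, as claimed.

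The main obstacle will not be the union bound itself but verifying that the role of the gap $d-c$ is preserved through the constants. Specifically, I must check that the Chernoff exponent I obtain (which is driven by $\mu = \Theta(\log^d n)$) is asymptotically larger than the $\log 3 \cdot \log^c n$ blow-up from the union bound, for the specific ranges $\eps'<\eps<1/3$ and $\eps''<1$ in the hypothesis; this is where the assumption $d>c$ is critical. A secondary subtlety is that committees in different rounds use $\cryptosort$ on different round indices so the per-round events are (at least) mutually independent; this independence must be invoked to justify applying a plain union bound rather than something stronger. Once these constants are tracked carefully, the rest is routine arithmetic.
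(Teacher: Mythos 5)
Your overall plan---bound a single round via a multiplicative Chernoff bound and then union-bound over the $T = O(3^{\log^c n})$ rounds---is the same decomposition the paper uses, and the final arithmetic ($\exp(O(\log^c n) - \Theta(\log^d n))$, collapsing because $d > c$) matches the paper's conclusion. However, there is a real error in the middle step, and it reveals an inconsistency in the lemma statement itself that you reproduced rather than caught.

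You write the threshold $\frac{(1+\eps'')\log^c n}{3}$ as $(1+\gamma)\mu$ with $\mu = \Theta(\log^d n)$, and assert that the relation $d > c$ ``forces $\gamma$ to be at least of constant size (and in fact to grow with $n$).'' This is backwards. Since the threshold has exponent $c$ and the mean has exponent $d$ with $d > c$, the threshold lies \emph{below} the mean by a factor of $\Theta(\log^{d-c} n)$, so $\gamma = \frac{(1+\eps'')(1-\eps)}{2\eps'\log^{d-c} n} - 1 \to -1$ as $n \to \infty$. The multiplicative upper-tail Chernoff bound (\cref{def:chernoff}) applies only for $\gamma > 0$; with a negative $\gamma$ the event of exceeding the threshold is actually near-certain, not rare. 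As stated, the quantity the lemma claims to bound is approximately $1$, not $O(n^{-(d-1)})$. The paper's own proof silently works with the larger threshold $(1+\eps'')\log^d n/3$ (note the exponent $d$) in the Chernoff step, then switches to $\log^c n$ only in the final sentence; the lemma statement's $\log^c n$ appears to be a typo for $\log^d n$. Under that correction, $\gamma = \frac{(1+\eps'')(1-\eps)}{2\eps'} - 1$ is a fixed positive constant (it does not grow with $n$, contrary to your parenthetical), the per-round bound $\exp(-\Theta(\log^d n))$ is valid, and the union bound goes through exactly as you describe.

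Your secondary remark about independence across rounds deserves a correction as well: a plain union bound does not require any independence between the per-round events, only that each per-round probability is small. You should invoke independence only if you want a stronger statement (e.g.\ almost-sure behavior across the whole window), which this lemma does not need.
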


\begin{proof}
The probability that any one replica joins the committee is $\frac{2\log^d n}{3(1-\eps)n}$.
Given $f < \eps' n$ Byzantine replicas where $\eps' < \eps$, the expected number of Byzantine replicas in
any committee is $\frac{2\eps' \log^d n}{3(1-\eps)}$.
Thus, by the Chernoff bound,
the probability that any committee has more than $(1+\eps'')\frac{2\eps' \log^d n}{3(1-\eps)} < (1+\eps'') \frac{\log^d n}{3}$
replicas is given by $< \exp(-\eps''^2\log^d n/9)$.
Thus, the probability that at least $\frac{(1+\eps'')\log^c n}{3}$
Byzantine replicas are in any committee
after $T$ rounds is upper bounded by
$O\left(Te^{-\frac{\eps''^2 2\eps'\log^{d} n}{9(1-\eps)}}\right) =
O\left(\frac{3^{\log^c n}}{e^{-\frac{\eps''^2 2\eps'\log^{d} n}{9(1-\eps)}}}\right) =
o\left(\frac{1}{n^{a}}\right)$ for all constant $a$.
\end{proof}

Now we bound the probability that during the asynchronous period less than $(1-\eps'')\frac{2\log^d n}{3}$
honest replicas receive the
lowest valid round that is multicasted.

\begin{lemma}\label{lem:min-proposed-round}
Suppose $D_0 = \frac{2\log^d n}{3(1-\eps)n}$ for some $\eps < 1/3$ and the number of Byzantine replicas, $f$,
in the network is $f <\eps' n$ where $\eps' < \eps$,
the probability that less than $(1-\eps'')\frac{2\log^d n}{3}$ honest replicas
in a committee receive
the minimum round that is multicasted is $O\left(\frac{1}{n^c}\right)$
for any $c > 1$.
\end{lemma}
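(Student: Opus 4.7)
The plan is to combine two applications of the Chernoff bound — one for committee composition and one for message delivery — together with a case analysis for who multicasts the minimum round. Fix an epoch, let $H$ denote the so-far honest members of the round proposal committee, and let $R_{\min}$ be the smallest round value multicasted in this epoch. First I would bound $|H|$ from below: the expected number of honest replicas selected into the committee is $(n-f)D_0 \geq (1-\eps') \cdot \frac{2\log^d n}{3(1-\eps)}$, so by the Chernoff lower tail from~\cref{def:chernoff} with a constant slack $\eps_1$,
\begin{align*}
\prob\!\left[|H| < (1-\eps_1)(1-\eps')\,\tfrac{2\log^d n}{3(1-\eps)}\right] \;\leq\; \exp\!\bigl(-\Omega(\log^d n)\bigr),
\end{align*}
which is $o(1/n^a)$ for every constant $a$ since $d > 1$ by hypothesis.

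Second, I would analyze delivery. By~\cref{def:psrdm} each multicasted copy is dropped independently with probability $p$, and by the protocol the proposer of $R_{\min}$ sends $1/p$ independent copies to each recipient. Hence for each $i \in H$, the indicator that $i$ receives at least one copy is Bernoulli with success probability at least $q := 1 - p^{1/p}$, a positive constant. Because drops across recipients and retransmissions are independent, the number of members of $H$ that receive $R_{\min}$, conditional on the first step, is a sum of $|H|$ independent Bernoullis and concentrates around $q|H|$ by a second Chernoff application, with failure probability $\exp(-\Omega(\log^d n))$. Choosing a constant $\eps''$ so that $(1-\eps_1)(1-\eps_2)\,q\,(1-\eps')/(1-\eps) \geq 1 - \eps''$ — feasible because each factor on the left is a positive constant, and the retransmission count can be boosted to push $q$ as close to $1$ as one wishes — and then chaining the two lower bounds and union-bounding the two failure events yields overall failure probability $\exp(-\Omega(\log^d n)) = O(1/n^c)$ for every $c>1$, as required.

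The main obstacle is the case in which $R_{\min}$ is proposed by a Byzantine replica, who is not obliged to honor the $1/p$-fold retransmission schedule. My plan is to observe that only copies actually sent matter: if the Byzantine proposer sends fewer copies to some honest recipients than prescribed, one replaces ``honest committee members who receive $R_{\min}$'' by those members to whom $R_{\min}$ was actually transmitted, and runs the delivery Chernoff against the copies that were in fact sent. If the adversary suppresses $R_{\min}$ to so few recipients that the resulting count drops below $(1-\eps'')\tfrac{2\log^d n}{3}$, then $R_{\min}$ fails to be the smallest round ``multicasted'' in the sense used in~\cref{sec:sublinear-bba}, contradicting the definition of $R_{\min}$, so this sub-case is vacuous. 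This definitional case analysis, rather than the Chernoff calculations themselves, is the delicate part.
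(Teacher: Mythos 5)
Your proposal is more careful than the paper's own proof and in fact catches a genuine gap in it. The paper computes the ``expected number of honest replicas in a committee that receive the minimum round'' as $\frac{2(1-\eps')\log^d n}{3(1-\eps)}$ — but that is exactly the expected number of honest committee members, so the paper is silently assuming that $1/p$ retransmissions deliver with probability $1$. As you correctly observe, the per-recipient delivery probability under the model of~\cref{def:psrdm} is only $q = 1 - p^{1/p}$, which is a constant strictly below $1$. Consequently the paper's stronger assertion — that the bound holds ``for any constant $\eps'' > 0$'' — cannot be right: the expected number of honest committee members actually receiving the round is $q\cdot\frac{2(1-\eps')\log^d n}{3(1-\eps)}$, and if $q\cdot\frac{1-\eps'}{1-\eps} < 1$ there is no concentration above $\frac{2\log^d n}{3}$ for a small $\eps''$. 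Your two-stage Chernoff decomposition (first on committee membership, then on delivery conditioned on $|H|$) is logically equivalent to the paper's single-shot Chernoff on the indicator ``in committee and received,'' but it makes the role of $q$ visible instead of erasing it.

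Two caveats on your write-up. First, ``each factor on the left is a positive constant'' does not by itself make the product at least $1-\eps''$; what actually saves you is the freedom to make $\eps_1,\eps_2$ small and the fact that $\frac{1-\eps'}{1-\eps} > 1$, which compensate for $q<1$ provided $q > \frac{1-\eps}{1-\eps'}$ — and that is a real constraint relating $p$ to $\eps,\eps'$, not a free choice. Your fallback of boosting the retransmission count to $K/p$ genuinely resolves it, but that is a protocol modification, not merely a reparametrization of the analysis, and should be flagged as such. Second, the Byzantine-proposer case you isolate is indeed a hole in the paper's proof (the paper never says who multicasts $R_{\min}$), but your resolution — declaring the under-broadcast round not to count as ``multicasted'' — is a definitional patch rather than an argument; a Byzantine proposer that sends a single genuine multicast (one copy, no retransmission, with a valid certificate) is still multicasting in the ordinary sense, yet its per-recipient delivery probability is only $1-p$, not $1-p^{1/p}$, which can push the expectation below the threshold. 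The cleaner fix is to restrict the claim to the minimum round multicast by an honest replica (which is what actually matters for the downstream argument in~\cref{lem:smaller-epoch}, since a certificate for a lower Byzantine-proposed round would require honest votes that by this lemma will instead go to the honest proposer's round), and you should say that explicitly rather than relying on the word ``multicasted'' to do the work.
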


\begin{proof}
The probability that any message is dropped is $p$.
But each honest replica multicasts each message $1/p$ times.
Thus, the expected number of honest replicas in a committee that
receive the minimum round that is multicasted is $\frac{2(1-\eps')\log^{d}n}{3(1-\eps)}$.
Since $\eps' < \eps < 1/3$, this value is lower bounded by $\frac{2\log^d n}{3}$.
Hence, by the Chernoff bound,
the probability that less than $(1-\eps'')\frac{2\log^d n}{3}$ honest replicas
do not receive the smallest valid round multicasted is
$O\left(\frac{1}{n^c}\right)$ for $c > 1$ for
any constant $\eps'' > 0$ used in the Chernoff bound.
\end{proof}

Using the above we can now prove the safety of our round synchronization protocol during the asynchronous period of our
network.

\begin{lemma}\label{lem:smaller-epoch}
Given committees of size $O(\log^{d} n)$ and $T = \Theta\left(3^{\log^c n}\right)$,
an honest replica never confirms an epoch smaller than an epoch
already confirmed by another honest replica with high probability.
\end{lemma}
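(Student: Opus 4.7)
The plan is to mirror the standard safety argument for the three-step HotStuff-style commit, adapted to this setting where each phase is run by a fresh committee of size $O(\log^d n)$ and where the asynchronous length $T = \Theta\!\left(3^{\log^c n}\right)$ controls a union bound over all rounds. Assume for contradiction that an honest replica $h_1$ confirms epoch $C$ and that later some honest replica $h_2$ confirms a smaller epoch $C' < C$. I will derive a contradiction by showing that, once the certificate chain for $C$ exists, no committee for $C'$ can reach the $2\lambda+1$ threshold at all three phases.

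First, I would unroll the confirmation of $C$ by $h_1$ backward through the three phases (tentative $\to$ pre-confirmed $\to$ confirmed): each phase produced a quorum of $2\lambda+1$ votes from its respective committee. By~\cref{lem:committee-dominates} applied to each committee, together with a union bound over the $T$ rounds (which is still $o(1/n^a)$ for every constant $a$ since $d>c$), a supermajority of every such quorum is honest with overwhelming probability. Therefore there is a set $H_C$ of honest replicas that, by the end of the pre-confirmed phase for $C$, have set their local $C_i \geq C$. By~\cref{lem:min-proposed-round}, combined with the $1/p$-fold retransmission of pre-confirmed and confirmed messages prescribed by the protocol, almost all honest replicas receive the round-$C$ certificate; by the update rule ``if $i$ sees a certificate for a round $j > C_i$, $i$ updates $C_i \leftarrow j$'', the local variable $C_i$ of an honest replica is monotone nondecreasing, so any honest replica exposed to this certificate subsequently has $C_i \geq C > C'$.

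Second, I analyze how $C'$ could be confirmed by $h_2$. It requires $2\lambda + 1$ votes from the confirmed committee for $C'$, of which, by~\cref{lem:committee-dominates}, at most $(1+\eps'')\log^c n / 3$ are Byzantine. The parameters are chosen so that this Byzantine budget is strictly below $2\lambda+1$, so at least one honest replica must both have set its $C_i$ to $C'$ (during the pre-confirmed phase for $C'$) and must still be in that state at its confirmed-phase vote. Pushing this back through the three phases for $C'$, at least one honest replica must also sit in the tentative committee for $C'$ with $C_i < C'$ at the moment of voting and must have selected $C'$ as the smallest valid round greater than its $C_i$. Combined with the monotonicity and the receipt of the $C$-certificate established in the first step, this honest replica would have to violate either the update rule or the ``smallest valid round'' selection, producing the contradiction.

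The main obstacle I anticipate is the case where $h_2$'s tentative phase for $C'$ \emph{predates} the formation of the certificate for $C$ at $h_1$; in that case a direct ``seen $C$ already'' argument fails. Here I would use the lock established by the pre-confirmed phase: because the pre-confirmed quorum for $C$ contains, by~\cref{lem:committee-dominates}, enough honest replicas, and because those pre-confirmed messages are rebroadcast $1/(1-p)$ times so that by~\cref{lem:min-proposed-round} they reach a supermajority of any later honest committee, any honest committee member contacted in a later pre-confirmed or confirmed phase for $C'$ will have $C_i \geq C$ and will refuse to vote for $C'$. Finally, a union bound over the at most $T = \Theta\!\left(3^{\log^c n}\right)$ pairs $(C,C')$ that could arise during the asynchronous period absorbs the failure probabilities of~\cref{lem:committee-dominates} and~\cref{lem:min-proposed-round}, yielding the high-probability safety claim.
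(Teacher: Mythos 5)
Your proposal follows essentially the same route as the paper's proof: both reduce the bad event (an honest replica confirming a smaller epoch after another has confirmed a larger one) to the conjunction of a committee being adversary-dominated and too few honest replicas receiving the propagated certificate, then invoke~\cref{lem:committee-dominates} and~\cref{lem:min-proposed-round} with a union bound over the $T = \Theta\left(3^{\log^c n}\right)$ rounds. The paper's proof is considerably terser; you make explicit the HotStuff-style three-phase unrolling, the monotonicity of $C_i$ under the certificate-update rule, and the quorum-level contradiction, all of which the paper only gestures at, but the underlying argument and the key lemmas used are the same.
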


\begin{proof}
In order for an honest replica to confirm an epoch smaller
than the largest confirmed epoch by any honest replica,
a sequence of events must occur. First, the replica must
not have received the votes confirming the previous larger epoch.
Second, it must not have seen the previous epoch proposed by the
proposal committee. Finally, a large enough portion of the
committee must not have seen the certificate multicasted for
epochs greater than the smaller epoch in order to confirm
the new smaller epoch (as well as the previous multicasts).
Given a set of $\log^{d} n$ committee members,
the probability that all of the above occurs is upper bounded by the probability that
enough honest replicas do not see the smallest epoch
multicasted and the adversary controls a large enough fraction of the
committees. The probability that both occur by the union bound on Lemmas~\ref{lem:committee-dominates} and~\ref{lem:min-proposed-round}
is $O\left(\frac{1}{n^c}\right)$ for any $c > 1$. Thus, with high probability when the asynchronous period lasts
for $T = O\left(3^{\log^c n}\right)$ rounds, safety is preserved.
\end{proof}

The remaining proofs for round synchronization follow closely that of the warmup protocol in \cref{sec:warmup-protocol-analysis}.

\subsection{Sublinear BBA in the Partially Synchronous with Randomly Dropped Messages Model}\label{sec:sublinear-bba-full}

Below, we use our sublinear multicast clock synchronization protocol to obtain a sublinear multicast BBA protocol in the
partially synchronous with randomly dropped messages model. Because our protocol is very similar to our linear multicast protocol
presented in \cref{sec:bba}, we only underline the portions of the protocol that differ in this case.

\begin{figure}
\begin{mdframed}[hidealllines=false,backgroundcolor=gray!30]
\paragraph{Protocol for replica $i$:}
\begin{enumerate}
\item Set $T_i = 1$ and $F_i = 1$ at the start of the protocol
before receiving \emph{any messages} from round leaders (and before sending $\ack$s).
Initialize $b_i^*$ to the bit received initially as input before the protocol starts.
\item The following is performed repeatedly until replica $i$ commits to a bit (outputs a bit):
\begin{enumerate}
\item Run the round synchronization process detailed in Figure~\ref{alg:sublinear-clock-alg}. Only proceed with the rest of the protocol after becoming synchronized to a round.
Let $S$ be this confirmed round.
\item Flip a random (fair) coin to determine a bit, $b$.
    Then, check if $\cryptosort(\sk{i}, \prop, S, b) < \pt$ for some value
$\pt$ to be determined later in our analysis.
\item If $\cryptosort(\sk{i}, \prop, S, b) < \pt$, multicast $(\prop, S, b)$ and a proof.
\item After receiving a valid propose $(\prop, S, b')$ message (and proof $\pi_{S}$):
\begin{enumerate}
\item Wait $\delta$ time (for some $\delta$ to be defined in the analysis) to see if it receives another unconfirmed proposal $(\prop, S, b'')$ where $b' \neq b''$. If $i$ receives such a proposal, then
$i$ does nothing this round.
\item Otherwise, set $b_i^* \coloneqq b'$ if $F_i = 0$
\item \underline{If $\cryptosort(\sk{i}, \texttt{Multicast}, S, b_i^*) < D_1$}, multicast $(\ack, S, b_i^*)$.
\end{enumerate}
\item If received \underline{$\geq \floor{2 \log^d n/3}$} $\ack$s $(\ack, S, b')$ (from different replicas) where $b_i^* = b'$, set $F_i \coloneqq 1$.
\item If received \underline{$\geq \floor{2\log^d n/3}$} $\ack$s $(\ack, S, b')$ (from different replicas) where $> f$ of the $\ack$s are for $b_i^* \neq b'$, set $F_i \coloneqq 0$.
\item If after $\delta$ delay, $i$ eventually received at least \underline{$\floor{2\log^d n/3}$} $\ack$s $(\ack, S, b)$, then increment $T_i \xleftarrow T_i + 1$.
\item At the end of $K$ rounds, i.e.\ when $T_i = K$ (for some $K$ to be determined later in our analysis), if $F_i = 1$, commit to $b_i^*$ and return $b_i^*$ as
output bit.
\item After completing the above protocol, start the round synchronization process again.
\item If after waiting $5\delta$ time, no further step can be taken during any point of the above protocol, start with the next iteration at the beginning of this loop.
\end{enumerate}
\end{enumerate}
\end{mdframed}
\caption{Polylogarithmic multicast simple round synchronization protocol.}\label{alg:sublinear-bba-alg}
\end{figure}

\subsection{Analysis}

We first prove the following lemma regarding our protocol which will help us obtain
our final result on the communication complexity and number of rounds (after GST) our protocol requires.

\begin{lemma}\label{lem:enough-honest}
    Let $D_1 = \frac{2\log^d n}{3(1-\eps)n}$, after $O(\poly \log n)$ rounds after GST, the probability that
    all epochs have at least one committee which has $< \floor{\frac{2\log^d n}{3}}$ honest replicas is
    $O\left(\frac{1}{n^c}\right)$ for all $c > 1$.
\end{lemma}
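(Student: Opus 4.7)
My plan is to fix an arbitrary epoch after GST, bound the probability that a single committee contains too few honest replicas using a multiplicative Chernoff bound, and then take a union bound over the constantly-many committees per epoch and the $O(\poly \log n)$ epochs in the time window. First, I would recall that each of the $n - f \geq (1-\eps')n$ honest replicas is independently selected into a committee via $\cryptosort$ with probability $D_1 = \tfrac{2\log^d n}{3(1-\eps)n}$, so the expected number of honest committee members is
\[
\mu_H \;=\; \frac{2(1-\eps')\log^d n}{3(1-\eps)}.
\]
Since $\eps' < \eps < 1/3$, the ratio $(1-\eps')/(1-\eps)$ is a constant strictly greater than $1$, so $\mu_H$ exceeds $\tfrac{2\log^d n}{3}$ by a constant multiplicative factor.

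Next, I would set the Chernoff deviation to be the positive constant $\gamma = 1 - \tfrac{1-\eps}{1-\eps'}$, which makes $(1-\gamma)\mu_H = \tfrac{2\log^d n}{3} \geq \floor{\tfrac{2\log^d n}{3}}$. Applying the lower-tail multiplicative Chernoff bound from~\cref{def:chernoff} yields
\[
\prob\!\left[\text{honest count in a given committee} < \floor{\tfrac{2\log^d n}{3}}\right] \;\leq\; \exp\!\left(-\gamma^2 \mu_H/2\right) \;=\; \exp\!\left(-\Omega(\log^d n)\right),
\]
where the hidden constant depends only on $\eps, \eps'$. Since each epoch of the BBA protocol in Figure~\ref{alg:sublinear-bba-alg} uses only a constant number of such committees, a union bound over the committees of a single epoch and then over the $O(\poly \log n)$ epochs after GST shows that the probability that any committee in any of these epochs is short on honest members is at most $\poly\log(n) \cdot \exp(-\Omega(\log^d n))$. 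For $d > 1$ this bound decays faster than $1/n^c$ for every constant $c$, which is actually a stronger statement than the one in the lemma (the lemma only requires the existence of a single good epoch, while this argument makes every epoch good with high probability).

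\textbf{Main obstacle.} The only delicate point is ensuring that the Chernoff deviation $\gamma$ is a positive \emph{constant} independent of $n$, so that the exponent $\gamma^2 \mu_H/2$ truly grows as $\Omega(\log^d n)$ rather than vanishing; this is exactly where the strict inequality $\eps' < \eps$ (rather than just $\eps' \leq \eps$) is used, and it is the same structural assumption that is exploited in~\cref{lem:majority-honest} and~\cref{cor:majority-honest}. Once $\gamma$ is chosen as a constant, the remaining two-layer union bound and the comparison with $1/n^c$ are routine calculations.
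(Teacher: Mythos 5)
Your proof is correct and follows essentially the same route as the paper's: compute the expected number $\mu_H = \frac{2(1-\eps')\log^d n}{3(1-\eps)}$ of honest committee members, choose the constant Chernoff deviation $\gamma = 1 - \frac{1-\eps}{1-\eps'} > 0$ (which is positive exactly because $\eps' < \eps$) so that $(1-\gamma)\mu_H = \frac{2\log^d n}{3}$, apply the lower-tail multiplicative Chernoff bound, and union-bound over the three committees per epoch and the $O(\poly\log n)$ epochs after GST. The only cosmetic difference is that you make the final two-layer union bound over epochs explicit, whereas the paper states the per-epoch failure probability and then concludes; both arguments yield the same $\exp(-\Omega(\log^d n))$ bound that dominates $1/n^c$ for every constant $c$.
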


\begin{proof}
    The expected number of honest replicas in each committee is at least $\frac{2(1-\eps')\log^d n}{3(1-\eps)n}$
    since we expect the number of honest replicas to be $\geq (1 - \eps')n$. Since, by definition,
    $\eps' < \eps$, the expected number of honest replicas in each committee is then at least $\frac{2c\log^d n}{3}$ where
    $c = \frac{1-\eps'}{1-\eps}$. Then, by the Chernoff bound, the probability that
    less than $(1-\eps_c)\left(\frac{2c\log^d n}{3}\right)$ honest replicas are in any committee is
    $< \exp\left(-\frac{2\eps_c^2c \log^d n}{3}\right)$. When $(1-\eps_c)\left(\frac{1-\eps'}{1-\eps}\right) \geq 1$,
    $(1-\eps_c)\left(\frac{2c\log^d n}{3}\right) \geq \frac{2\log^d n}{3}$. There exists such a constant
    $0 < \eps_c < 1$ where this constraint is satisfied since $\eps' < \eps$ (i.e.\ when $\eps_c \leq 1- \frac{1-\eps}{1-\eps'}$)
    The probability that any of the three committees in an epoch have less than $\frac{2\log^d n}{3}$
    honest replicas in the committee is, by the union bound, $3\exp\left(-\frac{2\eps_c^2c \log^d n}{3}\right)$.
    Hence, after $O(\poly \log n)$ rounds, the probability that all epochs have at least one committee which has
    $< \floor{\frac{2\log^d n}{3}}$ honest replicas is upper bounded by $O\left(\frac{1}{n^c}\right)$ for all $c > 1$.
\end{proof}

Next, we prove that with high probability Byzantine replicas cannot vote for both $0$ and $1$ in all
three committees for any epoch after $O(\log^{d} n)$ epochs.

\begin{lemma}\label{lem:both-bits}
    After GST, given $D_1 = \frac{2\log^d n}{3(1-\eps)n}$,
    after $O(\log^{d} n)$ epochs, with high probability, Byzantine replicas cannot vote for both $0$ and $1$ in all
    three committees (such that the votes reach the threshold for both bits) for any of the $O(\log^d n)$ epochs.
\end{lemma}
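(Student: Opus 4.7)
My plan is to show that in every sortition committee the Byzantine membership is, with overwhelming probability, strictly below the vote threshold $\floor{2\log^d n/3}$, so that Byzantine double-voting alone cannot supply a quorum for both bits simultaneously in any committee. First, I would observe that membership in the $\ack$-committee is determined by the eligibility test $\cryptosort(\sk{i},\texttt{Multicast},S,b) < D_1$; by pseudorandomness of $\cryptosort$, the draws are independent across replicas, so the number of Byzantine members eligible in any given committee for any particular bit $b$ is a sum of independent $\{0,1\}$ variables with expectation at most $\mu_b \le f\cdot D_1 \le \frac{2\eps'\log^d n}{3(1-\eps)}$. Since $\eps' < \eps < 1/3$, this mean is strictly smaller than $\frac{2\log^d n}{3}$ by a positive constant multiplicative gap $\gamma = \frac{1-\eps}{\eps'}-1 > 0$.

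Next, I would apply the multiplicative Chernoff bound from~\cref{def:chernoff} to the Byzantine indicator sum: the probability that a single committee contains at least $\floor{2\log^d n/3} \approx (1+\gamma)\mu_b$ Byzantine members is at most $\exp(-\gamma^2\mu_b/3) = \exp(-C\log^d n)$ for a constant $C>0$ depending only on $\eps,\eps'$. On the complementary event, Byzantine replicas alone cannot supply $\floor{2\log^d n/3}$ $\ack$s in that committee. Combined with the fact (by direct inspection of the protocol in~\cref{alg:sublinear-bba-alg}) that every honest committee member multicasts $\ack$ for only its current $b_i^*$, the bit opposite to $b_i^*$ receives only Byzantine support and therefore fails to reach threshold. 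Hence in that single committee both bits cannot simultaneously cross threshold via Byzantine double-voting.

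Finally, I would union bound the failure probability across the at most two bits, the three committees per epoch (vote, pre-commit, commit-analogues), and the $O(\log^d n)$ epochs under consideration, yielding a total bad-event probability of $O(\log^d n)\cdot \exp(-C\log^d n) = n^{-\omega(1)}$ in the parameter regime of the paper (in particular $d>c\geq 1$ as enforced by the sortition threshold choice in~\cref{sec:sublinear-bba}). The main obstacle is constant bookkeeping: one must verify that the gap $\gamma$ is a genuine positive constant under the adversarial-budget assumption $f<\eps' n$ with $\eps'<\eps<1/3$, and handle the fact that under vote-specific eligibility the bit-$0$ and bit-$1$ committees are formally distinct sortition draws, so that the Chernoff bound must be applied once per (bit, committee) pair before unioning — the arithmetic is clean because both draws share the same expectation and the same gap $\gamma$.
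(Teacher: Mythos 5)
Your Chernoff setup for the Byzantine membership count is correct: the expected number of Byzantine members in the $\ack$-committee for a fixed bit is at most $fD_1 \le \frac{2\eps'\log^d n}{3(1-\eps)}$, the gap $\gamma = \frac{1-\eps}{\eps'}-1 > 0$ is a genuine constant, and the tail bound $\exp(-C\log^d n)$ with the union over epochs and committees is the right bookkeeping. The problem is the combining step. You claim that because every honest committee member multicasts $\ack$ only for its current $b_i^*$, ``the bit opposite to $b_i^*$ receives only Byzantine support.'' That sentence is not even well-posed: $b_i^*$ is replica-specific, and there is no guarantee that all honest committee members carry the same $b_i^*$. Honest replicas start with their own input bits and only overwrite $b_i^*$ when $F_i = 0$, so in a generic epoch they can be split — some honest members $\ack$ bit $0$ and others $\ack$ bit $1$. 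In that case both bits receive honest support, and your bound on Byzantine membership alone does not rule out both bits reaching the threshold $\floor{2\log^d n/3}$.

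The paper's proof handles the split case by a total-count argument: it counts all eligibility trials — one per honest replica (for its own $b_i^*$) plus two per Byzantine replica (one for each bit) — giving expectation $(n+f)D_1 \le (1+\eps')\frac{2\log^d n}{3(1-\eps)}$, which is strictly below $\frac{4\log^d n}{3}$ precisely when $\eps' < 1-2\eps$ (implied by $\eps' < \eps < 1/3$). A single Chernoff bound on that sum then shows the total number of $\ack$s across both bits is below $\frac{4\log^d n}{3}$ w.h.p., which makes it impossible for both bits to reach $\floor{2\log^d n/3}$ regardless of how honest replicas divide. To repair your argument you would need to additionally apply Chernoff to the honest committee membership and then observe that $H_0 + H_1 + B_0 + B_1 < \frac{4\log^d n}{3}$ while both thresholds together require at least $\frac{4\log^d n}{3}$ — which is, in effect, re-deriving the paper's total-count proof rather than a shortcut around it.
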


\begin{proof}
    The expected number of votes mined by Byzantine replicas is upper bounded by the expected number of votes
    mined by all replicas, which is at most $\frac{2 \log^d n}{3(1-\eps)}$.
    In order to vote for both $0$ and $1$ and successfully reach the threshold, Byzantine replicas
    and honest replicas together need to mine
    at least $\frac{4\log^d n}{3}$ votes. Since we defined $\eps < 1/3$, the expected
    number of votes is $< \frac{4\log^d n}{3}$ (where we assume that Byzantine replicas attempt to
    vote twice). Then, by the Chernoff bound, the probability that
    any committee has $\geq \frac{4\log^d n}{3}$ votes is at least
    $\exp\left(-\frac{\left(\frac{4}{3(1+\eps')}-1\right)^2\log^d n}{3}\right)$.
    Thus, with high probability after $O(\log^d n)$ rounds, no epoch has all three committees
    have enough votes such that Byzantine replicas can vote for both $0$ and $1$ and reach the threshold.
\end{proof}

Now, we can compute the probability that a single, honest leader is elected after $O(\log n)$ rounds.

\begin{lemma}\label{lem:leader-honest}
    After GST, given $D_0 = \frac{1}{12n}$ and where $\eps < 1/3$,
    after $O(\log n)$ epochs, with high probability, a single honest leader is elected.
\end{lemma}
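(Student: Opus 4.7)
The plan is to show that in each epoch, a single honest leader is elected with at least some constant probability, and then amplify this over $O(\log n)$ independent epochs to get a high-probability guarantee.

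First I would compute the per-epoch probability that exactly one leader is elected. Since each replica $i$ independently becomes a leader iff $\cryptosort(\sk{i}, \ell) < D_0 = \frac{1}{12n}$, the number of leaders in a given epoch is a sum of $n$ independent Bernoulli$(1/(12n))$ variables. Hence
\[
\Pr[\text{exactly one leader in epoch } \ell] = n \cdot \tfrac{1}{12n} \cdot \left(1 - \tfrac{1}{12n}\right)^{n-1} \;\geq\; \tfrac{1}{12} \cdot e^{-1/12} \cdot (1 - o(1)),
\]
which is a positive constant $c_1 > 0$ for all sufficiently large $n$.

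Second, I would condition on the event that exactly one leader is elected and bound the probability that this leader is honest. Since the sortition function is (by its cryptographic-sortition guarantees in Section~\ref{sec:prelims}) indistinguishable from a uniformly random coin flip per replica and independent across replicas, conditioning on exactly one replica meeting the threshold makes that replica uniform over the $n$ replicas. With $f < n/3$ Byzantine replicas by hypothesis (and $\eps < 1/3$), the conditional probability of the sole leader being honest is at least $1 - f/n > 2/3$. Combining with the first step, each epoch independently satisfies the ``single honest leader'' event with probability at least $c = c_1 \cdot 2/3 > 0$, a constant.

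Finally, I would amplify: since epochs use independent sortition inputs (indexed by $\ell$), the events across epochs are independent, so the probability that none of the first $k$ epochs has a single honest leader is at most $(1-c)^k$. Taking $k = \Theta(\log n)$ with a sufficiently large constant, this bound becomes $O(n^{-a})$ for any desired constant $a$, which is the required ``with high probability'' statement. The main (minor) obstacle is justifying that the cryptographic sortition values across replicas and epochs behave as mutually independent fair coins with the specified threshold, so that the binomial computation and the conditional uniformity over the identity of the unique leader both go through; this follows from the pseudorandomness and uniqueness properties of $\cryptosort$ stated in Section~\ref{sec:prelims}, up to a negligible additive error in $\secp$ which is absorbed into the high-probability bound.
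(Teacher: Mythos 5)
Your proposal is correct, and it takes a genuinely cleaner route than the paper's own argument. The paper bounds two bad events separately (\emph{no honest leader} and \emph{more than one leader}) and union-bounds them away; you instead directly compute the per-epoch probability that exactly one replica passes the sortition threshold (a single binomial term, $n\cdot\frac{1}{12n}\cdot(1-\frac{1}{12n})^{n-1}$), and then condition on that unique leader being honest, using exchangeability to argue the conditional identity is uniform over the $n$ replicas. Your decomposition is tighter and also sidesteps a small sign slip in the paper's first bound: the paper asserts $\Pr[\text{no honest leader}]\leq 1-\frac{1-\eps'}{12}$, whereas Bernoulli's inequality actually gives $(1-\tfrac{1}{12n})^{(1-\eps')n}\geq 1-\frac{1-\eps'}{12}$, i.e.\ the reverse direction; the paper's conclusion (a constant strictly below $1$) is still true via $e^{-(1-\eps')/12}<1$, but the stated inequality as written does not hold. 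Your argument avoids that pitfall entirely.

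One point worth tightening in a final write-up: the events across epochs are not literally independent under an adaptive adversary, since the corruption set can grow after each epoch (e.g.\ the adversary corrupts a revealed leader). What is true is that, conditioned on any history, as long as at most $f<n/3$ replicas are Byzantine at the start of an epoch, the per-epoch success probability remains at least your constant $c$. The amplification to $(1-c)^{\Theta(\log n)}=n^{-\Omega(1)}$ then follows from the chain rule rather than literal independence. The paper glosses over this same issue (as does its Lemma~\ref{lem:so-far-honest-leader}), so it is not a gap specific to your proposal, but it is worth stating explicitly given that the adversary you face is adaptive.
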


\begin{proof}
    With probability $\leq 1 - \frac{1-\eps'}{12}$, no honest leaders are
    elected in round $i$. The probability that more than a single leader is elected in round $i$ is
    given (by the Chernoff bound) to be $< \exp(-121/36)$. Thus, the probability that round $i$
    either has no honest leaders or more than one leader is, by the union bound,
    $1 - \frac{1-\eps'}{12} + \exp(-121/36) < 1$.
    Thus, after $O(\log n)$ rounds, a single honest leader is elected with high probability.
\end{proof}

Now we prove that, with high probability, honest nodes cannot commit to two different
bits during the asynchronous period. Furthermore, we prove that with high probability,
if all honest nodes start with the same bit, no honest node will commit to the other bit.

\begin{lemma}\label{lem:same-bit}
    No two honest replicas will commit two different bits during the asynchronous period.
\end{lemma}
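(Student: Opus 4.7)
The plan is to argue by contradiction: suppose honest replicas $i$ and $j$ commit to distinct bits $b \ne b'$ during the asynchronous period, in epochs $E_i$ and $E_j$ respectively. Committing requires both (a) $F=1$ at the commit moment and (b) accumulating $\ge \floor{2\log^d n/3}$ $\ack$s for the committed bit in some epoch. Crucially, $\ack$ eligibility is \emph{vote-specific}: $\cryptosort(\sk{k}, \texttt{Multicast}, S, b_k^*) < D_1$ folds the proposed bit into the sortition input, so the committees for bits $0$ and $1$ are sampled independently. I would then split into two cases according to whether $E_i = E_j$.

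In the same-epoch case, I would observe that an honest committee member only multicasts one $\ack$ per epoch (for its current $b_k^*$), so the honest contribution to the two bit-committees is disjoint. Reaching $\ge \floor{2\log^d n/3}$ $\ack$s for \emph{both} bits simultaneously therefore forces Byzantine replicas to supply enough $\ack$s in both bit-specific committees to push the combined total past $\tfrac{4\log^d n}{3}$. I would invoke~\cref{lem:both-bits}, together with the committee-dominance bound~\cref{lem:committee-dominates}, to conclude that this event occurs with probability negligible in $n$ across the at most $T = \Theta(3^{\log^c n})$ rounds of asynchrony.

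In the different-epoch case, assume WLOG $E_i < E_j$. The commit at $E_i$ implies (by~\cref{lem:enough-honest} applied to the committee of $E_i$) that at least $\floor{2\log^d n/3} - \tfrac{\log^d n}{3}$ honest committee members had already locked $b_k^* = b$ with $F_k = 1$ before $\ack$ing. By the update rule, such an honest replica refuses to overwrite $b_k^* = b$ in any subsequent round unless it sees $>f$ $\ack$s opposing $b$, which again demands a Byzantine-dominated $\ack$ committee ruled out by~\cref{lem:committee-dominates}. Propagating this invariant inductively over every subsequent epoch, no sufficient honest mass can later line up behind $b'$, so $j$ cannot accumulate $\floor{2\log^d n/3}$ $\ack$s for $b'$, contradicting its commit.

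The main obstacle is controlling the union bound over the long asynchronous period, which may stretch up to $T = \Theta(3^{\log^c n})$ rounds. This is exactly why the protocol is parameterized with committee threshold $\log^d n$ for $d > c$ in~\cref{lem:committee-dominates}, and why the clock-synchronization safety statement~\cref{lem:smaller-epoch} is needed to prevent an honest replica from later confirming a stale epoch smaller than the one in which a commit already occurred (which would otherwise allow the adversary to ``retroactively'' stage a conflicting commit). Once these two tools are combined with vote-specific eligibility, the probability of any safety violation during asynchrony is at most $O(1/n^c)$ for every constant $c$, giving the desired whp statement.
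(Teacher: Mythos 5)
The paper's proof takes a quite different route from yours. It is a single Chernoff-counting argument: it bounds the number of honest replicas that fail to receive the relevant proposal and vote messages (due to random drops), adds the Byzantine replicas who attempt to double-vote, shows that the expected number of sortition-eligible ``stray'' acks coming from this pool is $< \frac{2\log^d n}{3}$, and concludes by Chernoff that a second bit can never reach the ack threshold. There is no case split on epochs, no quorum-intersection or locking invariant, and no appeal to \cref{lem:both-bits}, \cref{lem:enough-honest}, or \cref{lem:smaller-epoch}.

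Your proposal is a genuinely different, HotStuff-style safety argument (case split on whether the two commits occur in the same or different epochs, quorum intersection within an epoch, lock propagation across epochs), but it has two concrete gaps as written. First, \cref{lem:both-bits} and \cref{lem:enough-honest} are stated for the post-GST regime; invoking them during asynchrony requires a union bound over $T = \Theta(3^{\log^c n})$ rounds, so you would need to re-prove $T$-robust analogues rather than cite the post-GST versions directly. Second, and more seriously, the cross-epoch locking argument does not transfer: an honest committee member who acks $b$ in $E_i$ has $b_k^* = b$ but has \emph{not} necessarily set $F_k = 1$ (that flag is set only upon observing the ack threshold, which a member who merely sent its ack may never see), and because the ack committee for $E_j$ is a fresh sortition sample essentially disjoint from the $E_i$ committee, the fact that an $O(\log^d n)$-sized subset of honest replicas is locked on $b$ does not by itself stop the fresh committee of $E_j$ from acking $b'$. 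You would additionally have to show that the lock (or at least $b^* = b$) propagates to all, or almost all, honest replicas before epoch $E_j$. The paper side-steps both problems by never reasoning about epochs at all --- it directly bounds, with one $T$-union-bounded Chernoff estimate, the total number of acks that could ever support a second value, and shows it falls below threshold.
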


\begin{proof}
    In order for a honest replica to commit to a bit, it must receive a proposal and committee votes from
    three committees. The expected number of honest replicas that receive the proposal and all
    (honest) committee votes is $(1-\eps')n$. Thus, by the Chernoff bound, the probability
    that $< (1-\eps_c)(1-\eps')n$ honest replicas receive all proposals and vote messages
    is $< \exp\left(-\frac{\eps_c^2(1-\eps')n}{2}\right)\leq \exp\left(-\frac{\eps_c^2 n}{3}\right)$.
    The expected number of votes that come from the honest replicas that do not receive all messages
    and Byzantine replicas (assuming all Byzantine replicas attempt to vote twice) is
    $< \frac{4\eps'\log^d n}{3(1-\eps)}+\frac{2\eps_c \log^d n}{3(1-\eps)}$.
    Let $S$ be the expected number of such votes. By setting
    $\eps_c < 1 - \eps - 2\eps'$, the expected number of such votes is $S < \frac{2\log^d n}{3}$.
    By the Chernoff bound, the probability that more than $(1 + \eps_1)S$ votes are from
    Byzantine replicas or honest replicas that do not receive all messages is
    $< \exp\left(-\frac{2\eps_1^2 \log^d n}{3}\right)$.
    Hence, assuming $T = \Theta\left(3^{\log^c n}\right)$,
    with high probability, no honest replica will commit the other bit during the asynchronous period.
\end{proof}

\begin{corollary}\label{cor:two-bits}
    If all honest replicas start with the same bit as their input, then, with high probability,
    no honest replica will commit to the other bit during the asynchronous (with randomly
    dropped messages) period.
\end{corollary}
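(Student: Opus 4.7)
My plan is to strengthen the guarantee of Lemma~\ref{lem:same-bit} into an invariant on the \emph{state} of each honest replica rather than just on its committed output. Specifically, under the hypothesis that every honest replica starts with the same input bit $b$, I will argue that the joint state $(b_i^*, F_i) = (b, 1)$ is preserved at every so-far honest replica $i$ for the entire asynchronous period, with all but negligible probability. Since committing at Step~2(h) of Figure~\ref{alg:sublinear-bba-alg} requires $F_i = 1$ together with the current $b_i^*$, maintaining this invariant immediately forbids any honest replica from committing to $\neg b$.

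First I would establish the base case: by construction of Step~1 of Figure~\ref{alg:sublinear-bba-alg}, every honest replica $i$ has $b_i^* = b$ and $F_i = 1$ before processing any message. For the inductive step, I assume that throughout all epochs executed so far, every so-far honest replica satisfies $b_i^* = b$ and $F_i = 1$. Then in Step~2(d)(iii) every honest $\ack$ multicasted so far is of the form $(\ack, S, b)$, so any $\ack$ for $\neg b$ must originate from a Byzantine replica that was selected into the corresponding $\ack$ committee. Running the Chernoff computation of Lemma~\ref{lem:same-bit} against the threshold in Steps~2(e)--(f), the total number of such Byzantine $\ack$s for $\neg b$ falls strictly below the flip threshold required to force $F_i \coloneqq 0$, except with probability $\exp(-\Omega(\log^d n))$ per epoch. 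Combined with Corollary~\ref{cor:bad-dominate}, which rules out committee domination by Byzantine voters, this prevents any honest $F_i$ from flipping to $0$, which in turn prevents Step~2(d)(ii) from ever overwriting any honest $b_i^*$, closing the induction.

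The main obstacle is surviving the union bound over the potentially stretched asynchronous period of length $T = \Theta(3^{\log^c n})$. My per-epoch failure probability is $\exp(-\Omega(\log^d n))$, so the plan is to pick $d > c$ (matching the parameter choice already made in Lemma~\ref{lem:committee-dominates}), whence $T \cdot \exp(-\Omega(\log^d n))$ remains superpolynomially small in $n$, yielding a high-probability bound. A second subtle point concerns the randomly dropped messages: under the model of Definition~\ref{def:psrdm}, a minority of honest replicas may locally miscount $\ack$s. However, a dropped honest $\ack$ for $b$ can only reduce the observed count for $b$ and never inflates the count for $\neg b$, so dropped messages only make it \emph{easier} to preserve the invariant $(b_i^*, F_i) = (b, 1)$; they cannot cause a spurious flip. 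Combining the invariant with Lemma~\ref{lem:same-bit}, no honest replica ever adopts $b_i^* = \neg b$, and hence no honest replica commits to $\neg b$ during the asynchronous period, which is exactly the conclusion of the corollary.
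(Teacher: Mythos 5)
Your proof is correct, and it follows a genuinely different route from the paper's. The paper leaves \cref{cor:two-bits} without its own proof, implicitly invoking the vote-counting argument from the proof of \cref{lem:same-bit}: there, one bounds the number of acks that could support a ``wrong'' bit (acks from Byzantine replicas plus from honest replicas that missed messages) and shows this quantity falls below the commit threshold per epoch, then union-bounds over $T$ epochs. Your approach instead establishes and propagates a \emph{state invariant}, $(b_i^*, F_i) = (b, 1)$, over all so-far honest replicas, arguing inductively that under the all-same-input hypothesis every honest ack is for $b$, so acks for $\neg b$ come only from Byzantine committee members, whose count is too small to trigger the $F_i \coloneqq 0$ flip in Step~2(f). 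This buys something concrete: rather than merely showing the commit threshold is not reached, you show honest replicas never even \emph{enter} a state $(b_i^* = \neg b)$, which is strictly stronger and more directly yields the corollary. You also correctly observe that random message drops can only lower counts and therefore only help preserve the invariant -- a point the paper's proof of \cref{lem:same-bit} handles obliquely by folding ``honest replicas that missed messages'' into the bad-vote tally. Two small nits: first, once the invariant is in hand, the appeal to \cref{lem:same-bit} in your final sentence is redundant -- the invariant alone gives the conclusion. Second, the citation to \cref{cor:bad-dominate} is slightly misplaced, since that corollary belongs to the synchronous consensus analysis of \cref{sec:general-consensus}; the appropriate reference for bounding Byzantine committee membership in the randomly-dropped-messages setting is \cref{lem:committee-dominates}, whose $d > c$ parameterization you already correctly invoke for the union bound over $T = \Theta(3^{\log^c n})$ epochs.
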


Using the above lemmas, we prove our final result.

\begin{theorem}\label{thm:random-final}
    Let $D_0 = \frac{1}{12n}$, $D_1 = \frac{2\log^d n}{3(1-\eps)n}$ and
    assuming $T = O\left(3^{\log^c n}\right)$, after GST, after $O(\poly \log n)$ rounds,
    agreement is reached with high probability using $O(\log^d n)$ multicasts where $d > c$.
\end{theorem}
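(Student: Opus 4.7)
The plan is to combine the safety guarantees that hold throughout the asynchronous period with the liveness guarantees that kick in after GST, using the preceding lemmas as black boxes. First, I would handle safety: by Lemma~\ref{lem:same-bit} and Corollary~\ref{cor:two-bits}, no two honest replicas commit conflicting bits during the asynchronous-with-randomly-dropped-messages period with high probability, and if all honest replicas share an input bit then no honest replica ever flips to the other bit. This is the $T = O(3^{\log^c n})$ part of the hypothesis: it bounds how long the asynchronous period can last while still letting the union bounds in Lemmas~\ref{lem:committee-dominates},~\ref{lem:min-proposed-round}, and~\ref{lem:both-bits} absorb the $T$ factor into a quantity that is still $o(1/n^a)$ for every constant $a$.

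Next I would argue liveness after GST. After GST, the clock synchronization protocol of Figure~\ref{alg:sublinear-clock-alg} synchronizes honest replicas to a common confirmed round by Lemma~\ref{lem:smaller-epoch} and the standard round-synchronization argument from~\cref{sec:warmup-protocol-analysis}. Once replicas are synchronized, I invoke Lemma~\ref{lem:leader-honest} to get a single honest leader within $O(\log n)$ epochs with high probability, and Lemma~\ref{lem:enough-honest} to ensure that every committee in these $O(\poly \log n)$ epochs contains at least $\lfloor 2 \log^d n / 3 \rfloor$ honest replicas. In that epoch the single honest leader multicasts a single proposal, honest committee members ack it (with the repetition factor $1/(1-p)$ or $1/p$ ensuring that their messages reach every honest replica with high probability after GST), the threshold is reached on both the proposal and the follow-up ack rounds, and all forever-honest replicas set $F_i = 1$ and commit to the same bit within the remaining $K = O(\poly \log n)$ synchronized epochs. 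Combined with Lemma~\ref{lem:both-bits}, no Byzantine double-voting can cause a competing bit to be committed, so honest replicas converge on the same output.

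Finally, I would bound the multicast complexity. Each synchronization step, each propose step, and each ack step only involves a committee of expected size $O(\log^d n)$ (driven by $D_1 = \Theta(\log^d n / n)$), and each member only multicasts a constant number of messages (up to the $1/p$ and $1/(1-p)$ repetition constants, which are absorbed into the $O(\cdot)$). Since we need $O(\poly \log n)$ epochs after GST, and each epoch involves a constant number of such committee-driven rounds, the total multicast count is $O(\log^d n) \cdot O(\poly \log n) = O(\log^d n)$ in the notation of the theorem (after folding the polylog factors into the exponent), giving the claimed complexity.

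The main obstacle I anticipate is carefully tying the constants together: the hypothesis $d > c$ must be used to ensure that the $T = O(3^{\log^c n})$ length of the asynchronous period is still dominated by the $\exp(-\Theta(\log^d n))$ failure probability from the Chernoff bounds in Lemmas~\ref{lem:committee-dominates} and~\ref{lem:both-bits}, so that the union bound over all asynchronous-period rounds still yields a negligible-in-$n$ failure probability. Similarly, one must check that the $1/p$ and $1/(1-p)$ repetition factors in Figure~\ref{alg:sublinear-clock-alg} are sufficient to guarantee message delivery within each committee with high probability despite random drops. These are bookkeeping steps rather than conceptual novelty, but they are what makes the theorem hold as stated.
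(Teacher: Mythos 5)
Your proposal is correct and follows essentially the same decomposition as the paper's own proof: safety during the asynchronous period via Lemma~\ref{lem:same-bit} and Corollary~\ref{cor:two-bits}, then liveness after GST via Lemmas~\ref{lem:committee-dominates},~\ref{lem:smaller-epoch},~\ref{lem:enough-honest},~\ref{lem:both-bits}, and~\ref{lem:leader-honest}, with the remaining steps deferred to~\cref{sec:warmup-protocol}. Your added discussion of why $d > c$ lets the union bound over $T$ rounds stay negligible, and your attempt to account for the multicast complexity (which the paper's proof elides entirely), is a useful elaboration rather than a departure from the paper's argument.
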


\begin{proof}
    By~\cref{lem:same-bit} and~\cref{cor:two-bits}, no two honest replicas will commit to different bits during
    the asynchronous period with high probability. Thus, we are concerned with only
    reaching agreement after GST. By~\cref{lem:committee-dominates}, with high probability,
    no committee after GST is dominated by Byzantine replicas. Furthermore, \cref{lem:smaller-epoch}
    shows that no replica confirms an epoch smaller than an epoch already confirmed, hence,
    progress is made after GST. By~\cref{lem:enough-honest}, with high probability after GST, there
    exists an epoch after $O\left(\poly \log n\right)$ rounds where all committees have
    sufficiently many honest replicas. Furthermore, by~\cref{lem:both-bits}, after GST, there
    does not exist an epoch where Byzantine replicas can successfully vote for
    both bits before agreement is reached with high probability. Finally,
    by~\cref{lem:leader-honest}, within $O\left(\poly \log n\right)$
    rounds after GST, a single honest leader is elected with high probability.
    Thus, within $O\left(\poly \log n\right)$ rounds, $O(\log n)$ honest leaders
    are elected with high probability. The remaining parts of the proof directly follow
    that provided in~\cref{sec:warmup-protocol}.
    Hence, we prove our lemma statement.
\end{proof}

\section*{Acknowledgements}

We thank Madars Virza for very helpful discussions and comments.

\bibliographystyle{alpha}
\bibliography{ref}

\newcommand{\etalchar}[1]{$^{#1}$}
\begin{thebibliography}{CWO{\etalchar{+}}11}

\bibitem[AC08]{AC08}
Hagit Attiya and Keren Censor.
\newblock Lower bounds for randomized consensus under a weak adversary.
\newblock In {\em Proceedings of the Twenty-Seventh ACM Symposium on Principles
  of Distributed Computing}, PODC 2008, pages 315--324, New York, NY, USA,
  2008. Association for Computing Machinery.

\bibitem[ACD{\etalchar{+}}19]{cc19}
Ittai Abraham, T-H.~Hubert Chan, Danny Dolev, Kartik Nayak, Rafael Pass, Ling
  Ren, and Elaine Shi.
\newblock Communication complexity of byzantine agreement, revisited.
\newblock In {\em Proceedings of the 2019 ACM Symposium on Principles of
  Distributed Computing}, PODC 2019, pages 317--326, New York, NY, USA, 2019.
  ACM.

\bibitem[ACD{\etalchar{+}}20]{cc19v4}
Ittai Abraham, TH~Chan, Danny Dolev, Kartik Nayak, Rafael Pass, Ling Ren, and
  Elaine Shi.
\newblock Communication complexity of byzantine agreement, revisited.
\newblock {\em arXiv preprint arXiv:1805.03391v4}, 2020.

\bibitem[ADD{\etalchar{+}}19]{abraham2019synchronous}
Ittai Abraham, Srinivas Devadas, Danny Dolev, Kartik Nayak, and Ling Ren.
\newblock Synchronous byzantine agreement with expected $o(1)$ rounds, expected
  $o(n^2)$ communication, and optimal resilience.
\newblock In {\em International Conference on Financial Cryptography and Data
  Security}, pages 320--334. Springer, 2019.

\bibitem[AMM18]{azouvi2018betting}
Sarah Azouvi, Patrick McCorry, and Sarah Meiklejohn.
\newblock Betting on blockchain consensus with fantomette.
\newblock {\em arXiv preprint arXiv:1805.06786}, 2018.

\bibitem[AMN{\etalchar{+}}19]{AMNRY19}
Ittai Abraham, Dahlia Malkhi, Kartik Nayak, Ling Ren, and Maofan Yin.
\newblock Sync hotstuff: Synchronous {SMR} with $2{\Delta}$ latency and
  optimistic responsiveness.
\newblock {\em {IACR} Cryptology ePrint Archive}, 2019:270, 2019.

\bibitem[AMS19]{AMS19}
Ittai Abraham, Dahlia Malkhi, and Alexander Spiegelman.
\newblock Asymptotically optimal validated asynchronous byzantine agreement.
\newblock In {\em Proceedings of the 2019 {ACM} Symposium on Principles of
  Distributed Computing, {PODC} 2019, Toronto, ON, Canada, July 29 - August 2,
  2019}, pages 337--346, 2019.

\bibitem[BBBF18]{vdf}
Dan Boneh, Joseph Bonneau, Benedikt B{\"{u}}nz, and Ben Fisch.
\newblock Verifiable delay functions.
\newblock In {\em Advances in Cryptology - {CRYPTO} 2018 - 38th Annual
  International Cryptology Conference, Santa Barbara, CA, USA, August 19-23,
  2018, Proceedings, Part {I}}, pages 757--788, 2018.

\bibitem[BDRV19]{BDRV19}
Itay Berman, Akshay Degwekar, Ron~D. Rothblum, and Prashant~Nalini Vasudevan.
\newblock Statistical difference beyond the polarizing regime.
\newblock {\em Electronic Colloquium on Computational Complexity {(ECCC)}},
  26:38, 2019.

\bibitem[BGS{\etalchar{+}}11]{borthakur2011apache}
Dhruba Borthakur, Jonathan Gray, Joydeep~Sen Sarma, Kannan Muthukkaruppan,
  Nicolas Spiegelberg, Hairong Kuang, Karthik Ranganathan, Dmytro Molkov,
  Aravind Menon, Samuel Rash, et~al.
\newblock Apache hadoop goes realtime at facebook.
\newblock In {\em Proceedings of the 2011 ACM SIGMOD International Conference
  on Management of data}, pages 1071--1080. ACM, 2011.

\bibitem[CCGZ19]{CCGZ19}
Ran Cohen, Sandro Coretti, Juan~A. Garay, and Vassilis Zikas.
\newblock Probabilistic termination and composability of cryptographic
  protocols.
\newblock {\em J. Cryptology}, 32(3):690--741, 2019.

\bibitem[CEGL08]{CEGL08}
Ran Canetti, Dror Eiger, Shafi Goldwasser, and Dah-Yoh Lim.
\newblock How to protect yourself without perfect shredding.
\newblock In Luca Aceto, Ivan Damg{\aa}rd, Leslie~Ann Goldberg, Magn{\'u}s~M.
  Halld{\'o}rsson, Anna Ing{\'o}lfsd{\'o}ttir, and Igor Walukiewicz, editors,
  {\em Automata, Languages and Programming}, pages 511--523, Berlin,
  Heidelberg, 2008. Springer Berlin Heidelberg.

\bibitem[CGR07]{chandra2007paxos}
Tushar~D Chandra, Robert Griesemer, and Joshua Redstone.
\newblock Paxos made live: an engineering perspective.
\newblock In {\em Proceedings of the twenty-sixth annual ACM symposium on
  Principles of distributed computing}, pages 398--407. ACM, 2007.

\bibitem[CHM{\etalchar{+}}19]{CHMOS19}
Ran Cohen, Iftach Haitner, Nikolaos Makriyannis, Matan Orland, and Alex
  Samorodnitsky.
\newblock On the round complexity of randomized byzantine agreement.
\newblock Cryptology ePrint Archive, Report 2019/868, 2019.
\newblock \url{https://eprint.iacr.org/2019/868}.

\bibitem[CL99]{pbft}
Miguel Castro and Barbara Liskov.
\newblock Practical byzantine fault tolerance.
\newblock In {\em Proceedings of the Third Symposium on Operating Systems
  Design and Implementation}, OSDI 1999, pages 173--186, Berkeley, CA, USA,
  1999. USENIX Association.

\bibitem[CMS89]{CMS89}
Benny Chor, Michael Merritt, and David~B. Shmoys.
\newblock Simple constant-time consensus protocols in realistic failure models.
\newblock {\em J. ACM}, 36(3):591–614, July 1989.

\bibitem[CP19]{cohen2019chia}
Bram Cohen and Krzysztof Pietrzak.
\newblock The chia network blockchain, 2019.

\bibitem[CPS19]{CPS19}
T.{-}H.~Hubert Chan, Rafael Pass, and Elaine Shi.
\newblock Consensus through herding.
\newblock In {\em Advances in Cryptology - {EUROCRYPT} 2019 - 38th Annual
  International Conference on the Theory and Applications of Cryptographic
  Techniques, Darmstadt, Germany, May 19-23, 2019, Proceedings, Part {I}},
  pages 720--749, 2019.

\bibitem[CWO{\etalchar{+}}11]{calder2011windows}
Brad Calder, Ju~Wang, Aaron Ogus, Niranjan Nilakantan, Arild Skjolsvold, Sam
  McKelvie, Yikang Xu, Shashwat Srivastav, Jiesheng Wu, Huseyin Simitci, et~al.
\newblock Windows azure storage: a highly available cloud storage service with
  strong consistency.
\newblock In {\em Proceedings of the Twenty-Third ACM Symposium on Operating
  Systems Principles}, pages 143--157. ACM, 2011.

\bibitem[DGKR18]{ouroboros}
Bernardo David, Peter Gazi, Aggelos Kiayias, and Alexander Russell.
\newblock Ouroboros praos: An adaptively-secure, semi-synchronous
  proof-of-stake blockchain.
\newblock In {\em Advances in Cryptology - {EUROCRYPT} 2018 - 37thAnnual
  International Conference on the Theory and Applications of Cryptographic
  Techniques, Tel Aviv, Israel, April 29 - May 3, 2018 Proceedings, Part {II}},
  pages 66--98, 2018.

\bibitem[DGMV19]{tvdf}
Nico D{\"{o}}ttling, Sanjam Garg, Giulio Malavolta, and Prashant~Nalini
  Vasudevan.
\newblock Tight verifiable delay functions.
\newblock {\em {IACR} Cryptology ePrint Archive}, 2019:659, 2019.

\bibitem[DLS88]{dls88}
Cynthia Dwork, Nancy Lynch, and Larry Stockmeyer.
\newblock Consensus in the presence of partial synchrony.
\newblock {\em J. ACM}, 35(2):288--323, April 1988.

\bibitem[Dod02]{Dodis02}
Yevgeniy Dodis.
\newblock Efficient construction of (distributed) verifiable random functions.
\newblock In Yvo~G. Desmedt, editor, {\em Public Key Cryptography --- PKC
  2003}, pages 1--17, Berlin, Heidelberg, 2002. Springer Berlin Heidelberg.

\bibitem[DPS19]{DPS19}
Phil Daian, Rafael Pass, and Elaine Shi.
\newblock Snow white: Robustly reconfigurable consensus and applications to
  provably secure proof of stake.
\newblock In {\em Financial Cryptography and Data Security - 23rd International
  Conference, {FC} 2019, Frigate Bay, St. Kitts and Nevis, February 18-22,
  2019, Revised Selected Papers}, pages 23--41, 2019.

\bibitem[DR85]{DR85}
Danny Dolev and R\"{u}diger Reischuk.
\newblock Bounds on information exchange for byzantine agreement.
\newblock {\em J. ACM}, 32(1):191–204, January 1985.

\bibitem[Dra]{ethresearchbeacon}
Justin Drake.
\newblock Minimal vdf randomness beacon.

\bibitem[DS83]{DS83}
Danny Dolev and H.~Raymond Strong.
\newblock Authenticated algorithms for byzantine agreement.
\newblock {\em {SIAM} J. Comput.}, 12(4):656--666, 1983.

\bibitem[DY05]{DY05}
Yevgeniy Dodis and Aleksandr Yampolskiy.
\newblock A verifiable random function with short proofs and keys.
\newblock In {\em Proceedings of the 8th International Conference on Theory and
  Practice in Public Key Cryptography}, PKC’05, page 416–431, Berlin,
  Heidelberg, 2005. Springer-Verlag.

\bibitem[EFKP19]{EFKP19}
Naomi Ephraim, Cody Freitag, Ilan Komargodski, and Rafael Pass.
\newblock Continuous verifiable delay functions.
\newblock {\em {IACR} Cryptology ePrint Archive}, 2019:619, 2019.

\bibitem[FMPS19]{FMPS19}
Luca~De Feo, Simon Masson, Christophe Petit, and Antonio Sanso.
\newblock Verifiable delay functions from supersingular isogenies and pairings.
\newblock In {\em Advances in Cryptology - {ASIACRYPT} 2019 - 25th
  International Conference on the Theory and Application of Cryptology and
  Information Security, Kobe, Japan, December 8-12, 2019, Proceedings, Part
  {I}}, pages 248--277, 2019.

\bibitem[GHM{\etalchar{+}}17]{algorand}
Yossi Gilad, Rotem Hemo, Silvio Micali, Georgios Vlachos, and Nickolai
  Zeldovich.
\newblock Algorand: Scaling byzantine agreements for cryptocurrencies.
\newblock In {\em Proceedings of the 26th Symposium on Operating Systems
  Principles}, SOSP 2017, pages 51--68, New York, NY, USA, 2017. ACM.

\bibitem[GKKZ11]{GKRZ11}
Juan~A. Garay, Jonathan Katz, Ranjit Kumaresan, and Hong-Sheng Zhou.
\newblock Adaptively secure broadcast, revisited.
\newblock In {\em Proceedings of the 30th Annual ACM SIGACT-SIGOPS Symposium on
  Principles of Distributed Computing}, PODC 2011, page 179–186, New York,
  NY, USA, 2011. Association for Computing Machinery.

\bibitem[GKL15]{GKL15}
Juan~A. Garay, Aggelos Kiayias, and Nikos Leonardos.
\newblock The bitcoin backbone protocol: Analysis and applications.
\newblock In {\em Advances in Cryptology - {EUROCRYPT} 2015 - 34th Annual
  International Conference on the Theory and Applications of Cryptographic
  Techniques, Sofia, Bulgaria, April 26-30, 2015, Proceedings, Part {II}},
  pages 281--310, 2015.

\bibitem[HJ16]{HJ16}
Dennis Hofheinz and Tibor Jager.
\newblock Verifiable random functions from standard assumptions.
\newblock In Eyal Kushilevitz and Tal Malkin, editors, {\em Theory of
  Cryptography}, pages 336--362, Berlin, Heidelberg, 2016. Springer Berlin
  Heidelberg.

\bibitem[HMW18]{HMW18}
Timo Hanke, Mahnush Movahedi, and Dominic Williams.
\newblock {DFINITY} technology overview series, consensus system.
\newblock {\em CoRR}, abs/1805.04548, 2018.

\bibitem[HZ10]{HZ10}
Martin Hirt and Vassilis Zikas.
\newblock Adaptively secure broadcast.
\newblock In Henri Gilbert, editor, {\em Advances in Cryptology -- EUROCRYPT
  2010}, pages 466--485, Berlin, Heidelberg, 2010. Springer Berlin Heidelberg.

\bibitem[KK09]{KK09}
Jonathan Katz and Chiu-Yuen Koo.
\newblock On expected constant-round protocols for byzantine agreement.
\newblock {\em Journal of Computer and System Sciences}, 75(2):91 -- 112, 2009.

\bibitem[KR18]{KR18}
Aggelos Kiayias and Alexander Russell.
\newblock Ouroboros-bft: A simple byzantine fault tolerant consensus protocol.
\newblock Cryptology ePrint Archive, Report 2018/1049, 2018.
\newblock \url{https://eprint.iacr.org/2018/1049}.

\bibitem[KRDO17]{KRDO17}
Aggelos Kiayias, Alexander Russell, Bernardo David, and Roman Oliynykov.
\newblock Ouroboros: A provably secure proof-of-stake blockchain protocol.
\newblock In Jonathan Katz and Hovav Shacham, editors, {\em Advances in
  Cryptology -- CRYPTO 2017}, pages 357--388, Cham, 2017. Springer
  International Publishing.

\bibitem[KS11]{king2011breaking}
Valerie King and Jared Saia.
\newblock Breaking the o (n 2) bit barrier: scalable byzantine agreement with
  an adaptive adversary.
\newblock {\em Journal of the ACM (JACM)}, 58(4):18, 2011.

\bibitem[KY84]{KY84}
A.~R. Karlin and A.~C. Yao.
\newblock Probabilistic lower bounds for byzantine agreement and clock
  synchronization.
\newblock Technical report, 1984.

\bibitem[LW15]{lenstra2015random}
Arjen~K Lenstra and Benjamin Wesolowski.
\newblock A random zoo: sloth, unicorn, and trx.
\newblock {\em IACR Cryptology ePrint Archive}, 2015:366, 2015.

\bibitem[MV17]{micali2017optimal}
Silvio Micali and Vinod Vaikuntanathan.
\newblock Optimal and player-replaceable consensus with an honest majority.
\newblock 2017.

\bibitem[MVR99]{vrf}
Silvio Micali, Salil Vadhan, and Michael Rabin.
\newblock Verifiable random functions.
\newblock In {\em Proceedings of the 40th Annual Symposium on Foundations of
  Computer Science}, FOCS 1999, pages 120--, Washington, DC, USA, 1999. IEEE
  Computer Society.

\bibitem[N{\etalchar{+}}08]{nakamoto}
Satoshi Nakamoto et~al.
\newblock Bitcoin: A peer-to-peer electronic cash system.
\newblock 2008.

\bibitem[Pie19]{Pietr19}
Krzysztof Pietrzak.
\newblock Simple verifiable delay functions.
\newblock In {\em 10th Innovations in Theoretical Computer Science Conference,
  {ITCS} 2019, January 10-12, 2019, San Diego, California, {USA}}, pages
  60:1--60:15, 2019.

\bibitem[PS17]{PS17}
Rafael Pass and Elaine Shi.
\newblock The sleepy model of consensus.
\newblock In {\em Advances in Cryptology - {ASIACRYPT} 2017 - 23rd
  International Conference on the Theory and Applications of Cryptology and
  Information Security, Hong Kong, China, December 3-7, 2017, Proceedings, Part
  {II}}, pages 380--409, 2017.

\bibitem[PSS17]{PSS17}
Rafael Pass, Lior Seeman, and Abhi Shelat.
\newblock Analysis of the blockchain protocol in asynchronous networks.
\newblock In {\em Advances in Cryptology - {EUROCRYPT} 2017 - 36th Annual
  International Conference on the Theory and Applications of Cryptographic
  Techniques, Paris, France, April 30 - May 4, 2017, Proceedings, Part {II}},
  pages 643--673, 2017.

\bibitem[Ren19]{Ren19}
Ling Ren.
\newblock Analysis of nakamoto consensus.
\newblock {\em {IACR} Cryptology ePrint Archive}, 2019:943, 2019.

\bibitem[{Shi}19]{Shi19}
E.~{Shi}.
\newblock Analysis of deterministic longest-chain protocols.
\newblock In {\em 2019 IEEE 32nd Computer Security Foundations Symposium
  (CSF)}, pages 122--12213, June 2019.

\bibitem[W{\etalchar{+}}14]{wood2014ethereum}
Gavin Wood et~al.
\newblock Ethereum: A secure decentralised generalised transaction ledger.
\newblock {\em Ethereum project yellow paper}, 151(2014):1--32, 2014.

\bibitem[Wes19]{Wes19}
Benjamin Wesolowski.
\newblock Efficient verifiable delay functions.
\newblock In {\em Advances in Cryptology - {EUROCRYPT} 2019 - 38th Annual
  International Conference on the Theory and Applications of Cryptographic
  Techniques, Darmstadt, Germany, May 19-23, 2019, Proceedings, Part {III}},
  pages 379--407, 2019.

\bibitem[YMR{\etalchar{+}}19]{hotstuff}
Maofan Yin, Dahlia Malkhi, Michael~K. Reiter, Guy~Golan Gueta, and Ittai
  Abraham.
\newblock Hotstuff: Bft consensus with linearity and responsiveness.
\newblock In {\em Proceedings of the 2019 ACM Symposium on Principles of
  Distributed Computing}, PODC 2019, pages 347--356, New York, NY, USA, 2019.
  ACM.

\end{thebibliography}
\appendix
\section{Deferred Proofs from~\cref{sec:impossible-bba}}\label{app:impossible-bba}

\begin{proof}[Proof of~\cref{lem:lower-partially-sync-bba}]
By definition of a partially synchronous network and an adaptive adversary, the adversary has the power to selectively delete
messages during the asynchronous phase. Thus, we simulate the strategy presented in Theorem 4 of~\cite{cc19} during the
asynchronous phase to prove our lower bound.
Because the adversary can selectively choose to change $\Delta$ to any value for any honest replica,
the adversary can at any point during the asynchronous
phase choose to set $\Delta$ to be the synchronous value; thus,
honest replicas have no way to distinguish between when they are in the asynchronous phase or in the synchronous phase.
The adversary then selectively chooses to send messages or set $\Delta$ to be infinite
for messages they want to drop, following the strategy given in Theorem 4 of~\cite{cc19}.
Thus, BBA cannot be reached in $o(n)$ multicasts in expectation with any partially synchronous protocol.

We now show that this statement holds with high probability. As we show in~\cref{sec:bba}, there exist a $O(n)$ multicast protocol
that solves BBA with high probability in the partially synchronous (GST) model.
Thus, suppose for the sake of contradiction that there exists a $o(n)$ multicast protocol
that solves BBA with high probability. Then, in expectation $o(n)$ multicasts are necessary to solve BBA also since the protocol
can always just use the $O(n)$ protocol if the number of multicasts start becoming too large.
Thus, we have a contradiction with our proof above which uses Theorem 4
of~\cite{cc19} and such a high probability protocol does not exist.
\end{proof}

\section{Partially Synchronous Binary Byzantine Agreement with Adaptive Adversaries}
\label{sec:bba}

Although in~\cref{sec:impossible-bba}, we show that
sublinear-multicast binary Byzantine agreement (BBA) is impossible in
the partially synchronous model as defined by~\cite{dls88}.
A natural question to ask is whether we can achieve a linear multicast
BBA protocol under adaptive adversaries in the proper partially synchronous model
and then modify it
so that it achieves sublinear multicasts for an alternative
partially synchronous network model.
Our protocol is inspired by the~\cite{cc19} protocol
in the synchronous model for BBA.
We first state the simple version with $O(n)$ multicast complexity.
In \cref{sec:sublinear-bba}, we modify this protocol
to have sublinear multicast complexity in a slightly different network model.
As in~\cite{cc19} and~\cref{sec:general-consensus}, we assume
for simplicity that oracle $\cryptosort$ exists.
Possible instantiations of $\cryptosort$ are given in~\cref{app:vrfs}.

Each replica maintains the following states: $T_i$, $b_i^*$, $F_i$, $C_i$, and $A_i$. $A_i$ is a counter for maintaining the number of leader messages that a replica
has $\ack$ed. $b_i^*$ is the current bit that the replica is maintaining. $F_i$ is a state counter that maintains whether the bit
will be changed to the newest bit received from the current leader or whether the previous bit will be copied over.
$T_i$ is the epoch number of the next round the replica will propose.
$C_i$ is the epoch number of the last confirmed round that the replica has seen.
A \emph{confirmed round proposal} is a round where replica $i$ has seen more than $2n/3$ $\ack$s during that round.  This set of $> 2n/3$ $\ack$s comprises a \emph{certificate} for the round.

\paragraph{Challenge 1: Preventing rewinding and fast forwarding} It is crucial that the rounds progress so that the adversary cannot influence leader selection. Since, in the partially synchronous model, we have no way to synchronize rounds during the
asynchronous periods of time, in order to prevent the adversary from advancing the rounds much too quickly
(affording them an advantage in terms of being selected leader), we must be able to synchronize to the highest round
that any honest replica is currently in. We must avoid resetting the round to a previous round before a round
in which an honest replica has already sent out its proposal since the sequence of leaders will be known to
the adversary for that period of time. To fix this problem, we need to be able to synchronize the replicas to the same round
\emph{before we send out additional messages}.

An adversary might also try to reset the
epoch number to a round where there is no honest leader. In this case,
no bits will be proposed and so we are potentially stuck in a repeating cycle of not being able to move forward
because the adversary would always reset to that round.
We note that this is not a problem for our protocol, since each round will be confirmed
during the epoch synchronization phase. Thus, any round where any replica decides to
query for a leader token after GST will be used at most twice before the round will \emph{never} be up for confirmation ever again.
We do this by syncing to the \emph{smallest round} after our last confirmed proposed round.
By using this procedure, we simultaneously protect against all three attacks by:
introducing fresh rounds so that the adversary does not have a schedule,
increasing the epoch number to avoid becoming stuck in a round with no honest leader,
and ensuring that the adversary
does not propose an absurdly large epoch number for their benefit.
More details can be found in \cref{sec:warmup-protocol} and \cref{sec:warmup-protocol-analysis}.

\paragraph{Challenge 2: Synchronizing replicas to the same round with processor speeds that differ by $\Phi$}
The difficulty of synchronizing the different replicas in the protocol under the adaptive adversary
model with sublinear multicast complexity is that some
`ticks' of the clock in the traditional synchronization literature~\cite{pbft,dls88} might never reach their recipients.
Furthermore, to the best of our knowledge, all known clock synchronization literature~\cite{pbft,dls88} obtains clock
synchronization in $\tilde{O}(n)$ multicast complexity.
Thus, for our protocol, we must present a novel clock synchronization technique that could be of independent interest to future
researchers. We first present a simplified version of
our linear multicast clock synchronization protocol in~\cref{alg:linear-clock-alg} where
we give a brief overview, but the details of such an implementation are provided in~\cref{app:clock-sync}.
In \cref{sec:sublinear-bba}, we modify this protocol to obtain sublinear multicast complexity in a slightly different network model
based on the partially synchronous network model.
For our more complicated protocol described in \cref{sec:sublinear-bba}, we show that with high probability our
protocol uses sublinear multicast complexity to synchronize to the same round.

\subsection{Clock Synchronization under Adaptive Adversaries}\label{app:clock-sync}

\begin{figure}[t]
\begin{mdframed}[hidealllines=false,backgroundcolor=gray!30]
  \paragraph{Replica $i$, set $C_i = 0$.}
  \paragraph{While protocol not terminated, for an honest replica $i$:}
  \begin{enumerate}
  	\item Multicast $C_i + 1$ and certificate for $C_i$ (when $C_i=0$, no certificate is needed).
    \item Wait $2(\Delta + \Phi)$ time. Record all $C_k$ seen for $k \in [2n]$.
        If $i$ sees a certificate for a round $j > C_i$, $i$ updates its $C_i \leftarrow j$.
        Determine the smallest round greater than $C_i$; let this round be $S$.
  	\item Multicast a vote for $S$.
  	\item Count votes. If any round receives $> \floor{2n/3}$ votes (where the round $\leq S$),
        set \emph{initial round} to this round.
  	\item Multicast vote for this initial round.
  	\item Count votes. If the stored initial round gets $> \floor{2n/3}$ votes (where the round $\leq S$),
        set \emph{tentative round} to this round.
  	\item Multicast vote for this tentative round.
  	\item Count votes. If the stored tentative round gets $> \floor{2n/3}$ votes (where the round $\leq S$), set \emph{pre-confirmed epoch} to this round. Set $C_i$ to be the new pre-confirmed epoch.
  	\item Multicast vote for this pre-confirmed epoch.
  	\item Upon receiving $>\floor{2n/3}$ (including own vote) for the stored pre-confirmed epoch,
        set \emph{confirmed round} to this round. Perform the rest of the protocol only if confirmed round is set.
    \item Timeout and restart with $C_i + 1$ again if any of the above steps take longer than $2(\Delta + \Phi)$ time.
  \end{enumerate}
\end{mdframed}
\caption{Shortened version of the linear multicast simple epoch synchronization protocol.}\label{alg:linear-clock-alg}
\end{figure}

We describe and prove the properties of our clock synchronization mechanism in this section. Our clock synchronization
protocol maintains a synchronized \emph{epoch number} (one can think of the epoch number to be the clock time) among all replicas
while the network remains synchronous. This protocol is inspired by the clever 3-step confirmation protocol proposed by~\cite{hotstuff}.

\begin{enumerate}
\item $i$ multicasts $\tick(S_i)$ to all replicas as a message announcing the round it is currently on.
\item $i$ waits for $2\Delta + 2\Phi$ time and keeps track of all round messages sent to it
while recording the round message with the smallest epoch number it receives containing a round that is
greater than its largest stored \emph{pre-confirmed epoch}.
\item If $i$ receives a certificate showing a greater pre-confirmed epoch
    than the largest pre-confirmed epoch it stored, then it stores the larger pre-confirmed epoch.
\item After $2(\Delta + \Phi)$ time or after it has received $\geq \floor{2n/3}$ proposed rounds,
    $i$ sends a signed \\ $\tick(S, \texttt{`tentative-vote'})$
    message to all other replicas where $S$ is the smallest
epoch number it has received a proposal for (or $S_i$) that is
greater than its largest stored pre-confirmed epoch.
\item $i$ waits and records all $\tick(S, \texttt{`tentative-vote'})$
    messages it receives. If $i$ receives at least
$2f + 1$ $\tick(S, \texttt{`tentative-vote'})$ messages for $S$,
$i$ sends a $\tick(S, \texttt{`pre-confirmed-vote'})$ message to all other replicas.
\item $i$ sets its \emph{tentative final round} to $S$ if it sends out a $\tick(S, \texttt{`pre-confirmed-vote'})$.
\item $i$ waits and records all $\tick(S, \texttt{`pre-confirmed-vote'})$ votes it receives. If $i$ receives $2f+1$ votes for $S$, then $i$ sends the
message $\tick(S, \texttt{`confirmed-vote'})$ and sets its \emph{pre-confirmed epoch} to $S$: $P \leftarrow S$.
\item If $i$ sets a pre-confirmed epoch, it resets its largest pre-confirmed epoch counter to be the new pre-confirmed epoch and records the signatures
of the replicas which voted for this new pre-confirmed epoch as the certificate for this
pre-confirmed epoch.
\item $i$ waits and records all $\tick(S, \texttt{`confirmed-vote'})$ votes it receives. If $i$ receives $2f+1$ votes for $S$, then $i$ sets its \emph{confirmed round} to $S$.
\item If $i$ sets a confirmed round, $i$ sets $S_i \leftarrow S + 1$  and proceeds with the rest of the protocol.
\item If after waiting $2(\Delta + \Phi)$ time and not receiving
    the necessary votes to set a tentative, pre-commit, or commit round, and if its
largest pre-confirmed epoch has not changed, $i$ restarts its synchronization protocol with step 1, resets $S_i = P + 1$ and multicasts $S_i$ (it resets its $S_i$ to be bigger than its most recent pre-commit round $P$ and tries again).
\end{enumerate}

We prove a set of properties for our clock synchronization protocol in the partially synchronous model under adaptive adversaries
that are too strong for our needs, but may be helpful for future work.

\begin{lemma}
Adaptive adversaries behave similarly to non-adaptive adversaries in this clock synchronization protocol.
\end{lemma}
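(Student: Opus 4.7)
The plan is to argue that the linear-multicast clock synchronization protocol of~\cref{alg:linear-clock-alg} contains no hidden role assignment whose late compromise would benefit an adaptive adversary, so any advantage of adaptivity collapses to the power of a static corruption. First, I would observe that, in contrast to the consensus protocol of~\cref{sec:general-consensus}, this clock synchronization protocol does not invoke $\cryptosort$ anywhere: every replica multicasts its current $C_i$, then participates in each of the tentative, pre-confirmed, and confirmed voting stages. There is no privately predictable committee whose membership the adversary would learn only at runtime, and thus the canonical advantage of adaptivity---waiting to see which replica becomes critical before corrupting it---is vacuous in this protocol.

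Second, I would exhibit a reduction: given any adaptive adversary $\mathcal{A}$ achieving some behavior with at most $f$ corruptions, I would construct a static adversary $\mathcal{A}'$ that chooses its $f$ corruptions at the outset and simulates $\mathcal{A}$ internally. The key claim is that every corruption decision made by $\mathcal{A}$ depends only on the public multicast transcript, since the private state of an honest replica $i$ consists only of $C_i$, its current candidate round $S$, and its cast votes---all of which are either themselves multicast by $i$ or deterministic from the multicast history. Therefore $\mathcal{A}'$, running $\mathcal{A}$ in its head against the same network schedule, can predict $\mathcal{A}$'s corruption set in advance and corrupt those same $f$ replicas statically.

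Third, I would verify that late acquisition of an honest replica's signing key yields no equivocation power beyond what the static adversary already has. Each honest replica contributes at most one counted vote per stage per round: a second signed vote from the same key at the same stage does not push any additional round past the $\lfloor 2n/3 \rfloor$ threshold, and signed votes at a stage the replica never reached require forging behavior that is indistinguishable from a statically corrupt participant producing the same message. For message suppression, the asynchronous portion of the network model already permits the adversary to delay any message arbitrarily, subsuming the effect of adaptively silencing a so-far honest replica; after GST, the $2(\Delta+\Phi)$ timeout in the final step forces a restart whenever any stage fails, so adaptive silencing only triggers retries identical to those a static corruption would cause.

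The main obstacle will be pinning down the formal sense of ``behave similarly.'' I expect the cleanest formulation is indistinguishability of the induced distributions over honest replicas' synchronization transcripts and confirmed rounds. The delicate step is verifying that the private state of every honest replica is a deterministic function of the public multicast history together with the common public setup---once this invariant is established, the reduction from $\mathcal{A}$ to $\mathcal{A}'$ goes through, and no stage of the protocol can be influenced by the timing of corruption choices.
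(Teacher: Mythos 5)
The paper's own proof is a terse two-sentence argument about the protocol mechanics: corrupting a node after seeing its messages serves no advantage because honest replicas only count votes for the smallest round larger than their pre-committed round, so extra signed messages are ignored; and adaptively silencing a node is immaterial because all honest nodes vote on the smallest round anyway. Your proof instead constructs a reduction: you show that any adaptive adversary against~\cref{alg:linear-clock-alg} can be simulated in the head by a static adversary that commits to the same $f$ corruptions at time zero, because the protocol makes no call to $\cryptosort$, uses no private randomness, and every honest replica's private state ($C_i$, candidate round $S$, votes cast) is a deterministic function of what that replica has received---which in turn is fully under the adversary's control via the delivery schedule. You then separately check that late key acquisition yields no extra equivocation beyond the static case (a second signed vote at a stage does not push another round past $\lfloor 2n/3\rfloor$), which is where your argument touches the paper's.

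This is a genuinely different route. The paper argues directly and locally about the vote-counting rule; you argue globally via an adaptive-to-static simulation, which is cleaner as a formalization of ``behave similarly'' (transcript indistinguishability) and correctly isolates the structural reason adaptivity is powerless here: the absence of $\cryptosort$ and of any hidden role assignment. The one place to be slightly more careful is in the phrasing ``deterministic from the multicast history'': the private state of replica $i$ is determined by the \emph{delivered} messages, not the multicast ones, so you should say it is deterministic given the multicast history \emph{and} the adversary's delivery schedule; since $\mathcal{A}'$ controls the schedule too, the reduction still goes through, but the invariant as you literally stated it is not what is needed. Also worth flagging that your reduction relies on the clock-sync subprotocol alone being coin-free---the enclosing BBA protocol does flip coins---so the lemma's scope (``in this clock synchronization protocol'') is doing real work in making your simulation step sound.
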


\begin{proof}
Corrupting a node after seeing some of their messages serves no advantage since additional messages do not
impact vote counting (as honest replicas only count the votes for the smallest round larger than
their pre-committed round).
Furthemore, all nodes vote on the smallest round, thus, adaptively preventing a particular node
from voting does not affect the protocol.
\end{proof}

\begin{lemma}\label{lem:safety}
If a replica executes a part of the protocol that comes after clock synchronization,
then at least $f + 1$ honest replicas have the same or larger pre-confirmed epoch at that point in time.
\end{lemma}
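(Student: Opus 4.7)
The plan is to combine a standard quorum-intersection argument with the monotonicity of the pre-confirmed epoch, but with care around adaptive corruption. A replica executes the portion of the protocol after clock synchronization only when it sets a \emph{confirmed round} $S$, which by step 9 of the protocol requires that it receives $2f+1$ messages of the form $\tick(S, \texttt{`confirmed-vote'})$ for this common value $S$. Let $V$ denote this set of $2f+1$ voters, and let $t_2$ be the time at which the replica proceeds past synchronization.

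Next, I would establish that a large fraction of the voters in $V$ are honest at time $t_2$ itself, not merely when they voted. Since the adversary may corrupt at most $f$ replicas over the entire execution, at most $f$ members of $V$ are Byzantine at time $t_2$, so at least $f+1$ members of $V$ remain honest at $t_2$. Call this set $H \subseteq V$ with $|H|\geq f+1$. Because an honest-to-Byzantine transition is irreversible, every replica in $H$ was also honest at the earlier time $t_1$ when it cast its confirmed-vote for $S$, and therefore followed the protocol at $t_1$.

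Then I would invoke step 7 of the synchronization protocol: an honest replica sends $\tick(S, \texttt{`confirmed-vote'})$ precisely in the same step in which it executes the update $P \leftarrow S$ to its pre-confirmed epoch (triggered by collecting $2f+1$ pre-confirmed-votes for $S$). Hence, at $t_1$, each replica in $H$ had pre-confirmed epoch at least $S$. By inspection, the pre-confirmed epoch variable is only ever overwritten by strictly larger values (each update requires a fresh certificate for a larger round), so monotonicity of this state gives that at the later time $t_2$, every replica in $H$ still satisfies $P \geq S$, yielding the desired $f+1$ honest replicas with pre-confirmed epoch at least as large as the confirmed round of the proceeding replica.

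The main obstacle, and the step most worth spelling out carefully, is the adaptive-adversary bookkeeping in the second paragraph. A naive argument would count only the $f+1$ voters that were honest \emph{at the time of voting}, which is insufficient for the conclusion because subsequent adaptive corruptions could in principle flip those replicas; one must instead observe that the global cap of $f$ simultaneous Byzantine replicas applies at every point in time, and apply the cap at time $t_2$ itself. Once that is in place, the rest reduces to routine monotonicity and inspection of which protocol step triggers the $P \leftarrow S$ assignment.
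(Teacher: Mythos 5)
Your proof is correct and follows the same quorum-intersection-plus-monotonicity route as the paper's: $2f+1$ confirmed-votes, at least $f+1$ of which came from replicas that are honest at the moment the proceeding replica moves past synchronization, each of whom set its pre-confirmed epoch to $S$ when casting that vote, and that variable only ever increases. You make the adaptive-adversary timing more explicit than the paper does---you apply the $f$-corruption budget at the later time $t_2$ and transfer honesty backward to vote time by irreversibility, and you argue directly where the paper packages the final step as a short contradiction---but the substance is the same.
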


\begin{proof}
Let time $t$ be the time when the next part of the protocol gets executed (meaning at least one honest replica has set a confirmed round)
with epoch number $S$.
This means that at least $2f + 1$ replicas sent a confirmed vote for the round. Then, at least $f + 1$ honest replicas
sent a confirmed vote (since at most $f$ replicas are Byzantine) and set their pre-confirmed epoch to $S$.
Suppose for the sake of contradiction that at most $f$ honest replicas have epoch number $S_i \geq S$ at time $t$.
Then, it must be the case that at least one of the $f + 1$ honest replicas voted for a round smaller than $S$
after sending a confirmed vote for $S$. This is impossible since they would have
stored $S$ as their largest pre-confirmed epoch when proposing $S_i$,
and we have reached a contradiction.
\end{proof}

We obtain as an immediate corollary:

\begin{corollary}
The protocol after clock synchronization only runs once for every round $S$.
\end{corollary}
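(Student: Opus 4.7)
The plan is to derive this corollary directly from Lemma~\ref{lem:safety} via a standard quorum-intersection argument. I want to show that for any epoch number $S$, at most one ``execution'' of the post-synchronization portion of the protocol is ever tied to $S$; equivalently, no honest replica can set its confirmed round to $S$ after some other honest replica has already done so.

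First I would apply Lemma~\ref{lem:safety} at the moment the first honest replica sets its confirmed round to $S$ and steps into the post-synchronization portion: at that instant, there already exist $f+1$ honest replicas whose stored pre-confirmed epoch is $\geq S$. Next, I would nail down two monotonicity properties of honest behavior in the synchronization protocol: (i)~an honest replica's stored pre-confirmed epoch is nondecreasing over time, since the only update rules overwrite it with a strictly larger value (either a larger certificate observed in step~3 or a newly locked-in pre-confirmed epoch in step~8); and (ii)~an honest replica only issues a $\tick(S,\cdot)$ vote of any flavor (tentative, pre-confirmed, or confirmed) for a round strictly greater than its current stored pre-confirmed epoch, as enforced by the ``round $\leq S$'' filter in the vote-counting steps and the choice of the smallest new round in step~4. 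Combining (i) and (ii), the $f+1$ honest replicas identified by Lemma~\ref{lem:safety} will never again cast a vote of any flavor for any round $\leq S$.

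I would then close with a counting argument. In order to re-confirm round $S$ from this point onward, some replica would need to accumulate $2f+1$ confirmed-votes for $S$, which in turn would require $2f+1$ pre-confirmed-votes for $S$, which in turn would require $2f+1$ tentative-votes for $S$. But the $f+1$ honest replicas guaranteed by the safety lemma are now permanently out of the voting pool for round $S$, leaving at most $n - (f+1) \leq 2f$ replicas (the $\leq f$ remaining honest replicas plus the $\leq f$ Byzantine replicas) that can contribute any vote naming $S$. Since $2f < 2f+1$, no such quorum can ever form, so no honest replica will freshly set its confirmed round to $S$ at any later time; hence the post-synchronization portion of the protocol runs at most once for each value of $S$.

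The main obstacle is the monotonicity claim (i): one must carefully walk through the pseudocode of the synchronization protocol---in particular the timeout-and-restart branch in step~11, which resets $S_i$ but not the stored pre-confirmed epoch $P$---to verify that there is no hidden path by which an honest replica's pre-confirmed epoch can decrease. Once monotonicity is secured, the rest of the argument is a routine quorum-intersection calculation that exploits the $n \geq 3f+1$ assumption already baked into the $\floor{2n/3}$ thresholds.
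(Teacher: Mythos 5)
Your proof is correct and follows the same route as the paper's: invoke Lemma~\ref{lem:safety} to fix $f+1$ honest replicas whose pre-confirmed epoch is already $\geq S$, observe that honest replicas never again vote for a round $\leq$ their pre-confirmed epoch (monotonicity), and close with quorum intersection. The paper compresses all of this into a single sentence, whereas you explicitly surface the two monotonicity facts---pre-confirmed epochs never decrease, and honest votes always target rounds strictly above the stored pre-confirmed epoch---and you correctly flag the timeout/restart branch as the one place that must be checked by hand, since the paper never states monotonicity as a lemma. That extra care is a genuine improvement in rigor. One small caveat worth noting: the bound $n-(f+1)\leq 2f$ that closes your counting step only holds with equality when $n=3f+1$; the paper implicitly operates in that regime (its synchronization pseudocode uses $>\lfloor 2n/3\rfloor$ while its prose uses $2f+1$, which coincide exactly at $n=3f+1$), so you inherit rather than introduce this looseness, but it would be cleaner to phrase the final inequality against the $\lfloor 2n/3\rfloor$ threshold directly.
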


\begin{proof}
Since the pre-commit round for $f+1$ honest replicas is at least $S$ by the time
the protocol runs for round $S$, round $S$ will never be confirmed again (and hence
the protocol will never be run again for round $S$).
\end{proof}

We now prove the liveness of the system after GST.

\begin{lemma}
After GST and assuming GST lasts for at least four voting rounds, all honest replicas will have the same confirmed round and execute the next part of the protocol.
\end{lemma}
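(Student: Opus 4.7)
The plan is to show that after GST the four-phase confirmation cascade proceeds in lockstep across honest replicas, with each phase converging on a common value thanks to the honest supermajority and the $\Delta+\Phi$ delivery bound.

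First, I would argue that after GST (and within one preliminary multicast-and-wait of $2(\Delta+\Phi)$) all honest replicas share the same maximum pre-confirmed epoch $C$. By \cref{lem:safety}, whenever any honest replica sets a confirmed round $C$, at least $f+1$ honest replicas already store the corresponding certificate. Once GST holds, those replicas' multicasts containing the certificate reach every so-far honest replica within $\Delta+\Phi$, so by step~2 of \cref{alg:linear-clock-alg} every honest replica updates $C_i\leftarrow C$. From this point onward, all honest replicas enter step~3 with identical $C_i$.

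Second, I would show that the smallest candidate round $S>C_i$ observed by every honest replica is the same value, namely $C+1$. Every honest replica multicasts $C+1$ in step~1, so each honest replica sees $C+1$ in its wait window. A Byzantine replica can only propose a round $R$ together with a valid certificate for $R-1$; since certificates are unforgeable under the PKI and no certificate exists for any round strictly smaller than $C$ (otherwise $f+1$ honest replicas would have moved further and we would not have $C$ as the common value), a Byzantine proposal can only be for a round $\geq C+1$. Hence the smallest round above $C_i$ that any honest replica computes is exactly $C+1$, independent of which Byzantine messages it received.

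Third, I would chain this uniformity through the four voting phases (vote for $S$, for the initial round, for the tentative round, for the pre-confirmed epoch). In each phase, there are at least $2n/3>\lfloor 2n/3\rfloor$ honest replicas multicasting the same value for round $C+1$; under synchrony all these votes arrive at every honest replica within $\Delta+\Phi$. Thus every honest replica sees more than $\lfloor 2n/3\rfloor$ matching votes and advances to the next phase on the same round. Byzantine votes for other rounds are harmless because in every phase the honest votes alone suffice to push round $C+1$ past the threshold, and no other round can accumulate an honest majority. After the fourth voting phase, every honest replica sets its confirmed round to $C+1$, and at that point every honest replica is ready to execute the next part of the protocol. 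The ``at least four voting rounds'' assumption precisely provides the four $\Theta(\Delta+\Phi)$ windows needed to complete the cascade.

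The main obstacle will be the second step: ruling out the scenario where Byzantine replicas manipulate different honest replicas into voting for different smallest rounds $S$. The resolution hinges on the certificate requirement attached to each proposal $\tick(R)$, which bounds the set of rounds a Byzantine replica can validly propose and lets us pin $S=C+1$ uniformly across honest replicas. Once uniformity is established at the proposal stage, the remaining three phases propagate it essentially mechanically using the $2n/3$-threshold and the $\Delta+\Phi$ delivery guarantee after GST.
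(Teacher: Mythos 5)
Your proof is correct and follows the same high-level strategy as the paper's (extremely terse) proof: after GST, the honest replicas converge on a single round proposal and then drive it through the voting cascade to confirmation. Where the paper's proof simply asserts that "at least one replica will propose a round that is greater than the greatest stored pre-commit round by any honest replica" and that all honest replicas will then vote for it, you fill in the gap that actually matters — why all honest replicas select the \emph{same} candidate round $S$, rather than different ones depending on their local $C_i$. Your decomposition (first synchronize the $C_i$'s to a common $C$ via certificate gossip, then pin $S=C+1$, then chain through the voting phases using the honest $>\lfloor 2n/3\rfloor$ supermajority) is more careful and more convincing than the paper's three-sentence sketch.

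One piece of your second step is misstated, though harmlessly. You claim that ``no certificate exists for any round strictly smaller than $C$.'' This is false: certificates for all rounds up to and including $C$ must exist, since each pre-confirmed round was reached by producing a certificate for the previous one. The reason Byzantine proposals for rounds $\le C$ are irrelevant is not that such certificates are unforgeable or nonexistent, but that step~2/3 of \cref{alg:linear-clock-alg} has each honest replica pick the \emph{smallest round strictly greater than its own} $C_i$, which (after your step~1 has made every $C_i=C$) filters out everything at or below $C$. Combined with the fact that rounds are integers and every honest replica multicasts $C+1$, this forces $S=C+1$ uniformly with no need to argue about which certificates exist below $C$. The direction in which a certificate bound is actually useful — and which you implicitly rely on in step~1 — is that no certificate can exist for a round \emph{strictly greater} than $C$, since forming one would require at least $f+1$ honest pre-confirmation votes, which would already have pushed some honest $C_i$ above $C$. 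I would restate the argument accordingly.
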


\begin{proof}
By induction, we can show that at least one replica will propose a round that is greater than the greatest stored pre-commit round
by any honest replica. Thus, all honest replicas will vote for this proposed round. After voting three times for this round,
all honest replicas will then have the same confirmed round since all replicas will see all other replicas' votes.
\end{proof}

\begin{theorem}
All honest replicas' clocks will be synchronized to the same round after GST in four rounds, $O(\Delta)$ time, and using $O(n)$ multicasts.
\end{theorem}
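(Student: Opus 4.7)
The plan is to combine the safety and liveness lemmas established immediately above with a simple phase-by-phase accounting of messages and time. The preceding liveness lemma has already done the heavy lifting of proving that, post-GST, all honest replicas converge on and confirm the same round $S$; what remains is to make the number of multicasts and the elapsed wall-clock time explicit.

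First I would identify the four communication rounds of the protocol: (1) each honest replica $i$ multicasts $\tick(S_i)$ together with its certificate for its current pre-confirmed epoch, after which it waits $2(\Delta + \Phi)$ and agrees on a common smallest candidate $S$; (2) each honest replica multicasts a tentative-vote for $S$; (3) each honest replica multicasts a pre-confirmed-vote for $S$; and (4) each honest replica multicasts a confirmed-vote for $S$ and collects the incoming confirmed-votes to set its confirmed round. The safety lemma guarantees that after the wait in round 1, every honest replica has updated its local $C_i$ to $C_{\max} := \max_{j\text{ honest}} C_j$ via the received certificates, so all honest replicas compute the same $S = C_{\max} + 1$. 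Combined with $n - f \geq \floor{2n/3} + 1$ under $n \geq 3f + 1$, each voting phase collects more than $\floor{2n/3}$ honest votes within $\Delta + \Phi$, carrying $S$ through the three voting rounds uniformly.

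The complexity bounds then follow by direct counting. Each of the four phases contributes at most one multicast per honest replica, yielding $4n = O(n)$ multicasts in total; point-to-point messages from Byzantine replicas are not counted in the multicast complexity. The overall time is bounded by $4 \cdot 2(\Delta + \Phi) = O(\Delta)$, absorbing the processor skew $\Phi$ into $\Delta$ as standard. The main obstacle is justifying the round-1 convergence on $C_{\max}$: a Byzantine replica could in principle forward a spurious certificate and desynchronize honest replicas. I would dispose of this by invoking the PKI's signature unforgeability, which ensures any valid certificate carries $\floor{2n/3}+1$ genuine signatures and hence corresponds to a round that some honest replica already pre-confirmed; combined with the safety lemma, this rules out adversarial fast-forwarding and makes $C_{\max}$ well-defined and correctly shared.
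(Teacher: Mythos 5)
Your overall decomposition—one tick/proposal phase plus three voting phases, each contributing at most one multicast per honest replica and lasting $O(\Delta + \Phi)$ time, yielding $4n = O(n)$ multicasts and $O(\Delta)$ total time—is exactly how the paper argues the theorem; the paper's proof is essentially the preceding liveness lemma ("at least one replica will propose a round greater than the greatest stored pre-commit round held by any honest replica; all honest replicas vote for it; three rounds of voting synchronize everyone") combined with this multicast/time accounting. So the structure is right.

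Two issues in the middle step deserve attention. First, you invoke the \emph{safety} lemma (Lemma~\ref{lem:safety}) to conclude that after the wait in phase 1 every honest replica has $C_i = C_{\max}$. That lemma does not say this: it asserts that \emph{if} some honest replica passes clock synchronization with epoch $S$, then at least $f+1$ honest replicas have pre-confirmed epoch $\geq S$. The fact you actually need—that all honest replicas converge on $C_{\max}$ within one phase—follows directly from post-GST synchrony plus the protocol step "If $i$ sees a certificate for a round $j > C_i$, update $C_i \leftarrow j$": the honest holder of $C_{\max}$ multicasts its certificate, and within $\Delta + \Phi$ everyone sees it. No appeal to Lemma~\ref{lem:safety} is needed. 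Second, your final-paragraph argument that "any valid certificate carries $\floor{2n/3}+1$ genuine signatures and hence corresponds to a round that some honest replica already pre-confirmed" is slightly too strong as stated. A certificate for round $R$ consists of $2f+1$ \texttt{pre-confirmed-vote} signatures; in the protocol, \emph{sending} a pre-confirmed-vote sets the sender's \emph{tentative} round to $R$, not its pre-confirmed epoch (pre-confirming $R$ requires \emph{receiving} $2f+1$ such votes). So a Byzantine replica could in principle hold a certificate for a round strictly above every honest replica's pre-confirmed epoch, which is exactly the fast-forwarding concern you flag. To be fair, the paper's own three-sentence proof of the preceding lemma does not engage with this either, but if you want to include the argument explicitly it should appeal to the bound on how far pre-confirmed epochs can advance per communication round (the $+1$-per-round invariant proved elsewhere in this section), not to the certificate-implies-honest-pre-confirmed implication.
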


Figure~\ref{alg:linear-clock-alg} is a shortened version of the clock synchronization protocol.

\subsection{Warmup Protocol}\label{sec:warmup-protocol}

Using the tools we introduced above and keeping in mind the stated challenges, we now present our complete
binary Byzantine agreement protocol that reaches agreement in $O(n)$ multicasts.
We assume here that the number of total replicas in the network is $n$ where $n \geq 3f + 1$ where $f$ is the number of Byzantine
replicas in the network.

\begin{figure}
\begin{mdframed}[hidealllines=false,backgroundcolor=gray!30]
\paragraph{Protocol for replica $i$:}
\begin{enumerate}
\item Set $T_i = 1$ and $F_i = 1$ at the start of the protocol
before receiving \emph{any messages} from round leaders (and before sending $\ack$s).
Initialize $b_i^*$ to the bit received initially as input before the protocol starts.
\item The following is performed repeatedly until replica $i$ commits to a bit (outputs a bit):
\begin{enumerate}
\item Run the epoch synchronization process detailed in Figure~\ref{alg:linear-clock-alg}.
    Only proceed with the rest of the protocol after becoming synchronized to an epoch.
    Let $S$ be this confirmed epoch number.
\item Flip a random (fair) coin to determine a bit, $b$.
    Then, check if $\cryptosort(\sk{i}, \prop, S, b) < \pt$ for some value
$\pt$ to be determined later in our analysis.
\item If $\cryptosort(\sk{i}, \prop, S, b) < \pt$, multicast $(\prop, S, b)$ and a proof.
\item After receiving a valid propose $(\prop, S, b')$ message (and proof $\pi_{S}$):
\begin{enumerate}
\item Wait $\delta$ time (for some $\delta$ to be defined in the analysis) to see if it receives another unconfirmed proposal $(\prop, S, b'')$ where $b' \neq b''$. If $i$ receives such a proposal, then
$i$ does nothing this round.
\item Otherwise, set $b_i^* \coloneqq b'$ if $F_i = 0$
\item Multicast $(\ack, S, b_i^*)$.
\end{enumerate}
\item If received $\geq \floor{2n/3}$ $\ack$s $(\ack, S, b')$ (from different replicas) where $b_i^* = b'$, set $F_i \coloneqq 1$.
\item If received $\geq \floor{2n/3}$ $\ack$s $(\ack, S, b')$ (from different replicas) where $> f$ of the $\ack$s are for $b_i^* \neq b'$, set $F_i \coloneqq 0$.
\item If after $\delta$ delay, $i$ eventually received at least $\floor{2n/3}$ $\ack$s $(\ack, S, b)$, then increment $T_i \leftarrow T_i + 1$.
\item At the end of $K$ rounds, i.e.\ when $T_i = K$ (for some $K$ to be determined later in our analysis), if $F_i = 1$, commit to $b_i^*$ and return $b_i^*$ as
output bit.
\item After completing the above protocol, start the epoch synchronization process again.
\item If after waiting $5\delta$ time, no further step can be taken during any point of the above protocol, start with the next iteration at the beginning of this loop.
\end{enumerate}
\end{enumerate}
\end{mdframed}
\caption{Linear multicast Byzantine Agreement protocol.}\label{alg:linear-bba-alg}
\end{figure}

\subsection{Warmup Protocol Analysis}\label{sec:warmup-protocol-analysis}

We define the round complexity of our protocol to be rounds of communication during the synchronous period after GST.
In our analysis, we assume the following two parameters.
Let $T$ be the number of rounds during the asynchronous period of time
before GST. First, we assume that $T = O\left(3^{\log^c{n}}\right)$
for \emph{any} constant $c$. Secondly, we assume that
the synchronous period that immediately follows this asynchronous period lasts for $\Omega\left(\log T\right)$ rounds.\footnote{This assumption is similar to the
assumption assumed by~\cite{algorand}.}

Our protocol stated above uses $\Omega(n)$ multicast messages. This is not
ideal and we fix this assumption in~\cref{sec:sublinear-bba}
under a somewhat different partially synchronous model.
But first, we prove the correctness and round complexity (and hence multicast complexity)
of our simple protocol given above in the partially synchronous model with an adaptive adversary.

\paragraph{Round Synchronization} We first show that after GST, the pre-confirmed
epoch number always progresses forward (i.e. a replica $i$ would never reset
$C_i$ to a lower epoch number). See~\cref{alg:linear-clock-alg} for the epoch synchronization protocol.

\begin{lemma}\label{lem:confirmation-round-increment}
After GST, a replica $i$'s $C_i$ increments by at least $1$.
\end{lemma}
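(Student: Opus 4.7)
The plan is to trace a single pass of the synchronization protocol in~\cref{alg:linear-clock-alg} from the viewpoint of an arbitrary honest replica $i$ after GST, and show that such a pass always produces a pre-confirmed epoch strictly larger than the current $C_i$. First I would note that the protocol assigns to $C_i$ only in two places: (a) upon observing a certificate for some round $j > C_i$, where $C_i \leftarrow j$; and (b) upon collecting $\geq \floor{2n/3}$ pre-confirmed votes for a round $S$ that was chosen in step~2 of the sync protocol strictly greater than $C_i$. Both updates are monotone increases by at least one, so to prove the lemma it suffices to show that after GST one of the two updates necessarily fires within a bounded number of iterations.

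Next I would establish that after GST the protocol cannot stall. Replica $i$ multicasts $C_i+1$, and every honest replica $j$ multicasts $C_j+1$; because message delay is at most $\Delta$ and processor skew at most $\Phi$, within $2(\Delta+\Phi)$ time every honest replica has received all such proposals. Each honest replica therefore selects some well-defined smallest candidate round $S \geq C_i+1$. By~\cref{lem:safety}, no honest replica can propose or vote for a round smaller than the largest pre-confirmed epoch already set by any honest replica; combined with the monotone rule in step~2, this guarantees that the set of candidate rounds chosen by honest replicas all exceed $C_i$.

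I would then run through the three-phase commit. Since $\geq 2n/3$ replicas are honest and all messages after GST are delivered in time, the tentative-vote, pre-confirmed-vote, and confirmed-vote phases each aggregate more than $\floor{2n/3}$ matching votes on a common round $S \geq C_i+1$ (there is at least one honest leader whose candidate is picked by the honest majority, and the remaining mismatched votes are bounded by $f < n/3$). Consequently $i$ either directly sets its pre-confirmed epoch to $S$, or adopts an even larger certificate seen from another honest replica in step~2; either way $C_i$ strictly increases.

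The main obstacle I expect is ruling out the case where honest replicas split their tentative votes across several distinct smallest candidates, so that none individually clears the $\floor{2n/3}$ threshold. I would handle this by observing that although the initially multicast values $C_j+1$ may differ, each honest $j$ then updates $C_j$ upward if it sees any larger certificate (step~2), and all honest $j$ subsequently choose the \emph{same} smallest round strictly above the common updated $C$; any residual disagreement is absorbed by the $\Phi$ processor slack and the restart/timeout rule in step~11, which re-aligns replicas on a common candidate within $O(1)$ retries after GST. This yields a confirmed round $S \geq C_i+1$ and hence the monotone increment claimed by the lemma.
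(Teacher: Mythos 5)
Your proposal is correct and follows essentially the same approach as the paper's proof: the paper splits into the two cases of whether some honest $k$ holds $C_k > j$ (in which case $i$ adopts $k$'s certificate) or $i$ holds the maximum $C$ among honest replicas (in which case $i$'s certificate for $j$ and proposal for $j+1$ propagate to all honest replicas after GST, so they all converge on $S = j+1$ and confirm it). Your proof makes the convergence step in the second case more explicit than the paper does, correctly noting that the certificate exchange in step~2 is what collapses the initially divergent $C_j$ values down to a common maximum before the $S$-selection occurs, and that the timeout/restart rule is what guards against any transient misalignment. The one cosmetic difference is that you hedge with ``$O(1)$ retries after GST,'' whereas the paper asserts the increment happens in a single pass; this hedge is harmless and arguably more careful (it covers the scenario where a Byzantine certificate arrives at some but not all honest replicas mid-wait, an adversarial wrinkle the paper's proof quietly glosses over), but it is not needed for what the paper ultimately uses the lemma for.
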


\begin{proof}
Suppose a replica $i$'s $C_i = j$. After GST, $i$ receives all messages sent by honest replicas. Then, one of two events can happen:

\begin{enumerate}
\item An honest replica $k$ has a $C_k$ where $C_k > j$ (and replica $i$ receives a certificate for $C_k$).
Then, $i$ will update $C_i \leftarrow C_k$ and $i$ has incremented its $C_i$ by at least $1$.

\item $j \geq C_k$ for any other honest replica $k$. Then, by our epoch synchronization
protocol given in~\cref{alg:linear-clock-alg}, $i$ proposes $j + 1$ as
the new round. Since all other $C_k \leq j$, all $k$ will vote for $j+1$. Thus, $C_i$ increments by at least $1$.

Thus, by induction, this is true for all values of $C_i$, provided the base case when $C_i$ is set to $0$ initially.
\end{enumerate}
\end{proof}

We now show that an adversary cannot selectively choose an arbitrarily large epoch number
to synchronize the clock with high probability.

\begin{lemma}\label{lem:num-rounds-after-gst}
Let the largest pre-confirmed epoch immediately before GST held by any honest replica in round $t$ of the
asynchronous period be $C$. Then, assuming that the GST ending round $t$ lasts $\Theta(\log T)$ rounds, then the
largest pre-confirmed epoch held by any honest replica at the end of GST is $O(C + \log T)$.
\end{lemma}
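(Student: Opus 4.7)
The plan is to show that the pre-confirmed epoch can grow by at most a constant per sync round after GST, and therefore by at most $O(\log T)$ over the $\Theta(\log T)$ rounds that GST lasts. Two auxiliary quantities will drive the argument: $M^*(t)$, the largest round for which a pre-confirmation certificate \emph{exists anywhere} in the system at time $t$ (including in the adversary's hands), and $C^*(t)$, the maximum $C_i$ taken over the set of \emph{forever-honest} replicas.

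The central invariant I would prove is $M^*(t) \le C^*(t) + 1$ at every time $t$. A pre-confirmation certificate for round $M$ requires more than $2n/3$ ``tentative-round'' votes, so at least $f+1$ of those votes were cast by replicas that were honest at the moment of voting; by pigeonhole against the adversary's global corruption budget $f$, at least one such voter must be forever-honest. An honest replica only casts a vote for the smallest received proposal greater than its current $C_i$, and since an honest replica always sees its own proposal of $C_i+1$, every vote it casts is for a round $\le C_i+1$. Hence if this forever-honest voter cast a vote for $M$, it had $C_i \ge M-1$ at that moment, and by the monotonicity of $C_i$ (\cref{lem:confirmation-round-increment}) it still has $C_i \ge M-1$ at the current time $t$, giving $C^*(t) \ge M-1$ and thus the invariant.

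Given the invariant, I would bound the per-round growth of $C^*$ after GST as follows: in a single sync round, each forever-honest $i$ can update $C_i$ only by receiving an existing certificate (of value at most $M^*(\text{start}) \le C^*(\text{start})+1$) or by helping to form a new certificate in this round, and any newly formed certificate is for a value that is at most some honest vote, which is at most $C_i+1 \le M^*(\text{start})+1 \le C^*(\text{start})+2$. Therefore $C^*$ grows by at most $2$ per sync round, and over $\Theta(\log T)$ rounds of GST we get $C^*(\text{end}) \le C + O(\log T)$; applying the invariant once more yields that any honest replica's pre-confirmed epoch at the end of GST is at most $M^*(\text{end}) \le C^*(\text{end}) + 1 = O(C + \log T)$. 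The main obstacle will be executing the pigeonhole step rigorously in the adaptive-adversary setting, since replicas that contributed to a certificate may be corrupted after casting their votes; the argument must therefore carefully count only votes cast while the voter was still honest, and exploit the \emph{global} (not per-round) corruption budget $f$ to guarantee a forever-honest voter inside every certificate-forming set, and in particular to rule out that the adversary secretly assembles a huge certificate during the asynchronous period and releases it during GST.
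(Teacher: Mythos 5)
Your proposal takes a genuinely different route from the paper. The paper's proof argues directly and informally that after GST each honest $C_i$ increases by at most one per communication round, because everyone now sees the proposal for $C_i+1$; it does not explicitly address the possibility that the adversary has quietly assembled a certificate for a much larger round during the asynchronous period and then releases it after GST. Your decomposition into $M^*$ (max certificate anywhere) and $C^*$ (max forever-honest pre-confirmed epoch), with the invariant $M^*\le C^*+1$, is precisely the right way to rule that attack out, and is arguably a more careful version of what the lemma needs.

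However, the argument you give for the invariant has a gap. You claim that ``an honest replica always sees its own proposal of $C_i+1$, every vote it casts is for a round $\le C_i+1$,'' so that a tentative vote for $M$ forces $C_i\ge M-1$. But in the protocol (\cref{alg:linear-clock-alg}), replica $i$ first multicasts $C_i^{\mathrm{old}}+1$, \emph{then} in step~2 may update $C_i\leftarrow j>C_i^{\mathrm{old}}$ upon seeing a certificate for $j$, and only afterward picks $S$ as the smallest proposal strictly greater than the \emph{updated} $C_i$. After such an update, $i$'s own proposal $C_i^{\mathrm{old}}+1\le j$ is no longer a valid candidate for $S$, so nothing in your argument bounds $S$ by $j+1$; if the adversary withholds everything in $(j,M)$ and delivers only a proposal for $M$ (which it can, during asynchrony), $i$ will tentative-vote for $M$ even though $C_i=j\le M-2$. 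So the claim ``$C_i\ge M-1$'' for the forever-honest voter does not follow as stated. To make the invariant go through you need to bring in the structural fact that a \emph{proposal} for round $r$ must carry a certificate for $r-1$; this chains the existence of any certificate for $M$ back through certificates for $M-1, M-2, \ldots$, and it is through that chain — not through the voter's own $C_i$ — that you control how far $M^*$ can run ahead. That argument is essentially what the paper invokes in the following lemma (``the largest pre-confirmed epoch cannot increase by more than $1$ following any communication round''), and it, or an induction that tracks both the vote-quorum pigeonhole and the certificate-for-$r-1$ requirement, is what the invariant actually needs. As written, your pigeonhole step is fine, but the reduction from ``forever-honest $k$ voted for $M$'' to ``$C_k\ge M-1$'' is not.
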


\begin{proof}
The replica holding the largest pre-confirmed epoch $C$ will propose $C + 1$ after GST. All honest replicas will subsequently vote for $C +1$ and the largest pre-confirmed
round becomes $C + 1$. We now prove that during each epoch
after GST, the epoch number increases by exactly $1$.
Suppose for contradiction that this is not the case. Suppose that $C_i > C$ for replica $i$ increases from $C_i$ to $C_i + 2$
after one round of communication after GST. Then, $i$ must not have heard $C_i + 1$ being proposed
in this epoch or a previous epoch.
Since $i$ will increase $C_i$ to $C_i + 2$, then either it never received a
proposal for $C_i + 1$ or received a pre-commit certificate for $C_i + 1$.
Either way, this implies that $i$ has not received a proposal for $C_i + 1$
in the current epoch or a previous epoch. Since the network becomes
synchronous after GST, this is impossible. Thus, all honest replicas $i$ increase their $C_i$ by at most $1$ each round after GST.

Therefore, since the total number of rounds after GST is $\Theta(\log T)$,
the largest confirmed epoch number held by any honest replica at the
end of GST will be $O(C + \log T)$.
\end{proof}

\begin{lemma}
Let the largest pre-confirmed epoch at the start of the protocol held by any honest replica be $C$. Assuming that the period of asynchrony lasts $T = O\left(3^{\log^c n}\right)$ rounds
and GST lasts $\Theta(\log T)$ rounds, the largest pre-confirmed epoch held by any honest replica at the start of the first round after GST is $O\left(C + T + \log T\right)$.
\end{lemma}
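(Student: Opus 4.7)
The plan is to decompose the growth of the max pre-confirmed epoch into two phases: the asynchronous phase of $T$ rounds, and the GST synchronous phase of $\Theta(\log T)$ rounds. For the GST phase I will invoke~\cref{lem:num-rounds-after-gst} directly, which gives that if the max pre-confirmed epoch immediately before GST is $C'$, then at the end of GST it is $O(C' + \log T)$. The remaining task is to show that $C' \leq C + O(T)$ so that the total bound becomes $O(C + T + \log T)$.

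The heart of the argument lies in bounding the growth during asynchrony, and I plan to do this by a global honest-vote counting argument. Each time the max pre-confirmed epoch strictly increases, some honest replica must have executed the pre-confirmation step of~\cref{alg:linear-clock-alg}, which requires $> \floor{2n/3}$ tentative votes for the newly confirmed epoch. Since at most $f < n/3$ tentative votes can originate from Byzantine replicas, at least $f+1 = \Omega(n)$ of those tentative votes must come from honest replicas and be specifically for that epoch. Crucially, an honest replica casts a tentative vote for epoch $R$ only if $R$ is the smallest valid proposal strictly above its current $C_k$, so the honest tentative-vote contributions toward distinct pre-confirmed epochs are disjoint multisets of actions.

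Next I will count the total honest tentative-voting activity over $T$ asynchronous rounds. Each honest replica participates in at most one sync-protocol invocation at a time and emits at most $O(1)$ tentative votes per communication round, so the total number of honest tentative votes across all replicas over the asynchronous period is $O(nT)$. Dividing by the $\Omega(n)$ honest votes needed per new pre-confirmed epoch yields at most $O(T)$ new pre-confirmed epochs during asynchrony, so $C' \leq C + O(T)$. Combining this with the GST bound from~\cref{lem:num-rounds-after-gst} gives the desired $O(C + T + \log T)$.

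The hard part will be justifying the disjointness of the honest vote contributions across distinct confirmed epochs, because during asynchrony the adversary freely delays and reorders messages and may attempt to pipeline many sync-protocol invocations in parallel to amplify its effect. The argument will have to carefully exploit (i) that an honest replica's tentative vote for a given round $S$ is issued at most once per sync-protocol invocation, (ii) that $S$ is always the smallest valid proposal above that replica's current $C_k$ at the time of voting, and (iii) that Byzantine tentative-vote contributions to any single epoch are forever capped by $f$. Once these three facts are firmly pinned down, the counting bound $O(nT)/\Omega(n) = O(T)$ on new pre-confirmed epochs during asynchrony, and hence the claimed $O(C + T + \log T)$ bound, follow directly.
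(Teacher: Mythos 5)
Your decomposition into an asynchronous phase followed by the GST phase (handled via Lemma~\ref{lem:num-rounds-after-gst}) matches the paper, but your bound for the asynchronous phase takes a genuinely different route. The paper uses a local per-round induction: if the largest pre-confirmed epoch at round $j$ is $K$, then no certificate for $K+1$ can exist at the start of round $j+1$, every honest proposal has the form $C_i+1 \leq K+1$, so nobody can vote for $K+2$; hence the maximum advances by at most one per communication round, giving $O(C+T)$ directly. You instead use a global counting argument --- $\Omega(n)$ fresh distinct honest votes per new certificate against an $O(nT)$ honest-vote budget over the asynchronous period. Both work, but you should make explicit the chain property (a proposal for $S$ requires a valid certificate for $S-1$, so the set of pre-confirmed epochs reached forms a contiguous prefix starting at $C$): that fact is precisely what converts ``$O(T)$ new certificates'' into ``$C' \leq C + O(T)$,'' and without it a single new certificate could in principle jump $C'$ arbitrarily far. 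The disjointness you flag as the hard part is actually the easy part, since each vote names one specific epoch and hence contributes to at most one certificate. The paper's per-round invariant is shorter and also yields the tighter $+1$-per-round fact; your counting version trades that for a mass-conservation style that is more modular and would survive protocol changes that break the strict one-per-round cadence.
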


\begin{proof}
The largest pre-confirmed epoch cannot increase by more than $1$ following any communication round. Let the largest pre-confirmed epoch held by any replica
at round $j \leq T$ be $K$. Then, no replica at the start of round $j + 1$ holds a certificate to pre-confirmed epoch $K + 1$. Thus, no replica will vote
for round $K + 2$ since all replicas will propose round $\leq K + 1$. Hence, the largest pre-confirmed epoch cannot increase by more than $1$ following
any communication round even during periods of asynchrony.
Since the period of asynchrony lasts $T$ rounds, at the start of GST, the largest pre-confirmed epoch held by any honest replica will be $O(C + T)$.
By~\cref{lem:num-rounds-after-gst}, the largest pre-confirmed epoch
held by any honest replica after GST will be $O\left(C + T + \log T\right)$.
\end{proof}

The way the adversary can take advantage of epoch synchronization for
their benefit is to synchronize to an epoch where a large number of Byzantine replicas
become leaders before the necessary number of rounds of honest leaders occur.
In such cases, honest replicas could be prevented from reaching consensus for
longer than polylogarithmic number of rounds. We first present here a lemma
that we later use in our argument that such a case most likely will not occur with
high probability.

\begin{lemma}\label{lem:consecutive-rounds}
Let the period of asynchrony be $O(\async)$ rounds. Then, the probability that $X$ consecutive rounds
in this period of asynchrony have at least one Byzantine leader during each of the $X$ rounds
is upper bounded by $T \left(2fD_0\right)^X$ where $D_0$ is defined in~\cref{alg:linear-bba-alg}.
\end{lemma}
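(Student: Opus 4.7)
The plan is to bound the probability that a given window of $X$ consecutive rounds is ``bad'' (each round has at least one Byzantine leader), and then take a union bound over all possible windows within the asynchronous period.

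First, I would analyze a single round. By the cryptographic sortition primitive described in \cref{sec:prelims}, for any fixed epoch number $S$ and any fixed proposal bit $b \in \{0,1\}$, the output $\cryptosort(\sk{i}, \prop, S, b)$ is (computationally) indistinguishable from a uniform random value, so each replica $i$ satisfies the eligibility check $\cryptosort(\sk{i}, \prop, S, b) < \pt = D_0$ with probability at most $D_0$. Since a replica in \cref{alg:linear-bba-alg} proposes after flipping a private coin for $b$, the event that replica $i$ becomes a leader in round $S$ (for either choice of $b$) is contained in the union of the two eligibility events, so by a union bound its probability is at most $2 D_0$. Taking another union bound over the at most $f$ Byzantine replicas, the probability that at least one Byzantine leader is elected in a fixed round is at most $2 f D_0$. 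Crucially, the adversary's adaptive corruptions do not help here: the set of replicas that are leader-eligible for epoch $S$ is determined by sortition outputs on the (public) input $(S,b)$, so once a replica's keys are fixed its leader eligibility for each future epoch is fixed, and any corruption only swaps an honest leader for a Byzantine one within the at-most-$f$ budget already counted.

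Next, I would extend this to $X$ consecutive rounds. Because $\cryptosort$ produces pseudorandom outputs with independent keys and distinct public inputs (the epoch number $S$ differs across rounds), the events ``round $S+j$ has at least one Byzantine leader'' for $j = 0, 1, \ldots, X-1$ are mutually independent (up to a negligible term absorbed into the pseudorandomness assumption on $\cryptosort$). Therefore the probability that a fixed window of $X$ consecutive rounds is entirely dominated by Byzantine leaders is at most $(2 f D_0)^X$. Finally, the asynchronous period has $O(T)$ rounds, so there are at most $T$ possible starting positions for such a window; a union bound over these starting positions yields the claimed bound $T \cdot (2fD_0)^X$.

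The only subtle point, and thus the main thing to be careful about, is the independence argument across rounds together with the robustness of the single-round bound against an adaptive adversary. I would handle it as above: eligibility is a deterministic function of a replica's key and the public input, so adaptive corruption cannot increase the expected number of Byzantine leaders beyond $f \cdot 2D_0$ per round, and the pseudorandomness of $\cryptosort$ on distinct inputs gives the needed independence (with a $\mathrm{negl}(\secp)$ slack that is absorbed into the stated high-probability bound). Everything else is two clean applications of the union bound.
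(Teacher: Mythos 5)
Your proof is correct and takes essentially the same approach as the paper's: bound a single round by $2fD_0$ (union over the two bits, then over the at most $f$ Byzantine replicas), multiply across $X$ consecutive rounds using independence of sortition on distinct epoch inputs, and union-bound over the $T$ possible starting positions. You spell out the pseudorandomness/independence step and the adaptive-corruption point somewhat more explicitly than the paper does, but the argument is the same.
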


\begin{proof}
Given that there are at most $f$ Byzantine replicas (where there are at least $n \geq 3f + 1$ total replicas)
at any point in time and each has two chances to become a leader
for any round (once for bit $0$ and another chance for bit $1$), the probability that a Byzantine replica proposes
a bit in any round is at most $2fD_0$ where $D_0 \leq 1$ is our threshold for proposing a bit.
Given a particular starting round, the probability that the next $X$ consecutive rounds have at least one
Byzantine replica proposing a bit is $\left(2fD_0\right)^X$. By the union bound over all
possible starting rounds in the period of asynchrony, the probability that there exists such a sequence of $X$ rounds is upper
bounded by $T\left(2fD_0\right)^X$.
\end{proof}

Using the above we show that if the period after GST is sufficiently long,
then we synchronize to a round with no Byzantine leaders with
high probability.

\begin{lemma}\label{lem:termination-time}
Let the period of asynchrony be $O(T)$ rounds. Then, with high probability, the protocol reaches a
round with no Byzantine leaders in $O\left(\log_{\frac{1}{2fD_0}}\left(nT\right)\right)$ rounds after GST (assuming $f > 0$).
\end{lemma}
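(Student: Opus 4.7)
The plan is to combine the epoch-progress facts established just above with the tail bound from~\cref{lem:consecutive-rounds} and a union bound over starting positions. First, I would observe that once GST is reached, \cref{lem:confirmation-round-increment} guarantees that the pre-confirmed epoch $C_i$ of every honest replica strictly increases each time a new epoch is confirmed, and (as shown in the proof of~\cref{lem:num-rounds-after-gst}) the epoch counter advances by exactly one per synchronization round. Thus, after GST, the honest replicas sweep through a contiguous sequence of epoch numbers; reaching a round with no Byzantine leader is therefore equivalent to hitting one epoch in this contiguous sequence whose leader lottery produces no Byzantine proposer.

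Next, I would bound the range of epoch numbers that can ever be visited. By the analysis preceding this lemma, the largest pre-confirmed epoch that any honest replica could possibly hold at the start of the first synchronous round is $O(C + T + \log T)$, which is $O(T)$ (absorbing the initial pre-protocol constant $C$ and the $\log T$ additive term). Hence every epoch reached post-GST lies in an interval of length $O(T)$. I would then invoke~\cref{lem:consecutive-rounds} and a union bound over the $O(T)$ possible starting points of a length-$X$ run of ``bad'' rounds in this interval: the probability that \emph{any} window of $X$ consecutive epochs in this range is entirely populated by Byzantine-containing leader lotteries is at most $O\bigl(T \cdot (2fD_0)^X\bigr)$.

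Now I would pick $X$ so that this probability is inverse-polynomial in $n$. Setting
\begin{equation*}
X \;=\; c\,\log_{\frac{1}{2fD_0}}(nT)
\end{equation*}
for a sufficiently large constant $c$ yields $T \cdot (2fD_0)^X = T/(nT)^c = 1/(n^c T^{c-1})$, which is $o(1/n^{c'})$ for every constant $c'$ as long as $c$ is chosen large enough (using $T \geq 1$). Because the VRF outputs used for the leader lottery are independent of the epoch synchronization messages and are indistinguishable from random (\cref{sec:prelims}), the per-round events ``at least one Byzantine leader'' are independent with the probability bound $2fD_0$ used in~\cref{lem:consecutive-rounds}, so this tail bound applies.

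Putting these pieces together, with high probability in $n$ there is no run of $X = O(\log_{1/(2fD_0)}(nT))$ consecutive epochs after GST in which some Byzantine replica is elected leader, so within that many rounds after GST the protocol must traverse an epoch whose leader lottery returns no Byzantine proposer. The main subtlety I anticipate is justifying the union bound cleanly: one has to argue that the ``starting positions'' of candidate bad windows are all contained in an $O(T)$-size interval (which is exactly what the bound on the largest attainable pre-confirmed epoch gives us) and that the independence between epochs used inside~\cref{lem:consecutive-rounds} is preserved even though the adversary adaptively chose which epoch to synchronize to during the asynchronous period; this follows because the VRF outputs defining leadership for each epoch number are fixed (and secret) at key generation, so the adversary's scheduling cannot bias them.
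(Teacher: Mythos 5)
Your proposal is correct and follows essentially the same route as the paper: you invoke \cref{lem:consecutive-rounds} (which already packages the union bound over the $O(T)$ possible starting positions of a bad length-$X$ window into the $T(2fD_0)^X$ tail bound) and then solve $T(2fD_0)^X \leq 1/n^c$ for $X$, yielding $X = O\bigl(\log_{1/(2fD_0)}(nT)\bigr)$. The extra paragraph on the contiguity of post-GST epochs, the $O(T)$ bound on the starting epoch, and the independence of the per-epoch VRF-based leader lotteries is a helpful elaboration of what the paper leaves implicit, but it does not change the argument.
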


\begin{proof}
By~\cref{lem:consecutive-rounds}, the probability that there exists a
period of $X$ rounds where at least one adversarial leader exists
in each of the $X$ rounds is $T\left(2fD_0\right)^X$. We want the smallest $X$ where $T\left(2fD_0\right)^X \leq \frac{1}{n^c}$ for all constant $c$
by which we obtain with high probability an epoch with no Byzantine
leaders after $X$ epoch after GST (since the adversarial strategy is
to synchronize the epoch to the beginning of the $X$ epochs after GST).
Solving for $X$, we obtain, $X \geq c\log_{1/2fD_0}\left(n\right) + \log_{1/2fD_0}\left(T\right)$.
\end{proof}

Now we show that we synchronize to a round with exactly one honest leader.

\begin{lemma}\label{lem:one-honest}
Let $D_0 = 1/2n$ and the period of asynchrony be $O(T)$ rounds. Then, with high probability, the protocol reaches a
round with exactly one honest leader in $O\left(\log \left(nT\right) + \log n\right)$ rounds after GST.
\end{lemma}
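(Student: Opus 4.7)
The plan is to decompose the claim into two independent events: (i) ``escape the asynchrony-induced bad window of Byzantine leaders,'' handled by the previous lemma, and (ii) ``land on a round with exactly one honest leader once we are past that window,'' a pure concentration statement on a fresh, adversary-independent epoch. With $D_0 = 1/(2n)$ we have $1/(2fD_0) = n/f \geq 3$, so Lemma~\ref{lem:termination-time} directly gives that within $O(\log_3(nT)) = O(\log(nT))$ rounds after GST, with high probability we reach some epoch $S^\star$ in which no Byzantine replica is an eligible leader. I would take $S^\star$ as the starting point for the second phase.

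For the second phase, fix any epoch $S \geq S^\star$. Each honest replica $i$ flips a fresh bit $b$ and becomes a leader iff $\cryptosort(\sk{i},\prop,S,b) < 1/(2n)$, an event of probability exactly $1/(2n)$ and independent across honest replicas (by the VRF pseudorandomness assumption and independence of the random bit flips). Letting $X$ be the number of honest leaders in epoch $S$, we have $X \sim \mathrm{Binomial}(n-f,\,1/(2n))$. Because $n \geq 3f+1$ gives $\mu := (n-f)/(2n) \in [1/3,1/2]$, a direct computation
\[
\Pr[X=1] \;=\; (n-f)\cdot\tfrac{1}{2n}\cdot\bigl(1-\tfrac{1}{2n}\bigr)^{n-f-1} \;\geq\; \mu\,e^{-\mu}\bigl(1+o(1)\bigr)
\]
yields $\Pr[X=1] \geq p_1$ for some absolute constant $p_1 > 0$. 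Simultaneously, the probability that any Byzantine replica becomes a leader in the same epoch is at most $2fD_0 = f/n \leq 1/3$, so the epoch is ``good'' (exactly one honest leader, no Byzantine leader) with probability $p^\star \geq p_1 \cdot 2/3 = \Omega(1)$, independently across epochs since $\cryptosort$ is called on the fresh epoch label $S$ each time.

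To conclude, run epochs $S^\star, S^\star+1,\ldots,S^\star + c\log n$ for a sufficiently large constant $c$. The chance that none of them is good is at most $(1-p^\star)^{c\log n} \leq n^{-c'}$ for any desired constant $c'$. A union bound with the high-probability failure event of Lemma~\ref{lem:termination-time} gives the combined failure probability $O(1/n^{c'})$, and the total number of rounds after GST is $O(\log(nT)) + O(\log n) = O(\log(nT) + \log n)$, as claimed. The only delicate point, which I expect to be the main obstacle, is arguing that epochs past $S^\star$ are genuinely ``fresh'': one must invoke the cryptographic sortition / VRF unpredictability to rule out the adversary precomputing favorable proposal outputs for future epochs during asynchrony, and verify that the adversarial selection of the pre-GST epoch (exploited in Lemma~\ref{lem:termination-time}) does not secretly correlate with the VRF outputs of post-$S^\star$ epochs. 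Once this independence is made explicit, the geometric tail bound above closes the argument.
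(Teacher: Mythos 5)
Your two-phase decomposition — use Lemma~\ref{lem:termination-time} to exit the Byzantine-leader window in $O(\log(nT))$ rounds, then argue a good epoch (exactly one honest leader, no Byzantine leader) appears within $O(\log n)$ more because each fresh epoch is good with constant probability — is exactly the paper's approach, only spelled out in considerably more detail than the paper's one-line invocation of ``the Chernoff bound.'' The freshness concern you correctly flag at the end is genuine but resolvable by the same union bound over the at most $T$ adversarial starting-epoch choices that the paper already uses for the Byzantine side in Lemma~\ref{lem:consecutive-rounds}: make the per-window failure probability $\leq 1/(Tn^c)$ by extending the window to $O(\log(Tn))$ epochs, which the claimed $O(\log(nT)+\log n)$ round bound already absorbs.
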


\begin{proof}
By~\cref{lem:termination-time}, we obtain a round with no Byzantine leaders
with high probability in $O\left(\log \left(nT\right)\right)$
rounds after GST if we assume $D_0 = 1/2n$. By the Chernoff bound, with high probability, exactly one honest leader will exist after $O(\log n)$ rounds.
Thus, exactly one honest leader exists in a round after GST in $O\left(\log \left(nT\right) + \log n\right)$ rounds.
\end{proof}

\begin{lemma}\label{lem:honest-round-sublinear}
Round synchronization using sublinear number of rounds to a round with exactly one honest leader occurs with high probability after GST
if we assume the period of asynchrony lasts $O\left(3^{n^{\epsilon}}\right)$ rounds assuming $D_0 = 1/2n$ and for all $0 < \epsilon < 1$.
\end{lemma}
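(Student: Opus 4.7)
The plan is to obtain this lemma as an almost-immediate corollary of \cref{lem:one-honest}, by substituting the stated asynchrony bound $T = O\!\left(3^{n^{\epsilon}}\right)$ into the round bound already established there. No new probabilistic argument is needed; the entire content is a calculation showing the round bound is $o(n)$.

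Concretely, I would proceed as follows. First, I apply \cref{lem:one-honest} with $D_0 = 1/(2n)$: this already guarantees that, with high probability, the protocol synchronizes to a round with exactly one honest leader within $O(\log(nT) + \log n)$ rounds after GST. Next, I verify that the hidden logarithm base is a constant bounded away from $1$: since the implicit base in the proof of \cref{lem:termination-time} is $1/(2fD_0)$, and since $f < n/3$ together with $D_0 = 1/(2n)$ gives $2fD_0 < 1/3$, the base is at least $3$, so it can be absorbed into the big-$O$. Substituting $T = O(3^{n^{\epsilon}})$ then yields
\[
  \log(nT) \;=\; \log n + n^{\epsilon}\log 3 \;=\; O(n^{\epsilon}),
\]
so the total number of post-GST rounds needed is $O(n^{\epsilon} + \log n) = O(n^{\epsilon})$. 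For any $0 < \epsilon < 1$ this is $o(n)$, i.e.\ sublinear in $n$.

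There is no substantive obstacle: the lemma is stated essentially to package the preceding calculation in the specific regime $T = O(3^{n^{\epsilon}})$ which will be used downstream. The only mild care points are (i) making explicit that the base of the logarithm really is a constant under the stated choice of $D_0$ and the assumption $f < n/3$, so that the $O(\log_{1/(2fD_0)}(T))$ bound from \cref{lem:termination-time} becomes a true $O(\log T)$ bound, and (ii) observing that the union bound used in \cref{lem:consecutive-rounds} and \cref{lem:termination-time} to handle \emph{all} possible adversarial choices of starting round across the entire asynchronous period still absorbs the factor of $T$ as a low-order additive $O(n^{\epsilon})$ term in the exponent, preserving the high-probability guarantee inherited from \cref{lem:one-honest}.
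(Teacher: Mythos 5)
Your proof takes essentially the same route as the paper's: both apply \cref{lem:one-honest} and substitute $T = O(3^{n^{\epsilon}})$ to obtain a round bound of $O(n^{\epsilon} + \log n)$, which is sublinear for $0 < \epsilon < 1$. The extra care you take in (i), checking that $2fD_0 < 1/3$ so the logarithm base from \cref{lem:termination-time} is a constant $\geq 3$, is a worthwhile clarification that the paper's proof leaves implicit.
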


\begin{proof}
By~\cref{lem:one-honest}, the number of rounds required before reaching a round with exactly one honest leader is $O\left(\log \left(nT\right) + \log n\right)$.
Setting $T = 3^{n^{\epsilon}}$, we obtain that the number of rounds required
before reaching an epoch with exactly one honest leader is $O\left(n^{\epsilon} + \log n\right)$,
which is sublinear if we assume $0 < \epsilon < 1$.
\end{proof}

\begin{corollary}\label{cor:polylog-rounds}
Epoch synchronization using polylogarithmic number of rounds to an
epoch with exactly one honest leader occurs with high probability after GST
if we assume the period of asynchrony lasts $O\left(3^{\log^c n}\right)$ rounds assuming $D_0 = 1/2n$ and for all constants $c \geq 0$.
\end{corollary}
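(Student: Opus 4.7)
The plan is to directly invoke Lemma~\ref{lem:one-honest} and substitute the assumed bound on the asynchronous period. That lemma already states that, with $D_0 = 1/2n$, a round after GST with exactly one honest leader is reached with high probability in $O(\log(nT) + \log n)$ rounds, where $T$ is the length of the asynchronous period. So the only work left is to check that plugging in $T = O(3^{\log^c n})$ yields a polylogarithmic bound in $n$.

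First I would compute $\log T$ under the hypothesized bound:
\[
\log T = \log\!\bigl(3^{\log^c n}\bigr) = \log^c n \cdot \log 3 = \Theta(\log^c n).
\]
Then $\log(nT) = \log n + \log T = \Theta(\log n) + \Theta(\log^c n) = O(\log^{\max(1,c)} n)$, and adding the extra $\log n$ term from Lemma~\ref{lem:one-honest} does not change this asymptotic order. Hence the total number of rounds after GST needed to reach an epoch with exactly one honest leader is $O(\log^{\max(1,c)} n)$, which is polylogarithmic in $n$ for every fixed constant $c \geq 0$.

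Finally, I would observe that the high-probability guarantee is inherited unchanged from Lemma~\ref{lem:one-honest}: the only probabilistic ingredients are (i) that no epoch in some $O(\log_{1/(2fD_0)}(nT))$-length window has a Byzantine leader (from Lemma~\ref{lem:termination-time}) and (ii) that a Chernoff bound produces exactly one honest leader within $O(\log n)$ further epochs. Both bounds hold with probability at least $1 - 1/n^c$ for every constant $c$, so by a union bound the combined event holds with high probability. There is no real obstacle here; the corollary is essentially a specialization of Lemma~\ref{lem:one-honest} to the regime $T = O(3^{\log^c n})$, and the only thing to be careful about is making the case split $c < 1$ vs.\ $c \geq 1$ explicit so that the final bound is genuinely $\poly\log n$ rather than just $\log(nT)$.
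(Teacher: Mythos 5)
Your proposal is correct and matches the paper's approach exactly: the corollary is proved (like the paper's Lemma~\ref{lem:honest-round-sublinear}) by direct substitution of the asynchrony bound $T$ into Lemma~\ref{lem:one-honest}'s $O(\log(nT) + \log n)$ round bound, and noting $\log(3^{\log^c n}) = \Theta(\log^c n)$ gives a polylogarithmic total. Your explicit handling of the $c<1$ versus $c\geq 1$ cases to get $O(\log^{\max(1,c)} n)$ is a small but welcome precision that the paper leaves implicit.
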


\paragraph{Consistency within an epoch.} In this section, we show that if two replicas
receive sufficiently many votes for their bits $b_i^*$ and $b_j^*$, then $b_i^* = b_j^*$.

\begin{lemma}\label{lem:consistency}
Suppose a forever honest replica $i$ observed $\geq \floor{2n/3}$ $\ack$s from a set of
replicas $S$ for $b_i^*$ and forever honest
replica $j$ also receives $\geq \floor{2n/3}$ $\ack$s from replica set $S'$ for $b_j^*$. Then, at least one forever honest replica
exists in $S \cap S'$ and $b_i^* = b_j^*$.
\end{lemma}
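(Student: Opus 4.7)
The plan is to use a standard quorum intersection argument coupled with the observation that an honest replica $\ack$s at most one bit per epoch. Because the proof concerns a single synchronized epoch $S$, step (4) of the protocol in \cref{alg:linear-bba-alg} is the structural fact we will exploit: a forever honest replica, upon receiving conflicting proposals $(\prop, S, b')$ and $(\prop, S, b'')$ with $b' \neq b''$, sends \emph{no} $\ack$ during this epoch; otherwise it multicasts the single message $(\ack, S, b_i^*)$ with its current stored bit. Hence each forever honest replica contributes at most one $\ack$ message for epoch $S$, and that message is broadcast uniformly to everyone.

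First I would establish the quorum intersection bound. Since $|S|,|S'| \geq \lfloor 2n/3 \rfloor$ and both sets are subsets of the $n$ replicas, inclusion--exclusion gives $|S \cap S'| \geq 2\lfloor 2n/3 \rfloor - n$, which under the assumption $n \geq 3f+1$ is at least $f+1$ (this is the point where one would check, as is standard for such $n/3$-resilient protocols, that the threshold used in the protocol is large enough to guarantee the intersection exceeds the Byzantine budget; if the stated $\lfloor 2n/3 \rfloor$ is off by one from what the bound requires it can be nudged to $\lceil (2n+1)/3 \rceil$ or $\lfloor 2n/3 \rfloor + 1$ without affecting any other part of the paper). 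Because at most $f$ replicas in $S \cap S'$ can be Byzantine, there exists at least one forever honest replica $k \in S \cap S'$, which proves the first half of the lemma.

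Next I would argue $b_i^* = b_j^*$. By membership of $k$ in $S$, the forever honest replica $k$ multicast an $\ack$ of the form $(\ack, S, b_i^*)$; by membership of $k$ in $S'$, it multicast an $\ack$ of the form $(\ack, S, b_j^*)$. By the single-$\ack$-per-epoch property derived above, $k$ sent at most one $\ack$ message in epoch $S$, and since all of $k$'s messages are multicasted identically to every recipient, the bits appearing in the two $\ack$s counted toward $i$'s quorum and $j$'s quorum must be the same, i.e.\ $b_i^* = b_j^*$.

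The main obstacle is really just verifying the two design invariants the proof leans on: (i) that the quorum threshold in the protocol is tight enough to force $|S \cap S'| > f$ under $n \geq 3f+1$, and (ii) that a forever honest replica never issues $\ack$s with two different bit values in the same epoch. The first is a one-line inclusion--exclusion once the exact threshold is fixed; the second requires a brief case analysis of step (4) of \cref{alg:linear-bba-alg}, noting that the $\delta$-wait before $\ack$ing ensures $k$ either sees both proposals and abstains, or commits to a single bit which it then multicasts once, so no adaptive corruption after the $\ack$ is sent can retroactively change the bit observed by $i$ and $j$.
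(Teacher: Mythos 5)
Your proof is correct and uses the same quorum-intersection argument as the paper, just organized more cleanly (first establishing the forever-honest $k \in S \cap S'$, then using $k$'s single-$\ack$-per-epoch invariant to force $b_i^* = b_j^*$, whereas the paper argues by contradiction in the opposite order). You are also right to flag the threshold: with the lemma's stated bound $\geq \lfloor 2n/3\rfloor$ and $n = 3f+1$, the intersection $|S\cap S'| \geq 2\lfloor 2n/3\rfloor - n = f-1$ is \emph{not} large enough to guarantee an honest member; the paper's own proof silently switches to $\lfloor 2n/3\rfloor + 1$, and your suggested nudge of the threshold is exactly what is needed to make both the lemma statement and the protocol consistent.
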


\begin{proof}
We prove this via contradiction. Suppose that honest replica $i$ observed $\geq \floor{2n/3}$ $\ack$s from a set of
replicas $S$ for $b_i^*$ and forever honest
replica $j$ also receives $\geq \floor{2n/3}$ $\ack$s from replica set $S'$ for $b_j^*$ but $b_i^* \neq b_j^*$.
By the assumptions of our network model, this means that at least $\floor{2n/3} + 1$ replicas voted for $b_i^*$
and a disjoint set of at least $\floor{n/3} + 1$ replicas voted for $b_j^* \neq b_i^*$ since honest replicas vote at most once per round and
there exists at most $\floor{n/3}$ Byzantine replicas in the network.
Thus, the total number of replicas in the network must be $> n$, a contradiction.
Therefore, by the pigeonhole principle, there exists at least one
forever-honest replica in the set $S \cap S'$.
\end{proof}

\paragraph{Termination in $O\left(\log\left(nT\right) + \log n\right)$ rounds after GST with high probability.}
After GST, the various $C_i$ and $C_j$ for honest replicas $i$ and $j$ might be de-synced. Suppose
$D_0 = \frac{1}{2n}$, we show that with high probability,
there exists an epoch $R_i$ where some honest replicas see $\geq \floor{2n/3}$ $\ack$s
on a single bit after $O\left(\log \left(nT\right) + \log n\right)$ rounds after GST assuming the period of asynchrony lasted $O(T)$ rounds. If all honest replicas
start with the same input bit $b$, then, with high probability, after $O\left(\log \left(nT\right) + \log n\right)$ rounds, all replicas will output $b$.
It takes $O\left(\frac{\Delta\cdot D_0}{\delta}\right)$ proposal messages
before all replicas are synced to the same epoch, the smallest epoch that has not been pre-committed.
Let this round be $R_k$. Setting $\delta$ to be on the
order of $\Delta$, we obtain $O(1)$ rounds after GST after which the
replicas will be synced to the same epoch. This is proven in the following lemma:

\begin{lemma}\label{lem:replica-syncing}
If $\delta = \Theta(\Delta)$, after $O(1)$ rounds after GST,
all replicas will be synced to the same epoch with $O(n)$ multicasts
using the distributed clock protocol presented in~\cref{alg:linear-clock-alg}.
\end{lemma}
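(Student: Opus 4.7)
The plan is to argue that, once messages are reliably delivered after GST, one full pass of the protocol in \cref{alg:linear-clock-alg} completes in a constant number of its internal voting rounds, each of duration $O(\Delta + \Phi)$. Since each honest replica performs only a constant number of multicasts per execution (one proposal and four rounds of votes for $S$, initial, tentative, and pre-confirmed rounds), the total multicast count is at most $5(n-f)=O(n)$, giving the multicast bound. The nontrivial part is showing that, after GST, a single pass of the protocol actually succeeds rather than timing out, so that all honest replicas commit to the same confirmed round.

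First I would pick the honest replica $i^\ast$ whose pre-confirmed epoch $C^\ast = \max_{i\ \text{honest}} C_i$ is largest at the moment of GST, and the certificate $\sigma^\ast$ it holds for $C^\ast$. Starting from the first restart of \cref{alg:linear-clock-alg} after GST, $i^\ast$ multicasts $C^\ast+1$ together with $\sigma^\ast$, and within $\Delta+\Phi$ time every honest replica has received $\sigma^\ast$ and either already had $C_i = C^\ast$ or updates $C_i \leftarrow C^\ast$. At this point no honest replica will ever again vote for a round $\le C^\ast$, because the protocol only votes on rounds strictly larger than its stored $C_i$.

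Next I would show that every honest replica picks the same target $S = C^\ast+1$. All honest replicas' proposals in this pass are of the form $C_i+1 \ge C^\ast+1$, and at least one such proposal, namely $i^\ast$'s, equals exactly $C^\ast+1$. Adversarial proposals of rounds $\le C^\ast$ are discarded because they are not greater than the updated $C_i$, while adversarial proposals of rounds $>C^\ast+1$ are irrelevant because they are larger than $C^\ast+1$. Hence the smallest round greater than $C_i=C^\ast$ that every honest replica sees proposed is precisely $C^\ast+1$, and all honest replicas multicast $\texttt{vote}(S)$ for the same $S$. After another $\Delta+\Phi$ delay, each honest replica has collected at least $n-f \ge \lfloor 2n/3\rfloor + 1$ votes for $S$ and sets its initial round to $S$; repeating this argument for the tentative, pre-confirmed, and confirmed voting rounds shows that each phase meets the $\lfloor 2n/3\rfloor$ threshold, so no timeout is triggered and every honest replica ends with confirmed round $S$.

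Finally, I would total the costs. The pass uses at most five ``inner rounds'' of $2(\Delta+\Phi)$ time, which under the convention $\delta=\Theta(\Delta)$ (and the implicit $\Phi=O(\Delta)$) is $O(1)$ rounds after GST. Each honest replica multicasts at most once in each inner round, giving $O(n)$ multicasts in total. The main obstacle is really the second step above, i.e., arguing that the adversary cannot cause different honest replicas to pick different values of $S$ by replaying old proposals with old certificates or by selectively delaying; this is precisely handled by the combination of (i) $i^\ast$'s certificate propagating within one delay, which pins everyone's $C_i$ down to the same value $C^\ast$, and (ii) the rule that honest replicas only vote for rounds strictly above their current $C_i$, which makes stale adversarial proposals harmless.
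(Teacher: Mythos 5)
Your proof is correct and follows the same basic approach as the paper's, but it is noticeably more careful about the mechanism. The paper's proof simply lets $\ell$ be ``the smallest round for which a replica sends a proposal that has not been pre-confirmed'' and asserts that all honest replicas will vote for $\ell$; it does not explain \emph{why} a replica with a small $C_i$ won't lock onto some smaller-but-unconfirmed round it hears about. You supply that missing mechanism: you identify $\ell$ concretely as $C^\ast + 1$ where $C^\ast$ is the largest pre-confirmed epoch among honest replicas at GST, and you observe that $i^\ast$'s certificate $\sigma^\ast$ for $C^\ast$ reaches every honest replica within one delay and forces every $C_i$ up to $C^\ast$, after which the ``only vote for rounds strictly above $C_i$'' rule makes $S = C^\ast + 1$ the unique choice. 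This explicitly neutralizes the adversarial strategies (replaying stale proposals, selective delay) that the paper's terse argument leaves implicit.

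One small slip, which does not affect the conclusion: you write that ``all honest replicas' proposals in this pass are of the form $C_i + 1 \ge C^\ast + 1$,'' and attribute smaller proposals only to the adversary. In fact the proposal multicast in step~1 of \cref{alg:linear-clock-alg} happens \emph{before} the certificate-update in step~2, so an honest replica $j$ with a stale $C_j < C^\ast$ will multicast $C_j + 1 \le C^\ast$ in that same pass. The fix is exactly the one you already give for adversarial proposals: once every honest replica has updated $C_i \leftarrow C^\ast$, those small proposals are below $C_i$ and are discarded when selecting $S$, regardless of whether they came from an honest or a Byzantine sender. With that wording adjusted, the argument goes through as you intend, and the round and multicast counts you give match the protocol as written.
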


\begin{proof}
After GST, let $\ell$ be the smallest round for which a replica sends a proposal that has not been
pre-confirmed. After $\Delta$ delay, all replicas receive the message with round $\ell$ attached.
By definition of $\ell$, there does not exist a proposal that has not been confirmed that is smaller than
$\ell$. After $\Delta$ delay, all replicas receive the proposal $\ell$. Since no honest replicas have pre-confirmed $\ell$,
all honest replicas will vote for $\ell$ until $\ell$ is confirmed. Because there are at most $3$ rounds of voting and one round
of proposal and during each of these communication rounds each honest replica only sends one message,
if $\delta = \Theta(\Delta)$, after $O(1)$ rounds after GST, all replicas will be synced to the same round
with multicast complexity $O(n)$.
\end{proof}

\begin{lemma}\label{lem:termination}
Suppose $\delta \approx \Delta$ and $D_0 = 1/2n$. All honest replicas will multicast $\ack$ for at least one honest proposal in $O\left(\poly \log n\right)$ rounds after GST with high probability.
\end{lemma}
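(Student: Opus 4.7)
The approach is to combine the synchronization guarantee of Lemma~\ref{lem:replica-syncing} with the probabilistic existence of a favorable epoch. First, by Lemma~\ref{lem:replica-syncing}, within $O(1)$ rounds after GST every honest replica shares the same confirmed epoch. Next, by Corollary~\ref{cor:polylog-rounds} (applied with the model assumption $T = O(3^{\log^c n})$), within $O(\poly \log n)$ further rounds there exists, with high probability, an epoch $S^*$ in which exactly one honest leader is elected and no Byzantine replica is a leader. Since $\cryptosort$ outputs are unforgeable with all but negligible probability in $\secp$, the honest leader's proposal $(\prop, S^*, b)$ is the only valid proposal any honest replica receives during $S^*$; after waiting $\delta$ time, each honest replica sees no competing proposal and proceeds to the $\ack$ step.

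The main obstacle is that the bit a replica $i$ broadcasts in its $\ack$ is its stored $b_i^*$, which need not equal the honest leader's proposed $b$. An honest replica $i$ with $F_i = 0$ simply adopts $b$ and multicasts $(\ack, S^*, b)$, but one with $F_i = 1$ retains its current $b_i^*$ and may $\ack$ a different bit. The key observation, proved by the same pigeonhole argument as Lemma~\ref{lem:consistency}, is that all honest replicas $j$ with $F_j = 1$ must share a common value $b^\dagger$: setting $F_j = 1$ requires observing $\geq \floor{2n/3}$ matching $\ack$s, and any two such sets overlap in a forever-honest replica. Consequently, whenever the epoch's honest leader happens to propose $b = b^\dagger$, every honest replica, either by keeping $b_i^* = b^\dagger$ (if $F_i = 1$) or by adopting $b$ (if $F_i = 0$), multicasts $(\ack, S^*, b)$ for the same honest proposal.

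It remains to argue the honest leader's fair coin lands on $b^\dagger$ inside the polylog window. The counting underlying Corollary~\ref{cor:polylog-rounds} in fact yields $\Theta(\log n)$ independent favorable epochs within $O(\poly \log n)$ rounds of GST, and each honest leader independently flips a fair coin for its proposed bit. Hence the probability that none of these leaders proposes $b^\dagger$ is at most $2^{-\Omega(\log n)} = n^{-\Omega(1)}$. A final union bound over the synchronization event, the favorable-epoch event, and the matching-bit event, each of which holds with high probability in $n$ and $\secp$, completes the proof. The hardest part of the argument is the $F$-consistency invariant, which determines that there is a unique locked-in bit to target; the remaining pieces are straightforward applications of results already established in this section.
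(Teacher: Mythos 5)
Your first half matches the paper exactly: synchronize via Lemma~\ref{lem:replica-syncing}, invoke Corollary~\ref{cor:polylog-rounds} (equivalently Lemma~\ref{lem:one-honest} with $T = O(3^{\log^c n})$) to reach an epoch $S^*$ with a unique honest leader and no Byzantine leaders, and observe that every honest replica then sees exactly one valid proposal, passes the $\delta$-wait without a conflict, and executes the $\ack$ step. At that point the lemma is done, and indeed that is the entirety of the paper's proof: the statement claims all honest replicas multicast an $\ack$ \emph{in response to} an honest proposal, not that they all carry the same bit. The bit-agreement concern you raise is handled separately in the paper's ``good epoch'' and ``persistence'' paragraphs that follow this lemma, so your extra machinery proves a stronger statement than what is asked.

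That extra machinery also has a gap. You assert that all honest replicas with $F_j = 1$ share a common value $b^\dagger$, ``proved by the same pigeonhole argument as Lemma~\ref{lem:consistency}.'' But Lemma~\ref{lem:consistency} only quorum-intersects two sets of $\geq \lfloor 2n/3\rfloor$ $\ack$s \emph{from the same epoch}; an honest replica may $\ack$ different bits in different epochs, so the forever-honest replica in the intersection of two cross-epoch quorums need not have $\ack$ed the same bit in both. Two replicas that set $F = 1$ in different epochs could, a priori, have locked onto different bits. Establishing the invariant you want would require an inductive argument tracking how $F_i$ and $b_i^*$ co-evolve across epochs (and would also have to contend with whatever state the adversary induced during the asynchronous period). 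Since none of this is needed for Lemma~\ref{lem:termination} as stated, the cleanest fix is simply to stop after your third sentence.
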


\begin{proof}
Let $\ell$ be the first round where (a) there exists exactly one leader and
(b) the one leader is an honest replica.
The probability that this occurs in any particular round is
$\geq \left(\frac{2n}{3}\right)\left(\frac{1}{2n}\right)\left(1-\frac{1}{2n}\right)^{n-1} \geq \left(\frac{1}{3}\right)\left(\frac{2n}{\sqrt{e}(2n-1)}\right) \geq \frac{1}{6}$.
Then, after GST, $\ell$ occurs at least once
in $O(\log n)$ rounds (after all replicas are synced)
with probability at least $\geq 1 - \left(\frac{5}{6}\right)^{c\log n} \geq 1-\frac{1}{n^c}$ for any constant $c > 0$.
Since all replicas are now synced and there is only one leader and it is honest,
all honest replicas will $\ack$ the honest proposal. Suppose $T = O\left(3^{\log^c n}\right)$
for any constant $c > 0$, by~\cref{lem:one-honest} and what we showed above,
a total of $O\left(\poly \log n\right)$ rounds are necessary for all
honest replicas to multicast $\ack$ for at least one honest proposal after GST with high probability.
\end{proof}

\paragraph{A good epoch exists in $O(\log n)$ rounds where only one honest leader proposes a bit with high probability.}
We define \emph{good epochs} similarly to the definition given in~\cite{cc19} except
for the partially synchronous model. Let $h$ be an honest leader in $\epoch{i}$. Then, given that
$h$ passes the test for sending a proposal (and since the probability of picking a bit and of sending a proposal are independent), $h$ chooses a \emph{lucky bit}
$b^*$ (uniformly at random) iff either 1) in $\epoch{i} - 1$, no honest replicas have seen $\frac{2n}{3}$ $\ack$s for its own bit; or 2) in $\epoch{i} - 1$,
some honest replicas have seen $\frac{2n}{3}$ $\ack$s for its own bit and this bit equals $b^*$.
The honest leader $h$ chooses a lucky bit $b^*$ with probability at least $1/2$.
Thus in $O(\log n)$ epochs where only one honest leader proposes a bit, a good epoch exists with $1 - \frac{1}{n^c}$ probability for all constant $c > 0$.

\paragraph{Persistence of honest choice after a good epoch.} Once we have a good epoch $\epoch{i}$
as defined above, all honest replicas will $\ack$ $b^*$ during $\epoch{i}$. Once all honest replicas have the same
$b^*_i$, then they will never set their $F_i = 0$ again. Thus, they will $\ack$ $b^*$ for all future rounds.
By induction, in all future epochs they will stick to $\ack$ing $b^*$.

\paragraph{Validity} If all honest replicas receive the same bit $b^*$ as input then no honest replica will
set $F_i = 0$ and $b^*$ will be output by all honest replicas.

Using the above, we obtain the following theorem.

\begin{theorem}
Binary Byzantine agreement can be reached in $O(\poly \log n)$ rounds after GST with high probability
and with $O(n)$ multicast complexity assuming the
asynchronous period lasts for no more than $O\left(3^{\log^c n}\right)$ rounds.
\end{theorem}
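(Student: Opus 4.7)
The plan is to chain together the lemmas already established in \cref{sec:warmup-protocol-analysis} and show that, once the network stabilizes, a short sequence of ``good events'' suffices to drive all honest replicas to the same committed bit. First I would use \cref{lem:replica-syncing} to conclude that within $O(1)$ rounds after GST all honest replicas are synchronized to a common confirmed epoch via the clock protocol of \cref{alg:linear-clock-alg}, using $O(n)$ multicasts per synchronization phase. Then I would invoke \cref{cor:polylog-rounds} (which is the point where the $T = O(3^{\log^c n})$ assumption on the asynchronous period bites) to argue that within $O(\poly\log n)$ further rounds, the synchronized epoch will be one containing exactly one honest leader and no Byzantine leader, simply because the adversary is unable to keep reserving ``corrupted-heavy'' epochs indefinitely.

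Next I would assemble the \emph{good-epoch} argument already sketched right after \cref{lem:termination}. Once we hit an honest-leader-only epoch $\epoch{i}$, the leader picks a lucky bit $b^{*}$ with probability at least $1/2$; an independent repetition over $O(\log n)$ such epochs gives a good epoch with high probability. In a good epoch every honest replica $\ack$s $b^{*}$, and by \cref{lem:consistency} no honest replica can simultaneously see $\geq \lfloor 2n/3\rfloor$ $\ack$s for the opposing bit, so every honest $j$ sets $b_j^{*} = b^{*}$. I would then argue persistence by induction on subsequent epochs: once all honest replicas agree on $b^{*}$, the majority $\ack$ in every future honest-leader epoch is for $b^{*}$, so no honest replica ever resets $F_j$ to $0$ and thus cannot be ``flipped'' to the opposing bit. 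After $K = O(1)$ additional epochs with $F_j = 1$ maintained, every honest replica commits to $b^{*}$.

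To finish, I would bound the total round count by summing: $O(1)$ for synchronization, $O(\poly\log n)$ for reaching an epoch with exactly one honest leader (\cref{lem:one-honest,cor:polylog-rounds}), $O(\log n)$ honest-leader epochs to obtain a good epoch, and $O(K) = O(1)$ more epochs to reach the commit threshold. Since each epoch uses $O(1)$ communication rounds and $O(n)$ multicasts (every honest replica sends at most a constant number of messages in the propose/$\ack$/vote steps plus the $O(1)$-round clock subprotocol), the per-round multicast cost is $O(n)$, matching the claim. Validity is immediate: if every honest input is the same bit $b$, no honest replica ever sets $F_i = 0$, so the unique output value can only be $b$.

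The main obstacle I anticipate is the persistence step: one must carefully verify that, across a later epoch whose leader happens to be Byzantine, the adversary cannot manufacture $\geq \lfloor 2n/3\rfloor$ $\ack$s for $\lnot b^{*}$ and thereby flip a so-far-honest replica's $F_j$ back to $0$. This follows from \cref{lem:consistency} combined with the $n \geq 3f+1$ bound, but it requires being explicit that an honest replica only resets $F_j$ when the threshold of $\ack$s it sees is actually reachable, which in a good-epoch regime it is not. Handling this carefully, together with the union bound over the $O(\poly\log n)$ rounds for the various high-probability events above, completes the proof.
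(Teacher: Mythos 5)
Your proposal is correct and follows essentially the same route as the paper: the paper's own proof of this theorem simply chains \cref{lem:replica-syncing}, \cref{lem:termination-time}, \cref{cor:polylog-rounds}, and \cref{lem:termination}, and then relies on the surrounding good-epoch, persistence, and validity paragraphs that you reproduce explicitly. If anything, your version spells out the good-epoch/persistence bookkeeping more carefully than the paper's two-sentence proof does.
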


\begin{proof}
By~\cref{lem:termination-time},~\cref{cor:polylog-rounds},
and~\cref{lem:replica-syncing}, we can achieve a round
with one leader who is honest after $O(\poly \log n)$ rounds
after GST with high probability.
Then, by~\cref{lem:termination},
after $O(1)$ additional communication rounds, we achieve consensus on one bit.
\end{proof}
\section{Constructions of $\cryptosort$}

Our protocols in~\cref{sec:general-consensus} and~\cref{sec:sublinear-bba} require that $\cryptosort$ be a
signing oracle that has the following properties:

\begin{enumerate}
    \item With all but negligible probability in $\secp$, no other replica $j \neq i$ can
        predict the output of $\cryptosort(\sk{i}, x)$ on the secret key for replica $i$
        and input $x$ without querying $\cryptosort$.
    \item An output from $\cryptosort$ can be verified by all other replicas
        (the output from $\cryptosort$ may include an additional proof).
    \item The output of $\cryptosort$ is a value generated uniformly at random in $[0, 1]$.
\end{enumerate}

There are several instantiations of such functions. One of these instantiations is
via verifiable random functions (VRFs) as described in~\cref{app:vrfs}.
Another instantiation is the nice real-world cryptographic
construction presented in Appendix D of~\cite{cc19}
using PRFs and adaptively-secure NIZKs.

\subsection{Verifiable Random Functions (VRFs)}\label{app:vrfs}
In this section, we provide the full, formal definition of VRFs as provided in~\cite{vrf} and then a brief description
of how to use such a function for our cryptographic sortition signature oracle $\cryptosort$.

\begin{definition}[Verifiable Random Functions (VRFs)~\cite{vrf}]\label{def:vrf}
    Let $(G, F, V)$ be a set of three polytime algorithms. The \emph{function generator}
    $G\left(1^{\secp}\right) \rightarrow (\pk{i}, \sk{i})$ outputs a
    public key/private key pair $(\pk{i}, \sk{i})$ which are two binary strings. The \emph{function evaluator}
    $F=(F_1, F_2)$ is a two-part function
    that each takes as input the secret key $\sk{i}$ and an input $x$
    and outputs a value $\vout{} \xleftarrow{F_1}(\sk{i}, x)$ and
    a proof $\proofout{} \xleftarrow{F_2}(\sk{i}, x)$.
    The \emph{function verifier} $V$ takes as input
    $\{Yes, No\} \xleftarrow V(\pk{i}, x, \vout{}, \proofout{})$
    and outputs $Yes$ or $No$.

    Let $a: \mathbb{N} \rightarrow \mathbb{N} \cup \{*\}$ and $b, s: \mathbb{N} \rightarrow \mathbb{N}$
    be any three functions such that $a(\secp)$, $b(\secp)$, $s(\secp)$ are all computable in time $poly(\secp)$
    and $a(\secp)$ and $b(\secp)$ are both bounded by a polynomial in $\secp$ (except when $a(\secp)$ takes on the value
    $*$ which means the set of all binary values).

    $(G, F, V)$ is a \emph{verifiable pseudorandom function (VRF)}
    with input length $a(\secp)$, output length $b(\secp)$, and security $s(\secp)$ where $\secp$ is the security parameter
    if the following properties hold:

    \begin{enumerate}
        \item \textbf{Domain-Range Correctness and Provability}:
            The following conditions hold for all but negligible probability in $\secp$ ($negl(\secp)$):
            \begin{enumerate}
                \item For all $x \in \{0, 1\}^{a(\secp)}, F_1(\sk{i}, x) \in \{0, 1\}^{b(\secp)}$.
                \item For all $x \in \{0, 1\}^{a(\secp)}$, if $(\vout{}, \proofout{}) \xleftarrow F(\sk{i}, x)$,
                    then $\prob[V(\pk{i}, x, \vout{}, \proofout{}) = Yes] > 1 - negl(\secp)$.
            \end{enumerate}
        \item \textbf{Unique Provability}:
            For every $\pk{i}, x, \vout{1}, \vout{2}, \proofout{1}, \proofout{2}$ such that $\vout{1} \neq \vout{2}$,
            the following holds for either $g = 1$ or $g = 2$:
            \begin{align*}
                \prob[V(\pk{i}, x, \vout{g}, \proofout{g}) = Yes] < negl(\secp).
            \end{align*}
        \item \textbf{Pseudorandomness}: Given a probabilistic polynomial time (PPT) algorithm
            $\alg = (\alg_E, \alg_T)$ that both run for at most $s(\secp)$ steps when their first inputs
            are $1^\secp$ and \emph{does not query the oracle $F(\sk{i}, \cdot)$ for $x$}, then $\alg$ succeeds
            in the following experiment with probability $1/2 + negl(\secp)$:
            \begin{enumerate}
                \item Run $G(1^\secp)$ to obtain $(\pk{i}, \sk{i})$.
                \item Run $\alg_E^{F(\sk{i}, \cdot)}(1^\secp,\pk{i}) \rightarrow (x, state)$.
                \item Choose $b \xleftarrow{R} \{0, 1\}$.
                    \begin{enumerate}
                        \item If $b = 0$, let $\vout{} \xleftarrow{F_1}(\sk{i},x)$.
                        \item If $b = 1$, let $\vout{} \xleftarrow{R} \{0, 1\}^{b(\secp)}$.
                    \end{enumerate}
                \item Run $\alg_T^{F(\sk{}, \cdot)}\left(1^\secp, \vout{}, state\right)$ to obtain $guess$.
                \item $\alg = (\alg_E, \alg_T)$ succeeds if $x \in \{0, 1\}^{a(k)}$, $guess = b$, and
                    $\alg$ did not query $F(\sk{i}, x)$.
            \end{enumerate}
    \end{enumerate}
\end{definition}

Intuitively, VRFs are functions which takes as input a secret key $\sk{i}$ and some arbitrary function input $x$
and returns an output $\vout{x}$ and a proof $\proofout{x}$. Using the input $x$, output $\vout{x}$, proof $\proofout{x}$,
and public key $\pk{i}$, anyone can verify quickly that the VRF was computed (using the player's secret key).
Furthermore, an adversary cannot guess the
output of the VRF without computing it except with negligible probability in our security parameter $\secp$. This means
that an adversary cannot guess (except with negligible probability) the output of anyone's VRF without knowing their
secret key.

There are many instantiations of verifiable random functions (some from standard assumptions) (e.g.\ \cite{Dodis02,DY05,HJ16}).
The proof sizes of many of these constructions are $O(\secp)$. Thus,
the cost in bits of sending these proofs in messages is essentially the same as that necessary
to send a signature using a standard PKI.

\section{Verifiable Delay Functions (VDFs)}\label{app:vdf}
The formal definition of VDFs is presented below.

\begin{definition}[Verifiable Delay Functions~\cite{vdf}]\label{def:vdfs}
    A VDF $V = \left(\setup, \eval, \veri\right)$ is a triple of algorithms that perform the following:
    \begin{enumerate}
        \item $\setup(\secp, \diff{}) \rightarrow \textbf{pp} = (\ek{}, \vk{})$: The $\setup$ algorithm takes
            as input a security parameter $\secp$ and a desired difficulty level $\diff{}$ and produces
            public parameters consisting of an evaluation key $\ek{}$ and a verification key $\vk{}$.
            $\setup$ is polynomial time with respect to $\secp$ and $\diff{}$ is subexponentially-sized
            in terms of $\secp$. The public parameters specify an input space $\sX$ and an output space $\sY$.
            $\sX$ is efficiently sampleable. If secret randomness is used in $\setup$, a trusted setup might
            be necessary.
        \item $\eval(\ek{}, x) \rightarrow (y, \pi)$: $\eval$ takes an input $x \in \mathcal{X}$ (in the
            sample space of inputs) and the evaluation key and produces an output $y \in \mathcal{Y}$ (in the
            sample space of outputs) and a (possibly empty) proof $\pi$. $\eval$ may use random
            bits to generate $\pi$ but not to compute $y$. $\eval$ runs in parallel time $\diff{}$
            even when given $poly(\log(\diff{}), \secp)$ processors for all $\textbf{pp}$ generated
            by $\setup(\secp, \diff{})$ and $x \in \sX$.
            \item $\veri(\vk{}, x, y, \pi) \rightarrow \{Yes, No\}$: $\veri$ is a deterministic algorithm
                that takes the verification key $\vk{}$, an input $x$, the output $y$, and proof $\pi$ and outputs
                $Yes$ or $No$ depending on whether $y$ was correctly computed from via $\eval$. $\eval$ runs in
                time $poly(\log(\diff{}),\secp)$.
    \end{enumerate}
    Furthermore, $V$ must satisfy the following properties:
    \begin{enumerate}
        \item \textbf{Correctness} A VDF $V$ is correct if for all $\secp, \diff{}$,
            parameters $(\ek{}, \vk{}) \xleftarrow{R} \setup(\secp, \diff{})$,
            and all $x \in \sX$, if $(y, \pi) \xleftarrow{R} \eval(\ek{}, x)$,
            then $\veri(\vk{}, x, y, \pi) \rightarrow Yes$.
        \item \textbf{Soundness} A VDF is sound if for all algorithms $\alg$ that run in time
            $O\left(poly(\diff{}, \secp)\right)$
            \begin{align*}
                \prob\left[
                    \begin{array}{l}
                        \veri(\vk{}, x, y, \pi) = Yes \\
                        y' \neq y
                    \end{array}
                \middle\vert
                    \begin{array}{l}
                        \mathbf{pp} = (\ek{}, \vk{}) \xleftarrow{R} \setup(\secp, \diff{})\\
                        (x, y', \pi') \xleftarrow{R} \alg\left(\secp, \mathbf{pp}, \diff{}\right)\\
                        (y, \pi) \xleftarrow{R} \eval(\ek{}, x)
                    \end{array}
                \right] \leq negl(\secp).
            \end{align*}
        \item \textbf{Sequentiality} A VDF is $(p, \sigma)$-sequential if no adversary $\alg = (\alg_0, \alg_1)$
            with a pair of randomized algorithms $\alg_0$,
            which runs in total time $O(poly(\diff{}, \secp))$, and $\alg_1$, which runs in parallel time $\sigma(t)$
            on at most $p(t)$ processors, can win the following game with probability greater than $negl(\secp)$:
            \begin{flalign*}
                &\textbf{pp} \xleftarrow{R} \setup(\secp, \diff{}) \\
                &L \xleftarrow{R} \alg_0(\secp, \textbf{pp}, \diff{}) \\
                &x \xleftarrow{R} \mathcal{X} \\
                &y_{\alg} \xleftarrow{R} \alg_1(L, \textbf{pp}, x).
            \end{flalign*}
            $\alg = (\alg_0, \alg_1)$ wins the game if $(y, \pi) \xleftarrow{R} \eval(\ek{}, x)$ and $y_{\alg} = y$.
    \end{enumerate}
\end{definition}

For the VDFs used in our constructions, we use more efficient VDFs that provide tighter time
bounds than the bounds given by the definition of VDFs above. More specifically, we consider
VDF constructions where the definition of sequentiality $p$ is for any polynomial and
$\sigma(t) = t-\epsilon t$ for some sufficiently small constant (but we even prove our protocol
secure for any constant gap between the speeds of the adversary and the honest party).

There are several well-known VDF constructions in the literature~\cite{vdf,Pietr19,Wes19}, most of which
use exponentiation but some use other methods~\cite{BDRV19,EFKP19,FMPS19}.
Because some of these constructions uses public-coin setup and public-coin
succinct arguments for proving the correctness of the output, one can use the Fiat-Shamir Heuristic
to make the proofs non-interactive; hence, the VDFs used in our protocol in~\cref{sec:general-consensus}
include such proofs as part of the output. The proof sizes of these constructions are generally $O(\poly(\secp))$
or even, $O(\log D)$ (see e.g. \cite{Pietr19,Wes19,FMPS19}).

\iflong
\section{Additional Background}\label{app:additional}

Consensus protocols have been studied since the 1980s as a method to
provide fault tolerance for information stored in
databases~\cite{dls88,pbft}. Such protocols achieve fault tolerance by
distributing data and computation across many different machines which
are physically separated from one another over a network; they aim to
achieve \emph{safety}, where participants agree on the same
committed values and each has a copy of the same history of committed values,
and \emph{liveness}, where progress is continuously being made to increase the log of
committed values.
The state machine replication (SMR) model is a type of consensus model
where many different copies of a machine, called replicas, run the
same protocol and communicate with one another via communication
channels to create a common log. Such protocols are often formulated in
models where there exist both a communication delay and
desynchronized clocks~\cite{dls88}; such models accurately
depict certain scenarios in the real-world. Throughout this paper, we refer
to obtaining consensus in the SMR model, simply, as \emph{consensus}
and consensus protocols in the SMR model as \emph{consensus
  protocols}.

\subsection{Network Models}

Consensus protocols are often studied in various models that are
designed to reflect network conditions in the real-world.

Some network models~\cite{dls88} consider cases where in addition to
message delay, replicas can have internal clocks that run at different
speeds, causing the clocks held by replicas to be
desynchronized. Usually, we are concerned with the case where network
clocks are desynchronized by at most $\Phi$, where all processors take
at least one step in any size $\Phi$ block of time. Although there are
standard techniques for synchronizing the clocks held by
processors~\cite{dls88},
in cases of unknown message delay, such techniques require $\Omega(n^2)$ message complexity (or $\Omega(n)$ multicast complexity)
which is too large for the settings we consider in this paper. As is the case with more
recent research on communication-efficient protocols secure against adaptive adversaries~\cite{cc19,CPS19,algorand},
we will only consider communication delay/asynchrony in this paper without considering processor delay
or processor asynchrony as real-world networks often exhibit synchronized global clocks.
In the case of our impossibility results, since we consider a stronger communication model,
our results also hold for the case when processors exhibit different speeds bounded by $\Phi$.
In the future, we intend to extend our protocols to also account for the case when processor
speeds also differ by some bounded $\Phi$.

Abstracting the conditions of the real-world gives us three main network models:

\paragraph{Synchronous} The synchronous model is a model where \emph{all messages}
sent by honest nodes are expected to arrive (at honest nodes) after $\Delta$ delay.

\paragraph{Asynchronous} The asynchronous model is a model where the message
delay is unbounded. This means that messages can be dropped at any point in the protocol.

\paragraph{Partially Synchronous} The partially synchronous model is a model where
there exists an unknown (but bounded) period of time during which the network is asynchronous.
After Global Stabilization Time (GST), the network becomes synchronous with network message delay
$\Delta$.

Although there are other network models, we do not consider them in this paper and, thus
do not expand upon them here.  In \cref{sec:sublinear-bba} we introduce a new
network model in which the adversary can drop messages with some fixed probability $p$.

\subsection{Adversarial Strategies and Advantages}\label{sec:adv-advantages}

An adaptive adversary can perform a number of actions to maximize
their advantage over the honest replicas which correctly follow the
prescribed protocol.
For example, if the leader election schedule is predictable, the
adversary can immediately corrupt the leader before
they are elected, reducing chain quality or censoring proposals.
In committee election systems,
\footnote{
A \emph{committee election system}  is one where voting and block proposals are performed only by members of  a committee which is elected uniformly at random.}
the committee is typically much smaller than the tolerated number of
faults.  Once the adversary sees the committee they can immediately
corrupt the entire committee. Previous work solved this problem via
the \emph{memory-erasure} model~\cite{algorand}. However, erasure is hard
to perform in real-life systems, and furthermore, without a
proof-of-erasure it is not possible to check whether such an erasure
was actually performed. Thus, it is ideal to remove such an assumption
when protecting against this form of attack.  We call this type of
attack \emph{key reuse}:

\paragraph{Key Reuse}
Without the memory-erasure model, an adversary can sign multiple messages as the replica once it is corrupted.
In the adaptive adversary setting with leader election, the adversary
can wait to see who sends messages as the leader and then immediately
corrupt that replica to send out many different proposals, splitting
votes and preventing consensus. In protocols which elect committees,
the adversary can similarly wait to see who sends messages as a
committee member and then immediately corrupt those replicas to vote
for multiple proposals, violating safety.

One way to address this is vote-specific eligibility---election is
tied to the proposed value, so even a corrupted replica cannot vote
for two values.  In these protocols, the adversary can use
computational power to get unfairly elected to the committee.  In the
case of public ledgers, honest replicas will only propose proposals
sent to them by clients. This means that honest replicas, which gossip
only the proposals they receive, do not have a disproportionate
chance to become part of the committee.  However, the adversary can
create and try many different arbitrary proposals, for example spending coins back to itself, to increase the
chances that Byzantine replicas are elected to committees. We formally
define this process as \emph{vote grinding}:

\paragraph{Vote Grinding} In previous work~\cite{cc19,CPS19}, the election of a replica to be leader
or to a voting committee
is dependent probabilitistically on the specific proposal/vote proposed by the replica (chosen with uniform probability over all
proposals).
An adversary can choose to try many proposals
in parallel to attempt to obtain a large number of Byzantine replicas in the committee. Even though the probability
of a replica being elected to a committee is small, the chances of a set of Byzantine replicas
of sufficient size being part of the
committee approaches $1$ as the number of proposals tried increases.
Since the proposed adversarial proposals
are not tied to real proposals requested by clients, the adversary can try a much larger set of proposals than
the total number of proposals requested by clients in the network. We call this form of attack \emph{vote grinding}.

In our paper, we protect against both attacks above without the use of the memory-erasure model in the synchronous model.
We hope that such techniques could be applied to the partially-synchronous setting as studied in previous works~\cite{algorand}.
However, one must limit the number of rounds the asynchronous phase can last due to the following possible attack:

\paragraph{Fast-forwarding} In protocols with leader and committee election (assuming committee size is $3\lambda + 1$ for some $\lambda$), there is some execution where, given an adversary that has corrupted $f$ replicas, the leader chosen will happen to be one of the $f$ or the committee chosen will consist of too many adversarial members.
The adversary can try to set the clock to a round number where a long sequence of proposers and
committee members are all adversarial (i.e. they are dominated by the $f$ nodes currently controlled by the
adversary).
The probability that any $X$ consecutive rounds has $>\frac{\lambda}{3}$ adversaries in the committee from
a given set of $f$ adversarial nodes is $\geq \left(\frac{f}{n}\right)^{\frac{(\lambda + 3)X}{3}} > 0$
when $X$ is finite (recall that the committees consist of $3\lambda + 1$ members now instead of $3f + 1$). Thus, by the probabilistic
method, such a set of $X$ consecutive rounds is guaranteed to hold for some finite consecutive round numbers.
Because such a long sequence is unlikely to occur in polylogarithmic number of rounds,
in the synchronous model we can still achieve consensus with high probability.
However, in the partially synchronous model, the adversary can drop or
delay messages arbitrarily during the asynchronous period until GST.
If we assume that the adversary has computing power bounded by a large enough polynomial,
we should also assume they are able to
find this consecutive sequence.
During a long enough asynchronous period the adversary
can drop messages and force repeated elections until a leader and committee
are elected where it has this specific advantage.  At this point the
adversary can propose and vote for multiple proposals during multiple consecutive rounds,
eventually violating safety.

\paragraph{Partitioning committees} During the asynchronous parts of the partially synchronous
model, the adversary has complete control of the network, e.g.\ when messages are sent and received. This means
that it is possible that some nodes might believe that other nodes have agreed on a bit while others
believe they have agreed on a different bit. In fact, we show that is it impossible to know when
all nodes have agreed on the same bit in BBA using sublinear multicasts in this model in~\cref{sec:impossible-bba}.

\fi
\end{document}